\documentclass{article}
\usepackage{graphicx}
\usepackage{amsmath, amsthm, amsfonts}
\usepackage{thmtools}
\usepackage{algorithm}
\usepackage{algpseudocode}
\usepackage{hyperref}
\usepackage{fullpage}
\usepackage[]{color-edits}
\usepackage{enumitem}
\usepackage[numbers]{natbib}

\addauthor{AK}{blue}

\addauthor{TE}{red}

\newcommand{\STA}[1]{\texttt{STA}_{#1}}
\newcommand{\ALG}{\texttt{ALG}}
\newcommand{\OPT}{\texttt{OPT}}
\newcommand{\ind}[1]{{1}_{#1}}
\newcommand{\T}{\mathbb{T}}

\newcommand{\E}{\mathbb{E}}

\newtheorem{definition}{Definition}
\newtheorem{claim}{Claim}
\newtheorem{lemma}{Lemma}

\newtheorem{example}{Example}
\newtheorem{remark}{Remark}

\title{Identity-Truthful Online Decision-Making}
\author{Tomer Ezra\thanks{Tel Aviv University, Tel Aviv, Israel. Email: \texttt{tomerezra@tauex.tau.ac.il}} \and Adar Kantor\thanks{Tel Aviv University, Tel Aviv, Israel. Email: \texttt{adar.kantor@gmail.com}}}
\date{}

\begin{document}

\maketitle

\begin{abstract}
In Bayesian online settings, a decision-maker needs to select a reward among a sequence of stochastic rewards. Upon the arrival of each reward, the decision-maker decides immediately and irrevocably whether to accept the current reward or discard it and continue to the next reward. The rewards are distributed according to known distributions, which are also referred to as their \textit{identities}. While classical prophet inequalities compare online algorithms to the offline optimum, a recent line of work studies information gaps by comparing algorithms with restricted information to the optimal online algorithm, which knows the full arrival order of the identities of the rewards. In this paper, we introduce and study a new restriction motivated by identity-independent decision-making: the decision-maker observes the current realized value, but learns the identity of the current reward only after deciding whether to accept it. We call algorithms satisfying this restriction \emph{identity-truthful}.

We define the \emph{identity-truthfulness gap} as the optimal worst-case approximation achievable by identity-truthful algorithms relative to the optimal online benchmark. 
This gap lies between the order-competitive ratio 
\citep{ezra:2023-whoisnext} and the identity-blindness gap 
\citep{ezra:2024-behindtheveil}.
We design an algorithm establishing that the identity-truthfulness gap is strictly greater than $0.5$, thereby separating it from the identity-blindness gap, which is exactly $0.5$.
We complement this result with an upper bound of $0.81$ on the identity-truthfulness gap, which is strictly lower than the known upper bound of $0.829$ on the order-competitive ratio \citep{chen:2024-settingtargets}.

Finally, we study the relationship between identity-truthful algorithms and pricing mechanisms. Motivated by questions on the power of posted prices in prophet inequalities \citep{dutting:2020}, we show that identity-truthfulness breaks the usual connection between online algorithms and pricing: there exists an instance for which the optimal identity-truthful algorithm cannot be implemented by any pricing mechanism.
\end{abstract}

\section{Introduction}\label{sec:intro}
We consider an optimal stopping problem under Bayesian information, where a decision-maker faces a sequence of $n$ stochastic rewards and must select one of them in an online fashion.
Traditionally, this problem has been analyzed through the prophet inequality framework, introduced by \citet{krengel:1977,krengel:1978} and \citet{ester:1984}. \citet{ester:1984} demonstrated that a static, anonymous, single-threshold algorithm achieves a $\frac12$-competitive ratio against the optimum offline benchmark, and that no online strategy can guarantee a better approximation.
One reason prophet inequalities have attracted significant interest is that they translate directly into pricing mechanisms with strong strategyproofness guarantees.
Specifically, \citet{chawla:2010} showed that applying a prophet inequality algorithm to the distribution of virtual values yields a (personalized) pricing mechanism that maintains the same approximation guarantee for single-parameter settings.
Moreover, \citet{correa:2019} showed that the converse also holds: finding an optimal dynamic mechanism is equivalent to solving the corresponding prophet inequality problem.

This connection to mechanism design motivated \citet{dutting:2020} to ask whether truthfulness in general, extending beyond single-parameter environments, is without loss of generality.
Subsequently, \citet{banihashem:2024} answered this question by demonstrating that every prophet inequality algorithm can be transformed into a personalized dynamic bundle pricing mechanism with matching performance guarantees.
However, truthfulness has multiple dimensions.
For instance, the personalized pricing mechanisms that they employ are not truthful with respect to revealing customer identities.
In this paper, we focus on mechanisms that satisfy this specific notion of identity independence, which we term \textit{identity-truthful}.

Our notion of identity-truthfulness connects to a growing recent literature on optimal stopping with Bayesian information, which evaluates performance relative to the optimum online benchmark rather than the offline prophet.
Comparing to the optimum online benchmark offers several advantages:
it provides a more realistic benchmark for comparison;
it serves as a tool to design novel algorithms with tighter guarantees;
it reveals separations in performance between different algorithmic models;
and it measures the impact of varying information structures available to the decision-maker.
For example, the order-competitive ratio introduced by \citet{ezra:2023-whoisnext} revealed that adaptive algorithms are necessary to obtain tight approximations against the optimum online benchmark, contrasting with the prophet benchmark for which \citet{ester:1984} proved that a single static threshold suffices.
The order-competitive ratio evaluates how well an order-unaware algorithm, that does not know the future arrival order, performs against an optimum online benchmark that knows the arrival order from the outset.
They proved that a deterministic algorithm guarantees a $\frac{1}{\varphi}$ fraction of the optimum online value, and that this bound is tight (for deterministic algorithms).
Subsequently, \citet{chen:2024-settingtargets} presented a simpler algorithm that achieves the same $\frac{1}{\varphi}$ ratio, and an improved randomized online algorithm, while establishing an upper bound of $0.829$ for all (randomized) algorithms. 
Conversely, \citet{ezra:2024-behindtheveil} considered the identity-blindness gap, where the decision-maker knows the prior distributions of all rewards but cannot see the identity of any arriving item, thereby keeping the arrival order completely hidden throughout the decision-making process.
They showed that the identity-blindness gap against the optimum online algorithm is exactly $\frac{1}{2}$, demonstrating that hiding identities completely eliminates the performance advantages of comparing to the optimum online benchmark.

In this paper, we study a distinct information gap by analyzing the performance of online algorithms that can access the identities of previously observed rewards and the value of the current reward, but must make an accept-or-reject decision without knowing the current reward's identity.
We define this class of algorithms as \textit{identity-truthful} algorithms. This informational regime sits directly between the identity-blindness gap and the order-competitive ratio.
Because an identity-truthful algorithm cannot access the identity of current or future boxes, it is inherently order-unaware.
Yet, unlike completely identity-blind strategies, it can utilize historical identity data to inform future decisions.
Notably, knowing the identity of the arriving box is at the core of the algorithms that \citet{ezra:2023-whoisnext} and \citet{chen:2024-settingtargets} designed\footnote{In Appendix~\ref{apx:order-unaware}, we show that the immediate adaptations of these algorithms (which exclude the identity of the arriving box) fail to yield better than $0.5$ approximation compared to the optimum online algorithm.}. 
Thus, the resulting identity-truthfulness gap must lie between $\frac{1}{2}$ and $\frac{1}{\varphi}$ for deterministic algorithms and between $\frac{1}{2}$ and $0.829$ for randomized algorithms.

Conceptually, the identity-truthfulness gap introduces a measure of fairness or non-discrimination, ensuring that the decision-maker evaluates each remaining option based solely on its realized value rather than its identity. 
For example, a retail store might update its baseline prices before each customer arrives and adjust them dynamically after a customer departs, yet the store cannot personalize the price based on the specific identity of the current customer upon arrival. 
Our work formalizes and analyzes this identity-truthful gap. We now present a numerical example that illustrates the identity-truthfulness gap.

\begin{example}[identity-truthfulness gap]
    Consider an instance with two boxes. Box A has a reward of $1$ with probability $\frac{1}{2}$ and $0$ otherwise, and Box B has a reward of $\frac{1}{\epsilon}$ with probability $\epsilon$ and $1$ otherwise.
        
    Box B has expectation $2-\epsilon$ and its value is always greater than or equal to Box A across all realizations. Therefore, the optimum online algorithm, which knows the arrival time of this box, always chooses it. The expectation of this algorithm approaches $2$ (as $\epsilon$ goes to $0$).
    On the other hand, an identity-truthful algorithm, upon observing a value of $1$, cannot ascertain from which box it was drawn. It is always optimal to accept $\frac{1}{\epsilon}$ and reject $0$, but it is not clear what to do upon observing $1$ on the first arriving box.
    
    If the algorithm accepts a value of $1$ on the first box, then under the worst-case order (where Box A arrives first), the algorithm achieves in expectation $\frac{1}{2}\cdot1+\frac{1}{2}\cdot(2-\epsilon)\to\frac{3}{2}$. On the other hand, if the algorithm rejects a value of $1$ on the first box, then under the worst-case order (where Box B arrives first), the algorithm achieves in expectation $\epsilon\cdot\frac{1}{\epsilon}+(1-\epsilon)\cdot\frac{1}{2}\to\frac{3}{2}$.
    Therefore, the identity-truthfulness gap for this instance (for deterministic algorithms) is $\frac{3}{4}$.
\end{example}

\subsection{Our Contributions \& Techniques}

Our main result in this paper is a separation between the identity-truthfulness gap and the identity-blindness gap (which is known to be $\frac{1}{2}$ \citep{ezra:2024-behindtheveil}). 

\begin{restatable}{theorem}{maintheorem}\label{thm:Beating 0.5}
    The identity-truthfulness gap is strictly greater than $\frac12$. 
\end{restatable}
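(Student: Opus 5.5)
The plan is to prove Theorem~\ref{thm:Beating 0.5} by exhibiting an explicit identity-truthful algorithm that randomizes between two simple sub-algorithms. We then show that any instance on which one of them is close to $\frac12$ of the optimum online value $\OPT$ must have structure that the other one exploits. Throughout, $\OPT$ denotes the optimum online algorithm that knows the full arrival order. We use only the trivial facts $\OPT\le \E[\max_i X_i]$ and $\OPT\ge \max_i \E[X_i]$, together with the fact that a static threshold is trivially identity-truthful.

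\emph{Sub-algorithm 1 (static threshold).} Choose $\tau$ with $q:=\Pr[\max_i X_i<\tau]=\frac12$ (randomizing at atoms), and accept the first value that is at least $\tau$. The standard prophet analysis gives
\[
\ALG_1\ \ge\ (1-q)\,\tau + q\sum_i \E[(X_i-\tau)^+] .
\]
Using $\OPT\le \E[\max_i X_i]\le \tau+\sum_i\E[(X_i-\tau)^+]$, this yields $\ALG_1\ge \OPT/2$. The first step is to make this inequality quantitatively robust: $\ALG_1\ge(\frac12+\delta)\OPT$ unless three conditions hold approximately.
\begin{enumerate}[label=(\roman*)]
\item $\OPT$ is within $(1-\epsilon)$ of $\tau+\sum_i\E[(X_i-\tau)^+]$, so essentially no value is lost below $\tau$, and two exceedances of $\tau$ essentially never occur.
\item The sum $\sum_i\E[(X_i-\tau)^+]$ is comparable to $\tau$, since otherwise the lower bound on $\ALG_1$ beats $\frac12$ through the unequal weighting.
\item Conversely, $\tau$ carries a constant fraction of $\OPT$.
\end{enumerate}
In this tight regime, $\OPT\approx \E[\max_i X_i]\approx 2\tau$, and the only way an online-but-order-aware algorithm extracts $\approx 2\tau$ is by adapting to \emph{which} boxes remain, as in the classic $\{1,\ \frac1\epsilon\text{ w.p. }\epsilon\}$ instance.

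\emph{Sub-algorithm 2 (identity-adaptive threshold).} After each rejection the algorithm learns the identity of the rejected box, so at step $t$ it knows the \emph{set} $R_t$ of remaining boxes but not their order. Its threshold is $\theta_t:=\max_{j\in R_t}\OPT(R_t\setminus\{j\})$, computed against the worst order. It accepts the current value $v$ iff $v\ge\theta_t$. This threshold depends only on past identities, so the algorithm is identity-truthful. The intended lemma is that whenever (i)--(iii) hold up to $\epsilon$, $\ALG_2\ge(\frac12+\delta')\OPT$. The argument splits on whether some single box $i^\ast$ accounts for most of $\E[(X_i-\tau)^+]$.
\begin{itemize}
\item If such a box exists, then with constant probability $i^\ast$ has already been revealed as rejected, or is the box whose absence lowers $\theta_t$. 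In either case Sub-algorithm~2 can safely take a value of order $\tau$ from the remaining boxes.
\item If no such box exists, the tail mass is spread out. Then $\theta_t$ stays near the level at which $\OPT$ itself stops, and a direct comparison with $\OPT$'s stopping rule along the same realization gives the gain.
\end{itemize}

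Finally, we run Sub-algorithm~1 with probability $p$ and Sub-algorithm~2 otherwise. On instances outside the tight regime, Sub-algorithm~1 contributes a $\frac12+\delta$ guarantee, while Sub-algorithm~2 still guarantees at least a constant fraction of $\OPT$; one checks it gets at least $\max_i\E[X_i]$ in the first step up to a constant. On instances inside the tight regime, the lemma applies. Choosing $p$ and $\epsilon$ small but fixed gives a ratio $\frac12+c$ for an explicit constant $c>0$.

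The main obstacle is the lemma for Sub-algorithm~2, namely turning the approximate equalities (i)--(iii) into a usable structural statement about the instance, and then lower-bounding a worst-case-over-orders adaptive threshold rule. If the $\max_{j}$ threshold proves too conservative, the first fallback is to replace it by a randomized choice between $\max_j$ and $\min_j$ of $\OPT(R_t\setminus\{j\})$. This mirrors the $\frac34$ analysis in the two-box example, where each of the two deterministic choices is good for exactly one order.
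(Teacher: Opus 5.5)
\textbf{The proposal has a genuine gap, and the intended lemma behind it is false.} The whole argument rests on the ``intended lemma'' that Sub-algorithm~2 beats $\frac12$ whenever Sub-algorithm~1 is nearly tight, and your two bullet cases are heuristics, not proofs. The lemma also fails for your rule. Consider the following instance, with $\eta,\gamma\to0$, $M\to\infty$ and $k$ large:
\begin{itemize}
\item box $A$ equals $1+\eta$ w.p.\ $1-\gamma$ and $1$ w.p.\ $\gamma$;
\item free-reward boxes $F_1,\ldots,F_k$ each equal $M$ w.p.\ $\frac{1}{kM}$ and $0$ otherwise;
\item $D\equiv 1$ is deterministic;
\item the arrival order is $A,F_1,\ldots,F_{k-2},D,F_{k-1},F_k$.
\end{itemize}

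Then $\E[\OPT]\approx 2-\frac2k$: $\OPT$ skips $A$ (its continuation is $\approx 2$), collects the free rewards, and stops at $D$.

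The median of $\max_i X_i$ is the atom $1+\eta$. So Sub-algorithm~1 takes $A$ w.p.\ $\approx\frac12$, and otherwise collects only the free rewards (since $D<\tau$), for a total of about $1$.

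For Sub-algorithm~2, first read ``worst order'' as the $\OPT$-minimizing one. Removing $D$ or any $F_i$ leaves a set whose worst order puts $A$ first, with value $\E[A]$; removing $A$ leaves value $1$. Hence $\theta_1=\E[A]<1+\eta$, so $A$ is accepted w.p.\ $1-\gamma$ and the algorithm earns about $1+\gamma$. Your $\min_j$ fallback has $\theta_1=1$ and does no better.

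Now read ``worst order'' as the $\OPT$-maximizing one. Then $\theta_1\approx 2$ rejects $A$. But at $D$'s step, removing $F_k$ leaves $\{D,F_{k-1}\}$, which is worth $\approx 1+\frac1k$ in the order $F_{k-1},D$. So $D$ is rejected, and the algorithm again earns about $1$.

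Under either reading, every mixture of your two sub-algorithms therefore has ratio tending to $\frac12$. The root problem is that a continuation value computed without the order of the remaining boxes cannot tell whether the free rewards are still ahead. That is exactly what the order-aware $\OPT$ exploits.

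Your robustness step is also off. At $q=\frac12$ the two terms carry equal weight, so nothing forces $\sum_i\E[(X_i-\tau)^+]$ to be comparable to $\tau$. A single deterministic box already makes Sub-algorithm~1 exactly $\frac12$-tight with $\sum_i\E[(X_i-\tau)^+]=0$. So (ii)--(iii) and ``$\OPT\approx 2\tau$'' do not characterize the tight regime.

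\textbf{The missing idea is to ignore moderate values on purpose, and to use past identities only to track how much prophet value remains.} This is what the paper does.
\begin{itemize}
\item It uses the concave function $\T$ with $\int_0^1\T=\frac12$ to show the following: if no quantile threshold beats $\frac12+\delta^2$ of the prophet, then $Q(\delta)\approx\frac12$, $x_{\ge 1}\approx\frac12$, and the mass in between is $O(\delta)$ (Lemma~\ref{lem:CDF Lemma} and Claims~\ref{cl:Q Upper Bound}--\ref{cl:Mid-Range Upper Bound}).
\item From this it derives $\E[\OPT]\le\frac{x_{\ge1}-x_{\ge i}}{1-\delta}+\max\{Q(\delta),z_{\ge i}\}+O(\delta)$ for every $i$.
\item Its algorithm keeps the threshold $Q(1-\delta)$ until the residual prophet $z_{\ge i}$, which is computable from past identities, falls below $\frac{1+\alpha}{2}$. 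It then takes the better of $\frac12 z_{\ge i}$ and keeping $Q(1-\delta)$.
\item When no box has mean above $\alpha$, $z_{\ge i}$ cannot jump over the window in which this choice beats the bound on $\OPT$.
\item A single major box is handled separately: re-evaluate as above once it has passed, or switch to $\sqrt{\alpha\beta}$ once the other boxes' residual mass drops below $\beta$.
\end{itemize}
On the instance above, this rule essentially skips $A$ and still accepts $D$ after the switch.
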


Our analysis is partitioned into three cases. We first detect ``easy'' instances that admit good guarantees through single-threshold algorithms, and instances that require dedicated algorithms (which we term ``threshold-hard'' instances).
We further distinguish between threshold-hard instances that contain a box with expectation at least $\alpha$-fraction of the prophet's expectation. We term such a box (if it exists) a major-box.

Our analysis relies on the following properties:  
If an instance is threshold-hard then it admits a ``deterministic \& free-reward'' decomposition, meaning the prophet's distribution is close (by some metric) to a reward of $1/\epsilon$ with probability $\epsilon$, and 1 otherwise (after normalization).
Moreover, we can define the residual prophet and the residual free-reward at each step, which play a key role in the design of our algorithm. 

Our algorithm can be described as follows: If the instance is not threshold-hard, then we use a single-threshold algorithm that obtains an approximation better than $0.5$ (with respect to the prophet, and thus also with respect to the optimum online).

Else, if the instance has no major-box (in addition to being threshold-hard), our algorithm starts with a high threshold to accept only free-rewards, and once the residual prophet is low enough, the threshold is reevaluated as the better of two thresholds: half of the residual prophet, and keeping the high threshold (obtaining the residual free-reward). Since in this case the instance has no major-box, we show that the residual prophet decreases with sufficient granularity to guarantee a good reevaluation step to obtain an approximation better than $0.5$. This is illustrated in Figure~\ref{fig:XZ-plane}.

\begin{figure}[ht]
    \centering
    \includegraphics[width=0.85\textwidth]{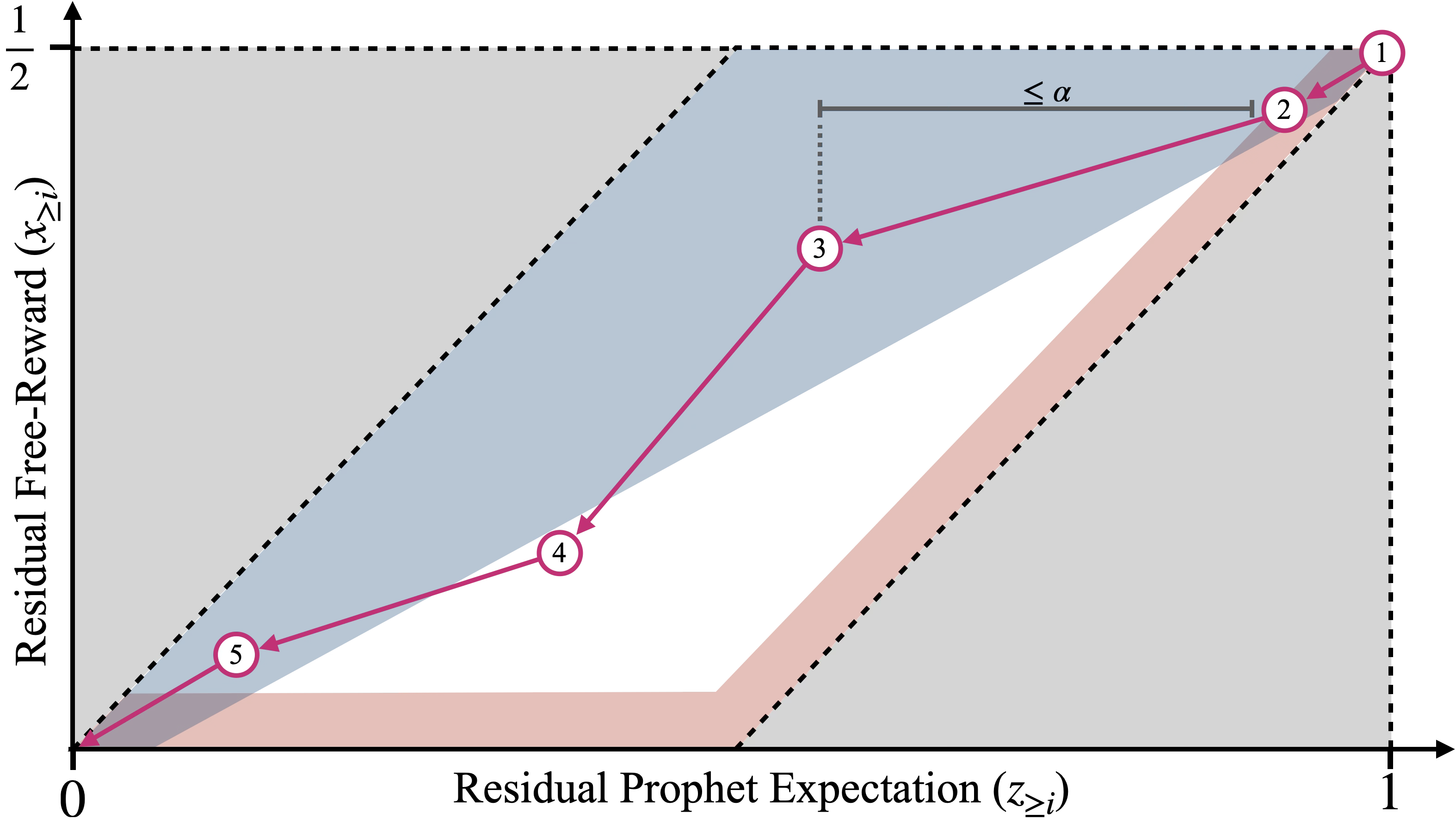}
    \caption[Beating $0.5$ when there is no Major-Box]{Beating $0.5$ when there is no Major-Box. Each step corresponds to a numbered point at $(z_{\ge i},x_{\ge i})$, where $z_{\ge i}$ (resp. $x_{\ge i}$) is the normalized residual prophet (resp. residual free-reward) from step $i$ onward (including). 
    The gray region is infeasible. In our
    analysis, we show the following properties: (1) If the reevaluation step occurs outside the blue area, then our algorithm has a better than $\frac{1}{2}$ guarantee of the prophet's expectation.
    (2) Outside the red area, the online optimum is bounded away from the prophet's expectation. 
    (3) Between each step, $z_{\ge i}$ decreases by at most $\alpha$ (otherwise, this box is a major-box). With these properties, we show that there must be a step whose point lies outside one of the blue and red areas in which our algorithm does the reevaluation step. This allows the algorithm to obtain an approximation ratio better than $0.5$ compared to the optimum online.}
    \label{fig:XZ-plane}
\end{figure}

Else, the instance is threshold-hard and it contains a major-box. Our algorithm uses the same high threshold (that captures the free-rewards), until one of two events occurs. Either the major-box has passed, or the residual prophet excluding the major-box is low enough. 
If the major-box has passed, the algorithm reevaluates the threshold as in the previously described case (to the better of half of the residual prophet or keeping the high threshold). 
If the residual prophet excluding the major-box is low, then the algorithm lowers its threshold to a carefully chosen threshold that allows obtaining a high fraction of the major-box's expectation (not just its free-reward component), while balancing the probability of accepting one of the other boxes. 

We use a parameterized algorithm to analyze the cases and their worst-case guarantees. We then optimize the parameters to obtain a better than $0.5$ approximation to all cases simultaneously against the optimum online. 
These cases from our analysis are presented in Figure~\ref{fig:decision tree}.

\begin{figure}[ht]
    \centering
    \includegraphics[width=0.85\textwidth]{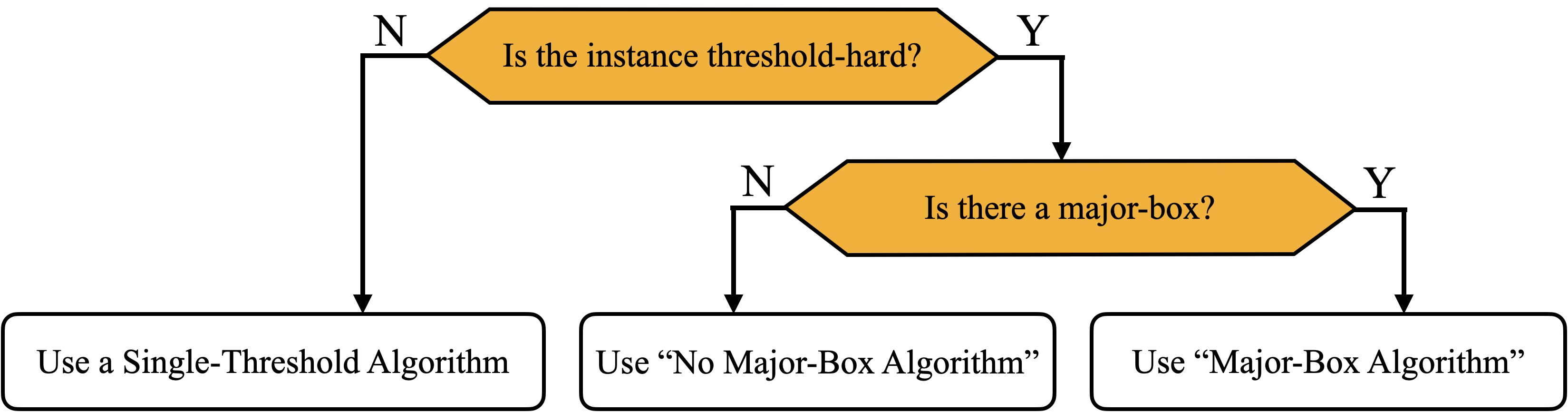}
    \caption[Decision Tree Diagram]{Decision Tree Diagram. ``No Major-Box Algorithm'' refers to Algorithm~\ref{alg:No Major Box Algorithm} and ``Major-Box Algorithm'' refers to Algorithm~\ref{alg:With Major Box Algorithm}}.
    \label{fig:decision tree}
\end{figure}

We complement this algorithmic result by showing an upper bound on the identity-truthfulness gap, that is strictly smaller than the currently known upper bound on the order-competitive ratio (which is $0.829$ \citep{chen:2024-settingtargets}). 
\begin{restatable}{theorem}{thirdtheorem}\label{thm:Hardness}
The identity-truthfulness gap is at most $0.81$, even for randomized algorithms.
\end{restatable}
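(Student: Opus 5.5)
We construct a single hard instance and apply Yao's principle: it suffices to give a distribution over arrival orders of a fixed set of boxes such that every deterministic identity-truthful algorithm earns at most $0.81$ times the expected value of the optimum online algorithm, which sees the order in advance. We build on the two-box example in the introduction. There, the information obstruction is that a value of $1$ is ambiguous between a ``deterministic-like'' box $B$ (free reward $1/\epsilon$ with probability $\epsilon$, else $1$) and a ``risky'' box $A$. The example already forces $3/4$ against deterministic algorithms under a worst-case order. With randomization and a random order, however, the algorithm can mix accept/reject at value $1$. So we enrich the instance, keeping $\epsilon\to 0$ throughout so the free rewards contribute exactly one unit in expectation per $B$-type box.

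\textbf{Instance.} We use a small family of boxes. Box $B$ gives $1/\epsilon$ with probability $\epsilon$ and $1$ otherwise. Box $A$ gives $1$ with probability $p$ and $0$ otherwise. A third box $C$ gives $v$ with probability $q$ and $0$ otherwise, with $v>1$. The order is uniform over a few orders, for instance $(A,B,C)$ and $(B,A,C)$, or their analogues with $C$ moved. The parameters $p,q,v$ and the order weights are left free, to be optimized numerically at the end.

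\textbf{Benchmark.} For each order we compute the optimum online value by backward induction. This is a short computation, since all but one box is essentially two-point.

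\textbf{Algorithm side.} We reduce an arbitrary randomized identity-truthful algorithm to a few parameters. Accepting $1/\epsilon$ and rejecting $0$ is optimal and can be assumed. Any value that is unambiguous, such as $v$, or $1$ at a point where the identity history already determines the current box, is handled optimally by backward induction. What remains is a handful of acceptance probabilities $x_1,x_2,\dots\in[0,1]$: the probability of accepting an observed $1$ at each information set where the current box's identity is ambiguous. Conditioned on the history, the identity of the current box is random under the prior over orders. The algorithm's expected value is then multilinear in the $x_j$. Its ratio to the benchmark is maximized at vertices, or at an interior point when the objective is a max of linear functions, so the best response is a small linear program for each parameter choice.

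\textbf{Main obstacle.} The hardest step is choosing parameters so that the max over $x$ of $\ALG/\OPT$ falls to $0.81$, and certifying this rigorously. The two-box instance alone with uniform random order gives a ratio above $0.81$, so the extra box $C$ is needed. Its role is to penalize both hedges. Waiting after rejecting $1$ risks getting only $C$'s smaller expected value. Accepting a $1$ early forfeits $B$, which the optimum takes when $B$ comes last or second.

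We first tune the order weights and $p,q,v$ by numerical search to find a configuration with value $\le 0.81$. We then give exact rational parameters and verify the finitely many vertex inequalities by hand, allowing an additive slack of $O(\epsilon)$ that vanishes as $\epsilon\to0$. If three boxes prove insufficient, the fallback is to append more copies of $A$-type boxes with geometrically varying $p$, which increases the number of ambiguous information sets while keeping the same structure of analysis.
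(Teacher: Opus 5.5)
Your proposal has a genuine gap. It never commits to an instance or certifies a value, and the concrete three-box family you propose cannot reach $0.81$ for any choice of $p,q,v$ or order weights.

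With $v>1$, the value $1$ is ambiguous only while both $A$ and $B$ are unseen: at step $1$, or at step $2$ after $C$. At every other nontrivial decision the history and the observed value determine the current box, so the algorithm can act as the order-aware optimum. The algorithm's problem therefore decouples into two parts.
\begin{itemize}
\item \textbf{Orders starting with $C$.} After handling $v$ optimally you are back in the introduction's two-box instance. Accepting a $1$ with probability $1-p$ gives ratio at least $\frac12(2-p(1-p))\ge\frac78$.
\item \textbf{Orders starting with $A$ or $B$.} Let $x$ be the probability of accepting a $1$ at step $1$, and let $\epsilon\to0$.
\begin{itemize}
\item If $qv>1$, the optimum never takes a step-$1$ value of $1$ on these orders, and $x=0$ matches it exactly.
\item If $qv\le1$, a direct computation shows that each of the four ratios is at least $\min\{1-\frac23px,\ \frac12(1+p+x(1-p))\}$; the factor $\frac23$ comes from order $ACB$ when $v>2$. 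Choosing $x=\frac{3(1-p)}{3+p}$ then guarantees $1-\frac{2p(1-p)}{3+p}\ge 8\sqrt3-13\approx0.856$.
\end{itemize}
\end{itemize}
A single randomized algorithm achieves this against every order, so Yao's principle with any distribution over orders cannot certify anything smaller. The entire content of the theorem is thus pushed into your unspecified fallback.

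The missing idea is that you need a long run of ambiguous steps, not a handful. The paper uses $N$ Bernoulli boxes (value $1$ w.p.\ $\frac1N$, else $0$) plus one large box ($N$ w.p.\ $\frac1N$, else $1$). This gives $\OPT=2-\frac1N$, and the adversary only chooses the position $k$ of the large box.

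Any randomized identity-truthful algorithm is then summarized by two sequences. Let $a_i$ be the probability of reaching step $i$ having seen only Bernoulli boxes, and let $b_i=a_ip_i\le a_i$, where $p_i$ is the probability of accepting a $1$ at step $i$ given that event. These satisfy $a_{i+1}=a_i-\frac1N b_i$. The value under position $k$ is $1+a_k[1-\frac1N-(1-\frac1N)^{N+2-k}]+b_k(1-\frac1N)^{N+2-k}$. This is linear in $(a,b)$, so the best worst-case guarantee is an LP, which handles randomization directly without Yao.

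Its value is about $0.8099$ at $N=114$ and only about $0.808$ as $N\to\infty$. The margin below $0.81$ is therefore very thin: in this family it takes $N=114$ indistinguishable low-probability boxes and an exact LP solution. A numerical search over a few parameters followed by checking a few vertex inequalities will not get there.
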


Our last main result shows that among identity-truthful algorithms, there is a separation between (dynamic) pricing mechanisms and general algorithms (that can use non-monotone selection rules). Note that this is in sharp contrast to the case of non-identity-truthful algorithms for which there is no such separation~\citep{banihashem:2024} (even beyond single choice settings).

\begin{restatable}{theorem}{secondtheorem}\label{thm:Non-Monotone}
    There exists an instance for which the optimum identity-truthful algorithm does not adhere to a pricing mechanism.
\end{restatable}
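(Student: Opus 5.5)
We will prove the statement by exhibiting a small explicit instance and computing the optimal identity-truthful algorithm on it by backward induction. The instance must have more than two boxes, because with only two boxes the first decision is a single accept/reject choice per value. The non-monotonicity we are after needs a previous-identity history to interact with the current value.

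\textbf{Candidate instance.} We first try three boxes, each with at most two or three support points. We choose them so that some value $v$ is shared by two boxes with very different continuation consequences. A natural template is to take one box that is deterministic at $1$ (like Box B in the example, with a rare large reward). A second box takes values $0$ or $2$. A third box takes the shared middle value $1$ or a low value.

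The goal is the following situation at some step $t$ with a fixed history of identities. Observing a \emph{higher} value $v_2$ should be a signal, via Bayes' rule over which remaining box is current, that the current box is the one whose removal leaves a very valuable future. Observing a \emph{lower} value $v_1<v_2$ should signal the opposite. In that situation the optimal rule accepts $v_1$ but rejects $v_2$.

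\textbf{Defining the benchmark.} We fix the objective precisely. An algorithm is identity-truthful if its decision at each step depends only on the past identities and the current value. Its performance is the worst-case ratio over arrival orders, relative to the optimum online (order-aware) algorithm for that order. The optimal identity-truthful algorithm is therefore a max--min object. It may be randomized, and it may be the only optimal solution.

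\textbf{Steps.} (1) For each order, compute the optimum online value by backward induction. (2) Parametrize identity-truthful algorithms by their acceptance probabilities at each information set (history of identities, current value); the finite instance gives finitely many. (3) Show that any pricing mechanism is monotone at every information set. A pricing mechanism accepts exactly the values above a history-dependent price, possibly with randomized tie-breaking. (4) Show that the optimal max--min algorithm strictly requires accepting a low value and rejecting a high value at one information set.

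For (4), we would argue by exchange. Take any monotone algorithm and the order that is worst for it. Swapping the decisions on $v_1$ and $v_2$ at the critical information set strictly improves the worst-case ratio. Any monotone rule at that information set is dominated by this swap on the binding orders, while leaving non-binding orders non-binding. Hence the max--min optimum cannot be a pricing mechanism.

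\textbf{Main obstacle.} The hardest step is (4). The max--min is over randomized algorithms, so we cannot simply compare a few deterministic policies; we must show that every monotone randomized policy is strictly suboptimal. To handle this, we would first reduce to the binding orders: we choose the instance so that the other orders give strictly larger ratios and are irrelevant. The problem then becomes a small linear program in the acceptance probabilities. Its optimum can be certified by LP duality, and a dual certificate would show that the monotonicity constraint $p(v_1)\le p(v_2)$ strictly lowers the value.

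If three boxes do not suffice, we would add a box to create the needed history-dependence. We would then tune the probabilities, possibly letting a parameter $\epsilon\to0$ as in the example, to make the gap strict.
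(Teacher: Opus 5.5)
Your proposal has a genuine gap. The statement is existential, so the proof consists of exhibiting an instance and verifying the optimum on it, and you do neither. The three-box ``template'' has no numbers, step (4) is described as an argument you would make, and you leave open adding further boxes if three do not suffice.

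The exchange argument sketched for (4) is also not valid as a general principle. A ``swap'' of decisions is undefined for randomized rules. For the monotone rule that rejects both values there is nothing to swap. And swapping can change which order is binding, so a strict improvement of the minimum is not automatic.

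Finally, your premise that two boxes cannot work is false, and it steers you away from the simplest construction. Non-monotonicity is a property of the per-value decisions at a single information set, and the empty history is such an information set.

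The idea you are missing is that, even with an empty history, the realized value is informative about the order. A value that only one box can produce affects the outcome only under orders in which that box arrives first. The paper takes Box A with values $0,1,2$ w.p.\ $0.6,0.1,0.3$, and Box B with values $1,7$ w.p.\ $0.7,0.3$. The optimum online value is then $2.8$ under both orders, and only the first-step decisions on the values $1$ and $2$ matter.

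Observing $2$ certifies order ``AB''. There the continuation value $2.8$ (Box B's mean) beats $2$, so rejecting $2$ is dominant. Observing $1$ is ambiguous: accepting it helps under ``BA'' (since $1$ exceeds Box A's mean $0.7$) but hurts under ``AB''.

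Suppose $1$ is accepted w.p.\ $q$ and $2$ w.p.\ $r$ on the first box. The expected rewards are $2.8-0.18q-0.24r$ under ``AB'' and $2.59+0.21q$ under ``BA''. Any pricing rule forces $r\ge q$. This caps the worst case at $2.59$ deterministically, and at $2.66$ with randomization ($q=r=\frac13$). By contrast, accept-$1$/reject-$2$ guarantees $2.62$, and $r=0$, $q=\frac{7}{13}$ guarantees about $2.703$. This two-variable computation is the entire proof.
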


This result provides the first separation in a prophet-like online selection setting between the power of general online algorithms and the power of pricing mechanisms. This addresses the question raised by \citet{dutting:2020} of whether prophet inequalities can always be implemented through pricing mechanisms, and whether there is a separation between anonymous (or more generally identity-truthful) vs. personalized (dynamic) pricing.

In Section~\ref{sec:single-threshold}, we use our technique for distinguishing between ``easy'' and ``threshold-hard'' instances to provide alternative proofs for classic prophet-inequality single-threshold algorithms.
In particular, we give alternative proofs that the median threshold, the half-mean threshold, the balanced-surplus threshold, the random prophet-sample threshold \citep{rubinstein:2020}, and a recent threshold by \citet{brustle:2026} are $\frac12$-competitive.
We also present several structural properties of thresholds that guarantee $\frac12$-competitiveness. In particular, we show that the set of quantiles $p \in [0,1]$ whose corresponding thresholds are $\frac12$-competitive contains an interval of length at least $\frac12$.
Our analysis answers an open question raised by \citet{waggoner2018prophet} concerning the characterization of all thresholds that guarantee $\frac12$-competitiveness.

\subsection{Related Work}
\paragraph{Information Gaps.}
Our work studies how information about reward identities affects the performance of online algorithms, contributing to a broader line of research on the role of information in online decision-making. This line of inquiry was initiated by \citet{azar2014prophet}, who introduced a framework for studying prophet inequalities when the algorithm has access only to samples rather than full distributional information. This sample-based model was later studied by \citep{rubinstein:2020, dutting:2021, caramanis:2022, gravin:2022, correa:2024, cristi:2024, fu:2024, ezra:2026-morethemerrier}.

More recently, \citet{ezra:2023-whoisnext} introduced the order-competitive ratio, which captures a different type of information gap: the effect of knowing the arrival order on the performance of online algorithms. This benchmark was subsequently adopted by \citep{chen:2024-settingtargets}, \citep{ezra:2023-importance}, and \citep{sun:2025}, who studied information gaps induced by arrival order in single-choice, combinatorial, and matching settings, respectively.
Later, \citet{ezra:2024-behindtheveil} introduced the identity-blindness gap which measures the importance of knowing the identities of the rewards, and showed that the advantage of comparing to the optimum online vanishes without this information.

\paragraph{Online Benchmarks.}
Another closely related line of research uses the optimum online as a benchmark for quantifying the approximation guarantees attainable under various algorithmic restrictions. \citet{niazadeh:2018} initiated this perspective by quantifying the performance loss incurred by restricting to single-threshold algorithms, relative to the optimal online algorithm which may use dynamic thresholds. They showed that single-threshold algorithms cannot guarantee a better than $1/2$ approximation even compared to the weaker benchmark of the optimum online.

Another type of constraint, initiated by \citet{papadimitriou:2021}, known as the philosopher inequality, compares the performance of polynomial-time algorithms to that of the optimum online. This problem has been studied in online matching \citep{papadimitriou:2021, saberi:2021, Braverman:2022, naor:2025, braverman:2025}, prophet secretary settings \citep{dutting:2023}, and combinatorial settings \citep{sun:2026}.

\paragraph{Fairness In Mechanism Design.}
Our work also relates to literature that incorporates fairness considerations in online decision-making.
\citet{correa:2021} and \citet{papasotiropoulos:2026} considered prophet and secretary variants under fairness constraints where the online decision-maker has to select items proportionally to prescribed group target ratios or group sizes.
\citet{arsenis:2022} defined two dimensions of fairness as maintaining the same probability of choosing a reward independent of identity or arrival time. \citet{balkanski:2024} defined a similar notion of fairness in a setting where for each box, a (possibly biased) prediction is given from the outset.
The main difference from our work is rather than restricting to a certain definition of fairness, we hide the discriminatory information (at the time of its relevant decision) thus forcing the algorithm to remain truthful.

\paragraph{Similar Techniques.}
A key step in our analysis is to study instances on which single-threshold algorithms perform poorly. In particular, we show that such instances approximately consist of deterministic boxes and ``free-reward'' boxes. A similar decomposition was used by \citet{sun:2025, sun:2026} for online decision-making problems with matching and combinatorial constraints.

Another important technique in our analysis is to isolate the largest box and design the algorithm according to both its contribution to the overall expected maximum reward and its unknown arrival position. A related idea appeared in \citet{chen:2025-matching} where they define the largest box as the item most likely to be selected by the prophet. They suggest an algorithm, designed around this box, where all other boxes are chosen according to activation rates while the largest box has a specialized activation probability.

\paragraph{Concurrent and Independent Work.}
Concurrent with the writing of this paper, \citet{correa:2026} independently developed their technique for analyzing the worst-case expected guarantees of a single-threshold algorithm.
Our approaches are similar, establishing the worst-case guarantee of a single-threshold algorithm based on the distribution of the prophet. Their analysis derives a function $V(t)$ that maps a threshold $t$ to the worst-case guarantee of the single-threshold algorithm with threshold $t$, whereas in our work we derive a similar function $\T(p)$ that maps the quantile $p\in[0,1]$ to the worst-case guarantee when using a threshold corresponding to the $p$-quantile of the prophet.
Denoting $F_{\max}$ as the CDF of the prophet, then for every threshold $t$ it holds that $\T(F_{\max}(t))=V(t)$, showing that the functions are identical (up to different coordinate system and normalization).

Both their paper and ours provide a unified framework for analyzing single-threshold algorithms that can derive alternative proofs for known prophet inequality results \textendash\ the guarantees of the median, half-mean and random prophet-sample thresholds.
Both papers prove that the range of $c$-competitive thresholds is an interval, however while $V(t)$ is quasi-concave, our function $\T(p)$ is concave due to the change of coordinate system. This allows us to derive several more properties, such as that the interval of quantiles that guarantee a competitive-ratio of $\frac12$ is at least of length $\frac12$.

\subsection{Paper Organization}
We first establish the model and notation in Section~\ref{sec:model}.
In Section~\ref{sec:gap}, we give a lower bound for the identity-truthfulness gap.
In Sections~\ref{subsec:single-thresh} and \ref{subsec:threshold-hard}, we show that ``easy'' instances have good guarantees using single-threshold algorithms and characterize the structure of worst-case instances.
In Section~\ref{subsec:optimum}, we use this characterization to obtain an upper bound for the optimum online benchmark.
In Section~\ref{subsec:algorithms}, we devise our parameterized algorithms and analyze their guarantees relative to the optimum online. By optimizing the parameters, we prove that the identity-truthfulness gap is strictly greater than $\frac12$.
In Section~\ref{sec:hardness}, we provide an upper bound on the identity-truthfulness gap, for any randomized algorithm.
In Section~\ref{sec:pricing}, we construct an instance that shows that the optimal identity-truthful algorithm cannot be implemented as a pricing mechanism.
In Section~\ref{sec:single-threshold}, we provide alternative proofs for known prophet inequality properties through the framework devised in this paper for the worst-case expectation of single-threshold algorithms, and prove additional properties of quantile-threshold algorithms.
Several technical claims and proofs are deferred to Appendix~\ref{apx:technical-claims}.

\section{Model}\label{sec:model}

\subsection{Preliminaries}
We consider a setting with $n$ boxes, each containing a stochastic reward. The rewards of the boxes are distributed according to known independent distributions $F_1,\ldots,F_n$. 
The boxes are ordered according to an arrival order $\pi$, where at step $i$, the reward of the $i$-th arriving box $v_i$ is drawn from distribution $F_{\pi^{-1}(i)}$, and revealed to a decision-maker. Upon observing the realized reward of box $i$, the decision-maker must immediately and irrevocably decide whether to accept the box. We will denote the value profile as $\vec{v}$ distributed according to $F=F_1\times\cdots\times F_n$.

In this paper, we consider two types of algorithms and compare their performance. One type is online algorithms that are order-aware and know $\pi$ from the outset. The optimum algorithm of this type calculates its thresholds via backward induction using $\pi$ and $F$. We use this algorithm's expectation as our benchmark.
The other type of algorithms is \textit{identity-truthful} algorithms that do not know $\pi$ from the outset, and their decision at $i$ cannot depend on the identity of box $i$ (it can only depend on the value $v_i$). 
However, in contrast to identity-blind algorithms (considered in \citep{ezra:2024-behindtheveil}), the algorithm learns the identity of box $i$ after deciding whether to accept its reward, and can use this information for future decisions.
Formally, the decision whether to accept reward $i$ can only depend on the realized values of $v_1,\ldots,v_i$, and on $\pi^{-1}(1),\ldots,\pi^{-1}(i-1)$.

Our goal is to explore the \textit{identity-truthfulness gap}, which is the worst-case ratio between the expected reward of an order-aware algorithm and an identity-truthful algorithm. 

We denote by $\OPT$ the optimum (order-aware) algorithm that maximizes the expectation of the chosen reward. For a product distribution $F$, an arrival order $\pi$, and a realization reward vector $\vec{v}$, we denote by  $\OPT(F,\pi;\vec{v})$ the reward chosen by $\OPT$. Notice that ``;'' is used to differentiate between inputs known from the outset (offline) and inputs revealed sequentially (online).

We denote an identity-truthful algorithm by $\ALG$. For a product distribution $F$, an arrival order $\pi$, and a realization reward vector $\vec{v}$, we denote by  $\ALG(F;\pi,\vec{v})$ the reward chosen by $\ALG$. 
\begin{definition}
    The \textbf{identity-truthfulness gap} of an identity-truthful algorithm \ALG\ is 
    $$\Gamma(\ALG)=\inf_F\min_\pi \frac{\E_{\vec{v}\sim F}[\ALG(F;\pi,\vec{v})]}{\E_{\vec{v}\sim F}[\OPT(F,\pi;\vec{v})]}.$$
\end{definition}
We omit $F$, $\pi$, and $\vec{v}$ from the notation when they are clear from context. 
When we say that the identity-truthfulness gap is at least $\rho$ without referring to a specific algorithm, we mean that there exists an identity-truthful algorithm, which we provide explicitly, whose identity-truthfulness gap is at least $\rho$. When we say that the identity-truthfulness gap is at most $\rho$, we mean that every identity-truthful algorithm has an identity-truthfulness gap at most $\rho$.

An important family of identity-truthful algorithms are single-threshold algorithms, denoted as $\STA{\tau}$, that use a fixed threshold $\tau\ge 0$ and accept the first reward exceeding it. The offline benchmark (a.k.a. prophet) is defined as $\E[\max_i v_i]$. It is a well established result that for all product distributions $F$ there exists $\tau$ such that $\forall_{\pi}:\E_{\vec{v}\sim F}[\STA{\tau}(F;\pi,\vec{v})]\ge \frac12\E[\max_i v_i]$ \citep{ester:1984}.

\subsection{Notations}
We establish the following notations used throughout the paper. Let us denote the prophet as $Z=\max_{i\in[n]} v_i$, and the maximum of the boxes starting at step $i$ as $Z_{\ge i}=\max_{i\le j\le n} v_j$.
We use $Q:[0,1]\to \mathbb{R}_{\geq 0}\cup\{\infty\}$ to denote the quantile function of the prophet $Z$, meaning $Q(p)=\sup \{x \mid \Pr[Z\leq x] = p\}$.
We assume without loss of generality that the distributions $\{F_i\}_{i\in[n]}$ have no point masses
\footnote{Any distribution with point masses can be approximated with a continuous distribution by replacing each point mass $x$ with a small uniform interval $[x,x+\epsilon]$ and taking the limit $\epsilon\to0$.}.
We further assume that the expectation of the prophet $\E[Z]$ equals $1$. This is possible since given a distribution, all values can be divided by the prophet's expectation in order to normalize the expectation to equal $1$. This normalization process does not impact the ratio between the expected rewards of our algorithms and the benchmark.

Our proofs use an approximation up to $\delta>0$ in order to bound various random value expectations and algorithms. Using a high threshold value drawn from the prophet's distribution $Q(1-\delta)$, each box $i$'s expectation is split into two parts in the following manner $$\E[v_i]=\E[v_i\cdot\ind{v_i\ge Q(1-\delta)}] + \E[v_i\cdot\ind{v_i< Q(1-\delta)}].$$ These two parts are denoted as $x_i=\E[v_i\cdot \ind{v_i\ge Q(1-\delta)}]$ and $y_i=\E[v_i\cdot \ind{v_i< Q(1-\delta)}]$. The total expectation of box $v_i$ is denoted as $z_i=x_i+y_i$. Similarly, $x_{\ge i}$, $y_{\ge i}$ and $z_{\ge i}$ are defined for $Z_{\ge i}$. Thus, for example, $z_{\ge 1}=1$. We term the expectation of the maximum of the boxes starting at step $i$, $z_{\ge i}$, the ``residual prophet'', and we term the free-reward component of this expectation, $x_{\ge i}$, the ``residual free-reward''.

These notations are useful when working with classic worst-case prophet inequality distributions constructed from two boxes \textendash\ a deterministic box, and a high-valued box with low realization probability (typically referred to as a free-reward box). The notation $x_i$ is associated with the free-reward component of box $i$.

\section{Identity-Truthfulness Gap Lower Bound}\label{sec:gap}

\subsection{Bounding the Performance of Single-Threshold Algorithm}\label{subsec:single-thresh}

First, we provide a lower bound for the expected performance of single-threshold algorithms on a given instance. 
While closely related to standard threshold/surplus decompositions, we phrase it
as a function of the prophet's quantile curve; this formulation is useful for our analysis.

\begin{claim}\label{cl:Single Threshold Algorithm Bound}
    The expected reward of a single-threshold algorithm with threshold $\tau$ is at least
    $$
    \E[\STA{\tau}] \ge p\int_p^1 Q(q)\frac{1}{q^2}dq,
    $$
    where $p:=\Pr[Z\le \tau]$.
\end{claim}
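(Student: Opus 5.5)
Write $p=\Pr[Z\le\tau]$. The plan is to take the standard threshold/surplus lower bound and rewrite each of its two terms in quantile coordinates, so that they merge into the single integral $p\int_p^1 Q(q)q^{-2}\,dq$.

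\textbf{Step 1: the threshold/surplus bound.} For $\STA{\tau}$ under any order, the reward equals $\tau$ whenever some box exceeds $\tau$, which happens with probability $1-p$. On top of that, box $i$ adds a surplus $(v_i-\tau)^+$ whenever all earlier boxes are below $\tau$. That event has probability at least $\Pr[Z\le\tau]=p$ and is independent of $v_i$. Hence
$$\E[\STA{\tau}] \;\ge\; (1-p)\tau + p\sum_i \E[(v_i-\tau)^+] \;\ge\; (1-p)\tau + p\,\E[(Z-\tau)^+],$$
where the last step uses $(\max_i v_i-\tau)^+\le\sum_i (v_i-\tau)^+$.

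\textbf{Step 2: rewrite in quantiles.} Since there are no point masses, $Q(p)=\tau$ (after choosing the version of $\tau$ in its flat region, which does not change $p$ or the algorithm's behaviour almost surely). Then
$$\E[(Z-\tau)^+]=\int_p^1 \bigl(Q(q)-\tau\bigr)\,dq .$$
Writing $\tau(1-p)=\int_p^1\tau\,dq$, we need to show
$$\int_p^1 \tau\,dq + p\int_p^1 (Q(q)-\tau)\,dq \;\ge\; p\int_p^1 \frac{Q(q)}{q^2}\,dq .$$

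\textbf{Step 3: compare the two sides using monotonicity of $Q$.} Rewrite the target as
$$p\int_p^1 \frac{Q(q)-\tau}{q^2}\,dq + p\,\tau\int_p^1 \frac{dq}{q^2},$$
and note that $p\,\tau\int_p^1 q^{-2}\,dq = \tau(1-p)$ exactly. So the claim reduces to
$$\int_p^1 (Q(q)-\tau)\,dq \;\ge\; \int_p^1 \frac{Q(q)-\tau}{q^2}\,dq .$$
This inequality points the wrong way: $1/q^2\ge1$ and $Q(q)-\tau\ge0$ on $[p,1]$, so the right side is the larger one. The crude bound $\Pr[\text{all earlier below }\tau]\ge p$ is therefore too weak, and this is the main obstacle.

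\textbf{Step 4: refine Step 1 with level-dependent weights.} The fix is to count the surplus more carefully. Replace the union bound in Step 1 by a layer-cake argument over levels $Q(q)$ for $q\ge p$. For each level $u>\tau$, the algorithm collects the surplus at level $u$ if the first box above $\tau$ also exceeds $u$. Conditioning on which box is first above $\tau$, this has probability at least
$$\frac{\Pr[Z>u]}{\Pr[Z>\tau]}\cdot\Pr[Z>\tau]\ \text{-type bounds},$$
i.e. the probability of crossing level $Q(q)$ is at least roughly $(1-q)\cdot p/q$. Here the factor $p/q$ is $\Pr[\text{all boxes before } i \text{ are below } \tau \mid \text{none reaches level } Q(q)]$. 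Concretely, I would show that the selected reward $R$ satisfies
$$\Pr[R>Q(q)] \;\ge\; \frac{p}{q}\,(1-q) \qquad \text{for } q\ge p .$$
This follows from $\Pr[R>u]\ge\sum_i \Pr[v_i>u]\prod_{j<i}\Pr[v_j\le\tau]$, together with $\prod_{j<i}\Pr[v_j\le \tau]\ge p\big/\prod_{j<i}\Pr[v_j\le u]$ and the standard product bound.

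\textbf{Step 5: integrate.} Then $\E[R]\ge\int_0^\infty \Pr[R>u]\,du$. Substituting $u=Q(q)$ and integrating by parts gives
$$\int_p^1 \frac{p(1-q)}{q}\,dQ(q) \;=\; p\int_p^1 \frac{Q(q)}{q^2}\,dq,$$
with the boundary terms cancelling against $\tau$ (the $q=1$ term vanishes because of the factor $1-q$). Verifying the product inequality in Step 4 is the key technical point.
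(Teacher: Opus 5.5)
Your Steps 4--5 follow the paper's proof. The paper writes $\E[\STA{\tau}]=\tau(1-p)+\int_\tau^\infty\Pr[\STA{\tau}>x]\,dx$ and proves exactly your level bound $\Pr[\STA{\tau}>x]\ge\frac{F_{\max}(\tau)}{F_{\max}(x)}\bigl(1-F_{\max}(x)\bigr)$, which is $\frac pq(1-q)$ at $x=Q(q)$. It then substitutes $x=Q(q)$ and integrates by parts. Steps 1--3 are a detour you rightly discard.

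However, the inequality you give to justify the key step is false as written. The claim $\prod_{j<i}\Pr[v_j\le\tau]\ge p/\prod_{j<i}\Pr[v_j\le u]$ already fails for $i=2$ with $F_1(\tau)=F_1(u)=0.01$ and $F_2(\tau)=1$: the left side is $0.01$ and the right side is $1$. The correct statement is
\[
\prod_{j<i}F_j(\tau)=\prod_{j<i}F_j(u)\cdot\prod_{j<i}\frac{F_j(\tau)}{F_j(u)}\ \ge\ \frac{p}{q}\prod_{j<i}F_j(u).
\]
It holds because every ratio $F_j(\tau)/F_j(u)$ is at most $1$ for $u>\tau$. Hence the partial product over $j<i$ dominates the full product $\prod_{j=1}^nF_j(\tau)/F_j(u)=F_{\max}(\tau)/F_{\max}(u)=p/q$. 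Multiply by $1-F_i(u)$, sum over $i$, and use the telescoping identity $\sum_i\bigl(1-F_i(u)\bigr)\prod_{j<i}F_j(u)=1-F_{\max}(u)=1-q$. This gives $\Pr[R>u]\ge\frac pq(1-q)$.

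Also tighten Step 5. As written, $\int_p^1\frac{p(1-q)}{q}\,dQ(q)$ equals $p\int_p^1 Q(q)q^{-2}\,dq-\tau(1-p)$, not $p\int_p^1Q(q)q^{-2}\,dq$. The missing $\tau(1-p)$ is the contribution $\int_0^\tau\Pr[R>u]\,du=\tau(1-p)$ of levels below the threshold. For $u<\tau$, the event $R>u$ occurs exactly when some box exceeds $\tau$. This is the paper's revenue term; state the split explicitly rather than leaving it as ``boundary terms cancelling against $\tau$''. With these two repairs your argument is complete and coincides with the paper's.
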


\begin{proof}
    The expectation of $\STA{\tau}$ can be decomposed into \textbf{revenue} and \textbf{surplus} in the following manner:
    \begin{equation}\label{eq:STA utility+revenue}
        \E[\STA{\tau}]=\tau\Pr[\STA{\tau} \text{ chooses}]+\E[(\STA{\tau}-\tau)^+]=\tau(1-p)+\int_{\tau}^{\infty}\Pr[\STA{\tau}>x]dx.
    \end{equation}
    Denoting $F_{\max}=\prod_{i\in[n]} F_i$, we can bound the second term as
    \begin{align}\label{eq:STA probabiliy bound}
        \forall_{x>\tau}:\Pr[\STA{\tau}>x]&=\sum_{i=1}^{n}(1-F_i(x))\left(\prod_{j<i}F_j(\tau)\right) \nonumber\\
        &= \sum_{i=1}^{n}(1-F_i(x))\left(\prod_{j<i}F_j(x)\frac{F_j(\tau)}{F_j(x)}\right) \nonumber\\
        &\ge \frac{F_{\max}(\tau)}{F_{\max}(x)}\sum_{i=1}^{n}(1-F_i(x))\left(\prod_{j<i}F_j(x)\right) = \frac{F_{\max}(\tau)}{F_{\max}(x)}(1-F_{\max}(x)).
    \end{align}
    Integrating Inequality~\eqref{eq:STA probabiliy bound}, we obtain
    \begin{align*}
        \int_{\tau}^{\infty}\Pr[\STA{\tau}>x]dx &\ge \int_{\tau}^{\infty} \frac{F_{\max}(\tau)}{F_{\max}(x)}(1-F_{\max}(x))dx \\
        &= F_{\max}(\tau)\int_{\tau}^{\infty} \left(\frac{1}{F_{\max}(x)}-1\right)dx\\
        &= p\int_p^1 \left(\frac{1}{q}-1\right)Q'(q)dq \\
        &= p\left[Q(q)(\frac{1}{q}-1)\right]_{p}^1+p\int_p^1 \frac{Q(q)}{q^2}dq \\
        &= p\int_p^1 \frac{Q(q)}{q^2}dq-\tau(1-p),
    \end{align*}
    where the second equality is given by the substitution $x=Q(q)$ and the third equality is given by applying integration by parts using $u=\frac{1}{q}-1$ and $v=Q(q)$.

    Finally, plugging this into Equation~\eqref{eq:STA utility+revenue} completes the proof.
\end{proof}
\begin{remark}
    We note that the inequality of Claim~\ref{cl:Single Threshold Algorithm Bound} can be an equality for every value of $\tau$. In particular, this holds in the instance with increasing weighted Bernoulli boxes with the same distribution of their maximum as $Z$. 
\end{remark}

We next define this lower bound on the performance of single-threshold algorithms as a function of the quantile $p$ that is set as a quantile threshold.\footnote{We note that the bound $\T(p)$ only accounts for the quantile function of the prophet, regardless of the individual boxes. Therefore, although an instance (defined by $F_1,\ldots,F_n$) may have a single-threshold algorithm with good guarantees, the guarantee given by $\T(p)$ can still be bad.}

\begin{definition}\label{def:t}
    Let $Z$ be a random variable corresponding to the prophet's distribution. We define a new function $\T: [0,1]\to \mathbb{R}_{\geq 0}$ as
    $$\T(p)=p\int_p^1 Q(q)\frac{1}{q^2}dq,$$
    i.e., the function receives a quantile $p$ and gives the lower bound for the expectation of a single-threshold algorithm with threshold $Q(p)$.
\end{definition}

The following claims show the properties of the function defined above. These properties are essential for Lemma~\ref{lem:CDF Lemma}.

\begin{claim}\label{cl:Concavity Claim}
    The function $\T(p)$ is concave. 
\end{claim}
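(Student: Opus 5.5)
We prove that $\T(p)=p\,G(p)$ is concave, where $G(p)=\int_p^1 Q(q)q^{-2}\,dq$. The plan is to differentiate twice and use that $Q$ is non-decreasing.

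\textbf{First derivative.} Wherever $Q$ is continuous at $p$, the fundamental theorem of calculus gives $G'(p)=-Q(p)/p^2$. Hence
$$\T'(p)=G(p)+pG'(p)=\int_p^1 \frac{Q(q)}{q^2}\,dq-\frac{Q(p)}{p}.$$

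\textbf{Monotonicity of $\T'$.} Concavity of $\T$ is equivalent to $\T'$ being non-increasing on $(0,1)$. It is cleanest to avoid differentiating $Q$ and instead compare $\T'(p_1)$ with $\T'(p_2)$ directly for $p_1<p_2$:
$$\T'(p_1)-\T'(p_2)=\int_{p_1}^{p_2}\frac{Q(q)}{q^2}\,dq-\frac{Q(p_1)}{p_1}+\frac{Q(p_2)}{p_2}.$$
Since $Q$ is non-decreasing, we have $Q(q)\ge Q(p_1)$ on $[p_1,p_2]$. This gives
$$\int_{p_1}^{p_2}\frac{Q(q)}{q^2}\,dq\ge Q(p_1)\left(\frac{1}{p_1}-\frac{1}{p_2}\right).$$
Therefore
$$\T'(p_1)-\T'(p_2)\ge \frac{Q(p_2)-Q(p_1)}{p_2}\ge 0.$$

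\textbf{Formal version.} The formal version of this step avoids differentiability issues altogether. $\T$ is absolutely continuous on compact subintervals of $(0,1]$, as the product of $p$ and an integral of a locally bounded function. Its a.e.\ derivative is $h(p):=G(p)-Q(p)/p$, and we have just shown that $h$ is non-increasing everywhere (the computation above did not use continuity of $Q$). A function that is the integral of a non-increasing function is concave, so $\T$ is concave on $(0,1]$.

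\textbf{Endpoint $p=0$ (main obstacle).} Here $G(p)$ may blow up. We set $\T(0)=\lim_{p\to0}\T(p)$, and this limit should equal $Q(0)$, i.e.\ the value of threshold $0$ is the minimum of the prophet's support. This is obtained by splitting the integral: $p\int_p^\epsilon Q/q^2\le Q(\epsilon)(1-p/\epsilon)$, while $p\int_\epsilon^1\to0$. Since $Q$ is non-decreasing and right-continuous at $0$, the limit is $Q(0)$; one can also simply define $\T(0)$ this way.

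Concavity extends to the closed interval $[0,1]$ provided $\T$ is continuous at $0$, which the limit computation gives. A concave function on $(0,1]$ extended by its limit at $0$ remains concave. Finally, $\T(1)=0$ is consistent, using $\E[Z]=1<\infty$ so that the integrals are finite.
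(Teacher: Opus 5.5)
Your proof is correct, and its core is the same as the paper's: both rest on the formula $\T'(p)=\int_p^1 Q(q)q^{-2}\,dq-Q(p)/p$ together with the monotonicity of $Q$. The difference is in the last step. The paper differentiates once more and reads off $\T''(p)=-Q'(p)/p\le 0$, which tacitly assumes $Q$ is differentiable. Your finite-difference bound $\T'(p_1)-\T'(p_2)\ge (Q(p_2)-Q(p_1))/p_2\ge 0$ is the integrated form of the same inequality and needs only that $Q$ is non-decreasing. This matters because $Q$ can jump even under the paper's atomless assumption, namely whenever the support of $Z$ has gaps. Your absolute-continuity argument also closes a gap the paper leaves open, and so does your explicit treatment of $p=0$. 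There $\T(0)$ is formally $0\cdot\infty$ and must be read as $\lim_{p\to0}\T(p)=Q(0)$, which is the convention the paper implicitly uses later in Claim~\ref{cl:allQ}. One small correction: $Q(q)/q^2$ need not be bounded near $q=1$ when $Z$ is unbounded, so absolute continuity on $[a,1]$ should be justified by integrability (from $\int_0^1 Q=\E[Z]=1$) rather than local boundedness. The conclusion is unaffected.
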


\begin{proof}
    We will show that $\frac{\partial^2 \T}{\partial p^2} \le 0$.
    \begin{equation*}
        \frac{\partial^2 \T}{\partial p^2} =\frac{\partial^2}{\partial p^2}\left[p\int_p^1 \frac{Q(q)}{q^2}dq\right]=\frac{\partial}{\partial p}\left[\int_p^1 \frac{Q(q)}{q^2}dq-\frac{Q(p)}{p}\right] = -\frac{Q(p)}{p^2}-\frac{pQ'(p)-Q(p)}{p^2}=-\frac{Q'(p)}{p}\le0,
    \end{equation*}
    since the quantile function is non-decreasing and $p\ge 0$.
\end{proof}

\begin{claim}\label{cl:Area Claim}
    It holds that $\int_0^1 \T(p) dp = \frac12$. 
\end{claim}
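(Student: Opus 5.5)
We compute $\int_0^1 \T(p)\,dp$ directly by exchanging the order of integration (Fubini–Tonelli) and then use the normalization $\E[Z]=1$. Since every integrand is nonnegative ($Q\ge 0$), the exchange is justified even if $Q$ is unbounded near $1$.

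\textbf{Step 1: swap the integrals.} The region is $\{0\le p\le q\le 1\}$, so
\begin{equation*}
\int_0^1 \T(p)\,dp=\int_0^1 \int_p^1 p\,\frac{Q(q)}{q^2}\,dq\,dp=\int_0^1 \frac{Q(q)}{q^2}\left(\int_0^q p\,dp\right)dq .
\end{equation*}

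\textbf{Step 2: evaluate the inner integral.} We have $\int_0^q p\,dp=\frac{q^2}{2}$, and this cancels the factor $\frac{1}{q^2}$. Hence
\begin{equation*}
\int_0^1 \T(p)\,dp=\frac12\int_0^1 Q(q)\,dq .
\end{equation*}

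\textbf{Step 3: identify the remaining integral.} For a uniform $U\sim[0,1]$, the variable $Q(U)$ has the distribution of $Z$. Here we use that the distributions have no point masses, as assumed in the model. Therefore $\int_0^1 Q(q)\,dq=\E[Z]=1$, which gives the value $\frac12$.

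\textbf{Main obstacle.} The only delicate point is justifying the exchange of integrals when $Q(q)\to\infty$ as $q\to1$, and handling the $\frac{1}{q^2}$ singularity near $q=0$. Tonelli's theorem for nonnegative integrands covers both issues without any integrability assumption beyond $\E[Z]<\infty$. Near $0$, the singularity is harmless because it is cancelled exactly by the factor $\frac{q^2}{2}$ produced by the inner integral.
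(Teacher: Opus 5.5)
Your proof is correct and follows the paper's own argument: exchange the order of integration, evaluate $\int_0^q p\,dp = \frac{q^2}{2}$ to cancel the $\frac{1}{q^2}$ factor, and conclude with $\int_0^1 Q(q)\,dq = \E[Z] = 1$. Your appeal to Tonelli for nonnegative integrands is a justification the paper leaves implicit.
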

\begin{proof}
    \begin{equation*}
        \int_0^1 \T(p)dp=\int_0^1 p\left[\int_p^1 \frac{Q(q)}{q^2}dq\right]dp=\int_0^1 \frac{Q(q)}{q^2}\left[\int_0^q pdp\right]dq
        =\int_0^1 \frac{Q(q)}{q^2}\left[\frac{q^2}{2}\right]dq=\frac{1}{2}\int_0^1 Q(q)dq=\frac12,
    \end{equation*}
    which concludes the proof.
\end{proof}

\begin{remark}
    We note that this is equivalent to the theorem that a single-threshold algorithm, that uses a sample of the prophet as its threshold, is $\frac12$-competitive \citep{rubinstein:2020}. The above proof is an alternative proof for this theorem, through the function $\T(p)$. 
\end{remark}

We next show that every concave function $\T:[0,1]\rightarrow \mathbb{R}_{\geq 0}$, with  $\T(1)=0$, can be obtained from some  quantile function $Q$. 
\begin{claim} \label{cl:allQ}
For every differentiable concave function $\T:[0,1]\rightarrow \mathbb{R}_{\geq 0}$ that satisfies $\T(1)=0$, there exists a quantile function $Q:[0,1]\rightharpoondown \mathbb{R}_{\geq 0}$ for which $\T$ is constructed by Definition~\ref{def:t} from $Q$.
\end{claim}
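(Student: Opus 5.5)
We prove Claim~\ref{cl:allQ} by inverting the map $Q\mapsto\T$ of Definition~\ref{def:t} explicitly. Write $G(p)=\int_p^1 Q(q)q^{-2}\,dq$, so that $\T(p)=pG(p)$ and $G'(p)=-Q(p)/p^2$. Differentiating once gives
\begin{equation*}
\T'(p)=G(p)-\frac{Q(p)}{p}=\frac{\T(p)}{p}-\frac{Q(p)}{p}.
\end{equation*}
This suggests the candidate
\begin{equation*}
Q(p):=\T(p)-p\,\T'(p).
\end{equation*}
Geometrically, $Q(p)$ is the intercept at $0$ of the tangent line to $\T$ at $p$. We then verify three things: that this $Q$ is non-negative, that it is non-decreasing (so it is a legitimate quantile function of some non-negative random variable), and that plugging it into Definition~\ref{def:t} returns exactly $\T$.

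\textbf{Monotonicity.} For $p<p'$, concavity of $\T$ means the tangent at $p$ lies above the graph, so $\T(p')\le \T(p)+\T'(p)(p'-p)$. Symmetrically, the tangent at $p'$ satisfies $\T(p)\le\T(p')+\T'(p')(p-p')$. Together these give
\begin{equation*}
Q(p')-Q(p)=\T(p')-\T(p)-p'\T'(p')+p\T'(p)\ge 0.
\end{equation*}
Intuitively, tangent lines of a concave function have intercepts at $0$ that increase as the tangency point moves right. This needs only differentiability, not a second derivative. When $\T$ is twice differentiable it is the one-line computation $Q'(p)=-p\T''(p)\ge0$, which mirrors Claim~\ref{cl:Concavity Claim}.

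\textbf{Non-negativity.} Since $Q$ is non-decreasing, it suffices to show $Q(0^+)\ge0$. Concavity on $[0,1]$ together with $\T\ge0$ gives $\T(0)\ge0$. Moreover, $p\T'(p)\to0$ as $p\to0$ whenever $\T'$ is bounded near $0$. If $\T'(0^+)=+\infty$, we instead use $\T(p)-p\T'(p)\ge \T(0)\ge 0$, which is the tangent-line inequality evaluated at $0$. So $Q\ge0$ throughout.

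\textbf{Recovering $\T$.} The definition gives $Q(q)/q^2=\T(q)/q^2-\T'(q)/q=-\frac{d}{dq}\big(\T(q)/q\big)$. Hence
\begin{equation*}
p\int_p^1 \frac{Q(q)}{q^2}\,dq=p\left(\frac{\T(p)}{p}-\T(1)\right)=\T(p),
\end{equation*}
using $\T(1)=0$. This identity is where the hypothesis $\T(1)=0$ enters.

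The main obstacle is regularity at the endpoints. Near $q=1$, the value $Q(1)=-\T'(1)$ may be infinite, matching the range $\mathbb{R}_{\ge0}\cup\{\infty\}$ allowed for $Q$. I would handle this by proving the fundamental-theorem step on $[p,1-\epsilon]$ and letting $\epsilon\to0$ using continuity of $\T$ at $1$. Near $p=0$ the integrability of $Q$ must be checked, but it follows from $\int_0^1Q=\T(0)+\int_0^1(-p\T'(p))\,dp$, which is finite after integrating by parts. The same computation reproduces the normalization $\int_0^1\T=\tfrac12\int_0^1Q$ of Claim~\ref{cl:Area Claim}, and a positive rescaling of $\T$ would give $\E[Z]=1$ if that normalization is wanted.
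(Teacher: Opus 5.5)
Your proposal is correct and follows the paper's proof: both define $Q(p)=\T(p)-p\T'(p)$ and check that $Q$ is non-negative and non-decreasing. Beyond the paper, you explicitly verify via $Q(q)/q^2=-\frac{d}{dq}\bigl(\T(q)/q\bigr)$ and $\T(1)=0$ that this $Q$ reproduces $\T$, a step the paper leaves implicit; you prove monotonicity with tangent lines rather than the second derivative the paper uses; and you get the sign $Q'(p)=-p\T''(p)$ right where the paper's display has a slip. One small slip of your own: the integration-by-parts identity should read $\int_0^1 Q=\int_0^1\T+\int_0^1(-p\T'(p))\,dp=2\int_0^1\T$ rather than $\T(0)+\cdots$. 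This does not matter for the proof, since $\T'$ is finite and monotone on $[0,1]$, so $Q$ is bounded and hence integrable.
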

\begin{proof}
    We first notice that the derivative of $\T(p)$ constructed by Definition~\ref{def:t} is $\T'(p) = \int_p^1 \frac{Q(q)}{q^2} dq - \frac{Q(p)}{p} = \frac{1}{p}(\T(p) - Q(p))$.
    This implies that it holds that the quantile function $Q(p)=\T(p)-p\T'(p)$ corresponds to the function $\T(p)$ for the lower bound for the expectation of a single-threshold algorithm function. For this function $Q$ to be a valid quantile function, it must be non-negative and monotone. 
Indeed  $Q(0) = \T(0) \geq 0$, and  for monotonicity of $Q$, observe that  $$Q'(p)=\T'(p)-(\T'(p)+p\T''(p)) = p\T''(p) \geq 0 ,$$
where the inequality is by concavity of $\T$.
This concludes the proof.
\end{proof}

\subsection{\texorpdfstring{$\delta$}{δ}-Threshold-Hard Instances}\label{subsec:threshold-hard}
\label{sec:threshold-hard}
In this section, we use Definition~\ref{def:t} to detect easy instances for which single-threshold algorithms are sufficient. This allows us to design algorithms only for cases in which single-threshold algorithms might perform poorly (which we term $\delta$-threshold-hard).  
\begin{definition}
    A prophet's distribution (defined by its quantile function $Q$) is called \textbf{$\delta$-threshold-hard} if for every ${p\in[0,1]}$ it holds that $ \T(p) \le \frac12+\delta^2$.  
\end{definition}

We next provide various bounds on the quantile function $Q$ of the prophet for instances with $\delta$-threshold-hard prophet distributions.

\begin{lemma}\label{lem:CDF Lemma}
    Given $\delta\in(0,\frac14)$, if the prophet's distribution is \textit{$\delta$-threshold-hard}, then
    \begin{enumerate}[label=(\roman*)]
        \item $Q(\delta)\ge \frac12-2\delta$. 
        \item $x_{\ge 1}\ge \frac12-\delta$. (Recall $x_{\ge 1}=\E[Z\cdot\ind{Z\ge Q(1-\delta)}]$).
    \end{enumerate}

\end{lemma}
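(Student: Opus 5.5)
The plan is to work entirely with the function $\T$ from Definition~\ref{def:t}, using four facts established earlier: $\T$ is concave (Claim~\ref{cl:Concavity Claim}), $\int_0^1\T(p)\,dp=\frac12$ (Claim~\ref{cl:Area Claim}), $\T(1)=0$, and the identity $Q(p)=\T(p)-p\T'(p)$ derived in the proof of Claim~\ref{cl:allQ}. Let $M=\sup_p\T(p)$. Threshold-hardness gives $M\le\frac12+\delta^2$, and the area identity gives $M\ge\frac12$. So the total ``area deficit'' $\int_0^1 (M-\T(p))\,dp = M-\frac12$ is at most $\delta^2$. Both items come from showing that a violation would force a deficit larger than $\delta^2$.

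\textbf{Item (i).} Geometrically, $Q(\delta)=\T(\delta)-\delta\T'(\delta)$ is the intercept at $p=0$ of the tangent line to $\T$ at $\delta$. Set $b=Q(\delta)$ and $s=\T'(\delta)$. By concavity, $\T(p)\le b+sp$ for all $p$.
\begin{itemize}
\item If $b\ge M$, item (i) holds immediately, since $M\ge\frac12$.
\item Otherwise $b<M$. The tangent passes through $(\delta,\T(\delta))$ with $\T(\delta)\le M$, so $s\le (M-b)/\delta$. Hence the line $b+sp$ stays below $M$ on an interval $[0,w]$ with $w=(M-b)/s\ge\delta$ (if $s\le 0$, this holds on all of $[0,1]$). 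The deficit over $[0,w]$ is then at least the triangle area $\frac12 w(M-b)\ge \frac{\delta}{2}(M-b)$.
\end{itemize}
Comparing with the total deficit bound gives $\frac{\delta}{2}(M-b)\le\delta^2$, so $b\ge M-2\delta\ge\frac12-2\delta$.

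\textbf{Item (ii).} First I lower-bound $a:=\T(1-\delta)$ by a case split on the sign of $t=\T'(1-\delta)$.
\begin{itemize}
\item If $t\ge 0$, the tangent at $1-\delta$ shows $\T\le a$ on $[0,1-\delta]$. Together with $\T\le M$ on $[1-\delta,1]$, this gives $\frac12\le a(1-\delta)+M\delta$. Hence $a\ge \frac{\frac12-\delta(\frac12+\delta^2)}{1-\delta}\ge\frac12-\delta$.
\item If $t<0$, then $\T\le a$ on $[1-\delta,1]$. Together with $\T\le M$ on $[0,1-\delta]$, this gives $\frac12\le M(1-\delta)+a\delta$. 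Hence $a\ge M-(M-\frac12)/\delta\ge\frac12-\delta$.
\end{itemize}
Next, I convert this into a bound on $x_{\ge1}=\int_{1-\delta}^1 Q(q)\,dq$ by integrating the identity $Q=\T-p\T'$ by parts. Using $\T(1)=0$,
\[
x_{\ge1}=2\int_{1-\delta}^1\T(p)\,dp+(1-\delta)\,a .
\]
The area bound gives $\int_{1-\delta}^1\T\ge\frac12-M(1-\delta)\ge\frac{\delta}{2}-\delta^2$. Plugging in both bounds,
\[
x_{\ge1}\ge \delta-2\delta^2+(1-\delta)\bigl(\tfrac12-\delta\bigr)=\tfrac12-\tfrac{\delta}{2}-\delta^2 .
\]
For $\delta<\frac14$ this is at least $\frac12-\delta$.

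\textbf{Main obstacle.} I expect the delicate part to be item (ii). The naive bound $x_{\ge1}\ge(1-\delta)\T(1-\delta)$ only yields $\frac12-\frac32\delta$, which is too weak. The exact integration-by-parts formula for $x_{\ge1}$ is what recovers the missing $\delta/2$, and the case split on the sign of $\T'(1-\delta)$ is needed because the maximizer of $\T$ could lie to the right of $1-\delta$. Smaller points of care are the edge cases in item (i) ($s\le0$, or $b\ge M$) and justifying differentiability; the latter can be avoided by using one-sided derivatives or supporting lines of the concave function $\T$.
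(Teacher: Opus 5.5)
Your proof is correct, but it is organized differently from the paper's.

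\textbf{Item (i).} The paper takes the tangent at $\delta$ against the ceiling $\frac12+\delta^2$ and computes the exact triangle area, $\bigl(\frac12+\delta^2-Q(\delta)\bigr)^2/(2\T'(\delta))\le\delta^2$. It then needs a separate case for when the tangent does not reach the ceiling inside $[\delta,1]$, plus the auxiliary bound $\T'(\delta)\le 2$. You compare against the actual maximum $M$ and use only that the tangent stays below $M$ on a width of at least $\delta$. This removes both the case split and the square-root optimization. One small fix: if $w>1$, your triangle leaves $[0,1]$. Integrate over $[0,\delta]$ instead, where $s\le (M-b)/\delta$ gives $\int_0^\delta(M-b-sp)\,dp\ge\frac\delta2(M-b)$ directly.

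\textbf{Item (ii).} The paper minimizes its triangle inequality over the slope to get $\T(\delta)\ge\frac12-\frac\delta2+\delta^2$. It asserts the same bound for $\T(1-\delta)$ by a ``symmetric construction,'' after which the crude conversion $x_{\ge1}\ge(1-\delta)\T(1-\delta)$ suffices. You get only $\T(1-\delta)\ge\frac12-\delta$, from a simple two-rectangle argument. You compensate with the exact identity $x_{\ge1}=2\int_{1-\delta}^1\T+(1-\delta)\T(1-\delta)$. This identity is valid, and it also follows from Fubini, $\int_c^1\T=\frac12\int_c^1Q-\frac c2\T(c)$, with no differentiability concerns.

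\textbf{What each route buys.} The paper's route gives a sharper lower bound on $\T$ at both $\delta$ and $1-\delta$. Yours is more elementary and spells out what the paper leaves implicit.

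Your identity can be pushed further. Concavity and $\T(1)=0$ give the chord bound $\int_{1-\delta}^1\T\ge\frac\delta2\T(1-\delta)$, hence $x_{\ge1}\ge\T(1-\delta)$. This beats the paper's $(1-\delta)\T(1-\delta)$. With it, your bound $\T(1-\delta)\ge\frac12-\delta$ finishes (ii) immediately, without the separate area estimate on $\int_{1-\delta}^1\T$.
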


Informally, Lemma~\ref{lem:CDF Lemma} states that for instances that are $\delta$-threshold-hard, their prophet's expectation can be partitioned into two main parts \textendash\ values that are close to $Q(\delta)$ and values above the upper $\delta$-quantile (which their contribution is $x_{\geq 1}$). Both parts contribute close to half of the prophet's expectation and thus, those instances behave similarly to the known hard instance of prophet inequality (with a deterministic value of $1$, and $1/\epsilon$ with probability $\epsilon$).

\begin{proof}[Proof of Lemma~\ref{lem:CDF Lemma}]
    Recall the definition $\T(p) = p \int_p^1 \frac{Q(q)}{q^2} dq$.  We can express $Q(p)$ using the function $\T$ and it's derivative $\T'$ in the following manner
    \begin{align}\label{eq:Q as T and T'}
        \T'(p) &= \int_p^1 \frac{Q(q)}{q^2} dq - \frac{Q(p)}{p} = \frac{1}{p}(\T(p) - Q(p)) \nonumber\\
        &\implies Q(p)=\T(p)-p\T'(p).
    \end{align}
    We can also bound $\E[Z\cdot\ind{Z\ge Q(p)}]$ as
    \begin{align}\label{eq:FR T Bound}
        \T(p) &= p \int_{p}^1 \frac{Q(q)}{q^2} dq \le  \frac{1}{p} \int_{p}^1 Q(q) dq = \frac{1}{p}\E[Z\cdot\ind{Z\ge Q(p)}] \nonumber\\
        &\implies \E[Z\cdot\ind{Z\ge Q(p)}]\ge p\T(p).
    \end{align}
    Thus by bounding the values of $\T(p)$ and $\T'(p)$ at $\delta$ and $1-\delta$ we can achieve the necessary bounds for $Q(\delta)$ and $x_{\ge 1}$.

    Since we consider a $\delta$-threshold hard instance, we know that $\T(p)\leq \frac12+\delta^2$ for every $p$. Moreover, using concavity  (which is shown in Claim~\ref{cl:Concavity Claim}), we can bound $\T$ by its tangent at every point $p$. On the other hand, we know from Claim~\ref{cl:Area Claim} that the area under the function $\T$ must equal $\frac12$. The overall area under the upper bound on the maximum value of $\T$ (which is $\frac{1}{2}+\delta^2$ since this is a $\delta$-threshold hard instance) equals $\frac12+\delta^2$, which implies that the area between $\T$ and $y=\frac12+\delta^2$ is exactly $\delta^2$. This implies that the area above the tangent line (which is contained in it) cannot exceed $\delta^2$ (see figure~\ref{fig:threshold bound}). Let us denote the function of the tangent line at $\delta$ as $f(p)$. It holds that
    \begin{equation}\label{eq:T tangent bound}
        \T(p) \le f(p) = \T(\delta) + (p - \delta)\T'(\delta).
    \end{equation}

    \begin{figure}[ht]
        \centering
        \includegraphics[width=0.85\textwidth]{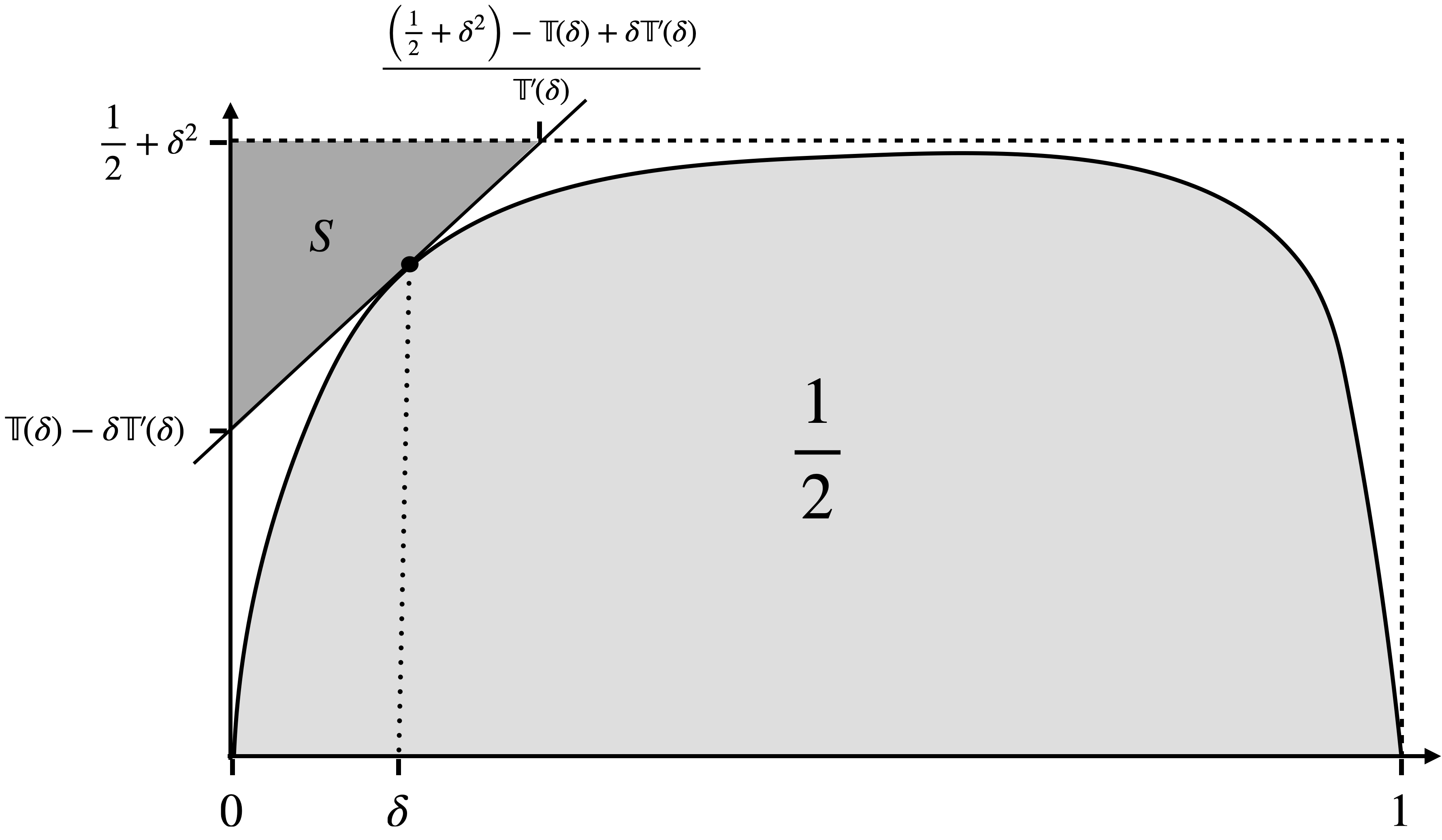}
        \caption[Illustration of $\T(p)$ and its tangent $f(p)$]{Illustration of $\T(p)$ and its tangent $f(p)$. The function $\T(p)$ is concave with area $\frac12$. For $\delta$-threshold-hard distributions, it is bounded by the rectangle $[0,1]\times\left[0,\frac12+\delta^2\right]$ as shown in the figure. The tangent line at $\delta$ intercepts with this rectangle, enclosing a triangle with area $S$ as depicted in the illustration.}
        \label{fig:threshold bound}
    \end{figure}

    This tangent intercepts $p=0$ at $f(0)=\T(\delta)-\delta \T'(\delta)\stackrel{\eqref{eq:Q as T and T'}}=Q(\delta)\ge 0$. It also intercepts the upper bound $\frac12+\delta^2$ at $p=\frac{\left(\frac{1}{2} + \delta^2\right) - \T(\delta) + \delta \T'(\delta)}{\T'(\delta)}$. When $\T'(\delta) \ge \frac{(\frac12+\delta^2)-\T(\delta)}{1-\delta}$ this interception point lies in the region $p\in[\delta,1]$, meaning the area bounded by the constant $\frac{1}{2} + \delta^2$ and the tangent line $f(p)$ forms a triangle as seen in Figure~\ref{fig:threshold bound}. We will later return to the case this does not hold.
    \paragraph{Case 1: $\T'(\delta) \ge \frac{(\frac12+\delta^2)-\T(\delta)}{1-\delta}$.} Denoting the area above the tangent at $\delta$ as $S$, its value is 
    \begin{align*}
        S =\frac12\cdot\left(\frac{1}{2} + \delta^2 - (\T(\delta) - \delta \T'(\delta)) \right)\cdot\frac{\frac{1}{2} + \delta^2 - \T(\delta) + \delta \T'(\delta)}{\T'(\delta)} \\
        =\frac{\left(\frac{1}{2} + \delta^2 - \T(\delta) + \delta \T'(\delta) \right)^2}{2\T'(\delta)}.
    \end{align*}
    As previously stated, it holds that $S \le \delta^2$. Rearranging for $\T(\delta)$, we obtain 
    \begin{equation}\label{eq:T bound by T'}
        \T(\delta) \ge \frac{1}{2} + \delta^2 - \sqrt{2\delta^2 \T'(\delta)} + \delta \T'(\delta).
    \end{equation}

    First, since $\T(\delta) \le \frac12+\delta^2$ we get
    \begin{equation}\label{eq:T' upper Bound}
        - \sqrt{2\delta^2 \T'(\delta)} + \delta \T'(\delta) \le 0 \implies \T'(\delta)\le 2.
    \end{equation}
    It holds that
    $$
    Q(\delta)\stackrel{\eqref{eq:Q as T and T'}}{=}\T(\delta)-\delta\T'(\delta) \stackrel{\eqref{eq:T bound by T'}}{\ge} \frac12+\delta^2-\sqrt{2\delta^2T'(\delta)} \stackrel{\eqref{eq:T' upper Bound}}{\ge} \frac{1}{2} - 2\delta + \delta^2\ge\frac12-2\delta,
    $$
    which proves the first inequality of the lemma for this case.

    Second, by taking the minimum of the RHS of Inequality~\eqref{eq:T bound by T'} over $\T'(\delta)$ (which is obtained at $\T'(\delta) = \frac{1}{2} $), we get that
    \begin{equation}\label{eq:T lower bound}
        \T(\delta) \ge \frac{1}{2} - \frac{\delta}{2} + \delta^2.
    \end{equation}
    Similarly, one can get the same lower bound for $\T(1-\delta)$ by a symmetric construction.
    Thus, we obtain the second inequality of the lemma for this case,
    $$
    x_{\ge 1}=\E[Z\cdot\ind{Z\ge Q(1-\delta)}] \stackrel{\eqref{eq:FR T Bound}}{\ge} (1-\delta) \T(1-\delta)\stackrel{\eqref{eq:T lower bound}}{\ge} (1-\delta)\left(\frac{1}{2} - \frac{\delta}{2} + \delta^2\right)\ge\frac12-\delta,
    $$
    where the last inequality holds since $\delta<\frac14$.
    
    \paragraph{Case 2: $\T'(\delta) <  \frac{\frac12+\delta^2-\T(\delta)}{1-\delta}$.} We now consider the case that the area formed above the tangent at $\delta$ does not form a triangle (within $[0,1]$). We can integrate the tangent bound in Inequality~\eqref{eq:T tangent bound} to get
    \begin{align*}
        \frac12 & = \int_0^1\T(p)dp \le \int_0^1\T(\delta)+(p-\delta)\T'(\delta)dp \\
        &= \T(\delta)+(\frac12-\delta)\T'(\delta) \\
        &<  \T(\delta)+(\frac12-\delta)\frac{\frac12+\delta^2-\T(\delta)}{1-\delta} \\
        &= \frac{\frac12}{1-\delta}\T(\delta)+\frac{\frac14-\frac12\delta+\frac12\delta^2-\delta^3}{1-\delta}.
    \end{align*}
    By rearrangement, we obtain that $\T(\delta)\ge \frac12-\delta^2+2\delta^3$ which implies (by the inequality of the case) that $\T'(\delta) < 2\delta^2 $. The last two inequalities that we obtained on $\T(\delta)$ and $\T'(\delta)$ imply (for $\delta<\frac{1}{4})$ Inequalities~\eqref{eq:T' upper Bound}~and~\eqref{eq:T lower bound}. The rest of the proof follows the same steps as in Case 1, by using these inequalities. This concludes the proof of the lemma.
\end{proof}

We next provide additional upper bounds on the shape of $\delta$-threshold-hard prophet distributions. This includes showing that the concentration of values between the lower and upper $\delta$-quantiles is of order $O(\delta)$.

\begin{claim}\label{cl:Q Upper Bound}
    Given $\delta\in(0,\frac14)$, if the prophet's distribution is \textit{$\delta$-threshold-hard}, then 
    $$
    Q(\delta)\le \frac12+2\delta^2.
    $$
\end{claim}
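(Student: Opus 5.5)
We argue by contradiction. Suppose $Q(\delta)=\frac12+\eta$ with $\eta>2\delta^2$, and show that $\int_0^1\T(p)\,dp<\frac12$, which contradicts Claim~\ref{cl:Area Claim}. The argument uses the same ingredients as the proof of Lemma~\ref{lem:CDF Lemma}: Equation~\eqref{eq:Q as T and T'}, the tangent bound~\eqref{eq:T tangent bound} coming from concavity (Claim~\ref{cl:Concavity Claim}), and the threshold-hardness bound $\T\le\frac12+\delta^2$.

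\textbf{Step 1: the slope at $\delta$ is steeply negative.} By Equation~\eqref{eq:Q as T and T'}, $\delta\T'(\delta)=\T(\delta)-Q(\delta)$. Using $\T(\delta)\le\frac12+\delta^2$ this gives
$$\T'(\delta)\le\frac{\delta^2-\eta}{\delta}<-\delta .$$
Geometrically, the tangent line $f$ at $\delta$ meets $p=0$ at $Q(\delta)$, which lies strictly above the cap $\frac12+\delta^2$. Through the point $(\delta,\T(\delta))$, which lies at or below the cap, this forces a descent of slope larger than $\delta$ in absolute value.

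\textbf{Step 2: bound the area.} On $[0,\delta]$ we use only $\T\le\frac12+\delta^2$. On $[\delta,1]$, concavity gives $\T(p)\le f(p)$. Combining this with $\T(\delta)\le\frac12+\delta^2$ and Step~1 yields
$$\T(p)\le\tfrac12+\delta^2-\delta(p-\delta)\qquad\text{for } p\in[\delta,1].$$
Integrating over $[0,1]$,
$$\frac12=\int_0^1\T(p)\,dp\le\frac12+\delta^2-\delta\int_\delta^1(p-\delta)\,dp=\frac12+\delta^2-\frac{\delta(1-\delta)^2}{2}.$$
For the right-hand side to be strictly below $\frac12$ we need $\frac{(1-\delta)^2}{2}>\delta$. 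This holds for $\delta<\frac14$, since $(1-\delta)^2/2>9/32>1/4>\delta$. We therefore reach the desired contradiction, and conclude $Q(\delta)\le\frac12+2\delta^2$.

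The main obstacle is choosing the right tool. The naive bound from monotonicity of $Q$ only gives $Q(\delta)\le \T(\delta)/(1-\delta)\approx\frac12+\frac{\delta}{2}$, which is too weak. The key step is to convert the slack $\eta-\delta^2$ in Equation~\eqref{eq:Q as T and T'} into a slope of order $\delta$, and then to lose area $\Theta(\delta)$. This loss dominates the total slack of $\delta^2$ allowed above $\T$. Some care is needed at the strict inequality in Step~1; if only $\eta\ge 2\delta^2$ is assumed, the slope is merely $\le-\delta$, and the same computation still gives strict inequality. One should also check that $f$ may be used on all of $[\delta,1]$. This is fine even where $f$ becomes negative, since we only need an upper bound on $\T$.
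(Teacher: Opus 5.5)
Your proof is correct and is essentially the paper's argument run in contrapositive form. The paper integrates the tangent at $\delta$ over all of $[0,1]$ against $\int_0^1\T=\frac12$ and the cap $\frac12+\delta^2$ to get $\T'(\delta)\ge-\delta^2/(\frac12-\delta)$, then substitutes into $Q(\delta)=\T(\delta)-\delta\T'(\delta)$; you go from $Q(\delta)$ to the slope first and then contradict the area identity, and your only change (using the cap rather than the tangent on $[0,\delta]$) just gives the marginally sharper constant $\frac12+\delta^2+2\delta^3/(1-\delta)^2$, which is immaterial here.
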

\begin{proof}
    Since $\T'(p) = \int_p^1 \frac{Q(q)}{q^2} dq - \frac{Q(p)}{p} = \frac{1}{p}(\T(p) - Q(p))$, we have $Q(\delta) = \T(\delta) - \delta \T'(\delta)$. Similarly to Lemma~\ref{lem:CDF Lemma}, we can integrate over the tangent bound as such
    \begin{eqnarray}
            \frac12 & =&\int_0^1\T(p)dp\stackrel{\eqref{eq:T tangent bound}}\le\int_0^1\T(\delta)+(p-\delta)\T'(\delta)dp \nonumber \\
            &=&\T(\delta)+(\frac12-\delta)\T'(\delta) \nonumber\\
            &\le &\frac12+\delta^2+\left(\frac12-\delta\right)\T'(\delta) \nonumber\\
            &\implies &\T'(\delta)\ge-\frac{\delta^2}{\frac12-\delta},
    \end{eqnarray}
    where the first equality is by Claim~\ref{cl:Area Claim} and the second inequality follows since the prophet's distribution is $\delta$-threshold-hard. Therefore, $$Q(\delta)=\T(\delta) - \delta \T'(\delta)\le\frac12+\delta^2+\delta\frac{\delta^2}{\frac12-\delta}\le\frac12+2\delta^2,$$
    where the last inequality holds since $\delta<\frac14$, concluding the proof.
\end{proof}

\begin{claim}\label{cl:FR Upper Bound}
    Given $\delta\in(0,\frac14)$, if the prophet's distribution is \textit{$\delta$-threshold-hard}, then 
    $$
    x_{\ge 1}\le \frac12+\delta.
    $$
\end{claim}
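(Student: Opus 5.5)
The plan is to bound $x_{\ge 1}$ above directly by the value of $\T$ at the single quantile $1-\delta$, using the threshold-hardness hypothesis. This is the mirror image of Inequality~\eqref{eq:FR T Bound}, which bounded $x_{\ge 1}$ from below by the same quantity. No tangent-line or area argument should be needed.

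First, express the residual free-reward through the quantile function. Since $Z$ has no point masses, we have
\begin{equation*}
x_{\ge 1}=\E[Z\cdot\ind{Z\ge Q(1-\delta)}]=\int_{1-\delta}^1 Q(q)\,dq .
\end{equation*}
Next, compare this to $\T(1-\delta)$. On the range $q\in[1-\delta,1]$ we have $\frac{1}{q^2}\ge 1$ and $Q(q)\ge 0$. Therefore
\begin{equation*}
\T(1-\delta)=(1-\delta)\int_{1-\delta}^1 \frac{Q(q)}{q^2}\,dq\ \ge\ (1-\delta)\int_{1-\delta}^1 Q(q)\,dq=(1-\delta)\,x_{\ge 1}.
\end{equation*}
By the definition of $\delta$-threshold-hardness, $\T(1-\delta)\le\frac12+\delta^2$. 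Hence $x_{\ge 1}\le \frac{\frac12+\delta^2}{1-\delta}$.

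The only remaining step is elementary: verify that $\frac{\frac12+\delta^2}{1-\delta}\le\frac12+\delta$ for $\delta\in(0,\frac14)$. Multiplying out gives
\begin{equation*}
\left(\tfrac12+\delta\right)(1-\delta)=\tfrac12+\tfrac{\delta}{2}-\delta^2 .
\end{equation*}
This is at least $\frac12+\delta^2$ exactly when $\frac{\delta}{2}\ge 2\delta^2$, i.e.\ when $\delta\le\frac14$, so the hypothesis on $\delta$ suffices.

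I do not expect any real obstacle. The one point to be careful about is the identification of $x_{\ge 1}$ with $\int_{1-\delta}^1 Q$, which relies on the no-point-mass assumption so that $\Pr[Z\ge Q(1-\delta)]=\delta$. The argument is otherwise a direct one-line consequence of the hypothesis.
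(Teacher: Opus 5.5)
Your proposal is correct and follows the paper's proof exactly: use $\frac{1}{q^2}\ge 1$ to get $\T(1-\delta)\ge(1-\delta)\,x_{\ge 1}$, apply the threshold-hardness bound $\T(1-\delta)\le\frac12+\delta^2$, and rearrange. Your explicit check that $\frac{\frac12+\delta^2}{1-\delta}\le\frac12+\delta$ reduces to $\delta\le\frac14$ spells out the final inequality, which the paper only asserts.
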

\begin{proof}
    We have, $$\frac12+\delta^2 \ge \T(1-\delta)= (1-\delta)\int_{1-\delta}^1 \frac{Q(q)}{q^2}dq \ge (1-\delta)\int_{1-\delta}^1 Q(q)dq = (1-\delta)x_{\ge 1}.$$ By rearrangement we get that $$x_{\ge 1} \le \frac{\frac12+\delta^2}{1-\delta}\le \frac{1}{2} + \delta,$$
    where the last inequality holds since $\delta<\frac14$, concluding the proof.
\end{proof}

\begin{claim}\label{cl:Mid-Range Upper Bound}
    Given $\delta\in(0,\frac14)$, if the prophet's distribution is \textit{$\delta$-threshold-hard}, then 
    $$
    \E\left[(Z-Q(\delta))\cdot\ind{Q(\delta)\le Z<Q(1-\delta)}\right] \le 4\delta.
    $$
\end{claim}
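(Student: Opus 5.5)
The plan is to write the prophet's expectation through the quantile function and split it into three quantile ranges: $[0,\delta]$, $[\delta,1-\delta]$ and $[1-\delta,1]$. Then we lower bound everything in that decomposition except the quantity in the claim, using Lemma~\ref{lem:CDF Lemma}. By the normalization $\E[Z]=1$,
$$
1=\int_0^1 Q(q)\,dq=\int_0^\delta Q(q)\,dq+\int_\delta^{1-\delta}Q(q)\,dq+\int_{1-\delta}^1 Q(q)\,dq .
$$
Since there are no point masses, the last term is exactly $x_{\ge 1}=\E[Z\cdot\ind{Z\ge Q(1-\delta)}]$. The middle term equals $\E[Z\cdot\ind{Q(\delta)\le Z<Q(1-\delta)}]$.

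Next, we subtract off the baseline $Q(\delta)$ inside the middle range. Let $M$ denote the quantity in the claim, which is $M=\int_\delta^{1-\delta}(Q(q)-Q(\delta))\,dq$. Then
$$
\int_\delta^{1-\delta}Q(q)\,dq=(1-2\delta)Q(\delta)+M,
$$
and therefore
$$
M=1-\int_0^\delta Q(q)\,dq-(1-2\delta)Q(\delta)-x_{\ge 1}.
$$

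Finally, we bound each subtracted term from below:
\begin{itemize}
\item $\int_0^\delta Q(q)\,dq\ge 0$, since $Q\ge 0$;
\item $Q(\delta)\ge \frac12-2\delta$, by Lemma~\ref{lem:CDF Lemma}(i);
\item $x_{\ge1}\ge\frac12-\delta$, by Lemma~\ref{lem:CDF Lemma}(ii).
\end{itemize}
The coefficient $1-2\delta$ is positive for $\delta<\frac14$, so we may substitute the lower bound on $Q(\delta)$. This gives
$$
M\le 1-(1-2\delta)\Bigl(\tfrac12-2\delta\Bigr)-\Bigl(\tfrac12-\delta\Bigr)=4\delta-4\delta^2\le 4\delta .
$$

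There is no real obstacle here. The only points needing care are that $Q(\delta)$ is multiplied by a positive coefficient, so its lower bound may be used, and that the atomless assumption makes the boundary conventions ($\le$ versus $<$ at $Q(\delta)$ and $Q(1-\delta)$) immaterial. All the substantive work was already done in Lemma~\ref{lem:CDF Lemma}.
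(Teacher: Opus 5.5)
Your proof is correct and matches the paper's: both write the quantity as $\E[Z]$ minus the lower tail, minus $x_{\ge 1}$, minus $(1-2\delta)Q(\delta)$, drop the nonnegative lower tail, and apply Lemma~\ref{lem:CDF Lemma}(i) and (ii) to obtain $4\delta-4\delta^2\le 4\delta$. The only difference is cosmetic: you express the decomposition through integrals of $Q$, whereas the paper uses expectations with indicators.
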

\begin{proof}
    It holds that
    \begin{align*}
        \E\left[(Z-Q(\delta))\cdot\ind{Q(\delta)\le Z<Q(1-\delta)}\right]
        &= \E\left[Z\cdot\ind{Q(\delta)\le Z<Q(1-\delta)}\right] - \E\left[Q(\delta)\cdot\ind{Q(\delta)\le Z<Q(1-\delta)}\right] \\
        &= \E\left[Z\cdot\ind{Z\ge Q(\delta)}\right] - \E\left[Z\cdot\ind{Z\ge Q(1-\delta)}\right] - Q(\delta)\Pr\left[Q(\delta)\le Z<Q(1-\delta)\right] \\
        &\le \E[Z] -x_{\ge 1} - (1-2\delta)Q(\delta) \\
        &\le 1 - \left(\frac12-\delta\right)-(1-2\delta)\left(\frac12-2\delta\right)=4\delta-4\delta^2\le4\delta,
    \end{align*}
    where the first inequality holds since $\E\left[Z\cdot\ind{Z\ge Q(\delta)}\right]\le\E[Z]$ and the second inequality holds by Lemma~\ref{lem:CDF Lemma}.
\end{proof}

To summarize, in this section, we bound three key features of $\delta$-threshold-hard instances. These features correspond to the ``deterministic'' component, the ``free-reward'' component, and the expectation of the values in between. Overall, we can summarize it as:
\begin{eqnarray*}
    -2\delta \le &Q(\delta)-\frac12& \le 2\delta^2 \quad   \left(\mbox{the deterministic component is close to $\frac{1}{2}$}\right) \\
    -\delta \le &x_{\ge 1}-\frac12& \le \delta \quad   ~~~\left(\mbox{the free-reward component is close to $\frac{1}{2}$}\right)\\
    0 \le &\E\left[(Z-Q(\delta))\cdot\ind{Q(\delta)\le Z<Q(1-\delta)}\right]& \le 4\delta \quad  ~~ \left(\mbox{non-free-rewards beyond $Q(\delta)$ contribute at most $O(\delta)$}\right)
\end{eqnarray*}

\subsection{Bounding the Performance of the Optimum  Online}\label{subsec:optimum}
We next utilize the claims from Section~\ref{sec:threshold-hard} to bound the performance of the optimum online algorithm.

\begin{lemma}\label{lem:Optimal Online Upper Bound}
    Given $\delta\in(0,\frac14)$, if the prophet's  distribution is \textit{$\delta$-threshold-hard}, then  for every $i\in[n]$ it holds that
    $$\E[\OPT]\le \frac{x_{\ge 1}-x_{\ge i}}{1-\delta}+\max\{Q(\delta),z_{\ge i}\}+\frac{4\delta}{1-\delta}.$$
\end{lemma}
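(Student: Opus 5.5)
We bound $\OPT$ by splitting its collected value according to whether the selection happens before step $i$ or at/after step $i$.

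For the steps $j<i$, we charge $\OPT$ everything it could gain from the first $i-1$ boxes, but we split each value into three pieces: a part up to $Q(\delta)$, a middle part in $[Q(\delta),Q(1-\delta))$, and a free-reward part above $Q(1-\delta)$. The free-reward part of boxes $j<i$ is at most $\sum_{j<i}x_j$. We relate this sum to $x_{\ge 1}-x_{\ge i}$ through the events $\{v_j\ge Q(1-\delta)\}$. Each such event has total probability at most $\delta$ by the definition of the quantile, so on any such event all later boxes stay below $Q(1-\delta)$ with probability at least $1-\delta$. Hence $\sum_{j<i}x_j(1-\delta)\le \E[Z\ind{Z\ge Q(1-\delta)}\ind{\text{argmax}<i}]\le x_{\ge 1}-x_{\ge i}$. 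This gives the first term, $\frac{x_{\ge1}-x_{\ge i}}{1-\delta}$.

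For the middle part, we use Claim~\ref{cl:Mid-Range Upper Bound}. The quantity $\E[(v_j-Q(\delta))^+\ind{v_j<Q(1-\delta)}]$ summed over $j$ is bounded the same way, dividing by $1-\delta$ to pass from individual boxes to the maximum. Indeed, $\Pr[Z<Q(1-\delta)]\ge1-\delta$, and by independence, conditioned on box $j$ taking a value, the other boxes fall below $Q(\delta)$... Here I would rather bound $\sum_j\E[(v_j-Q(\delta))^+\ind{v_j<Q(1-\delta)}]\cdot\delta$-type terms. The cleanest route is to compare the sum of excesses to the excess of the maximum. This works with a factor loss of $\Pr[\text{others}<Q(\delta)]$, which is at least $\delta$, and that is too weak. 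So instead I would argue sequentially. $\OPT$ is a stopping rule, and whenever it stops at $j$ with $v_j$ in the middle range, it gains at most $Q(\delta)$ plus the excess $(v_j-Q(\delta))^+\ind{v_j<Q(1-\delta)}$. The expected excess is bounded via the prophet's excess, since $\E[(Z-Q(\delta))^+\ind{Z<Q(1-\delta)}]$ dominates the excess at the stopping time on the event that no free reward appears. The remaining discrepancy is absorbed by the factor $\frac{1}{1-\delta}$, using that $\Pr[\text{no free reward}]\ge1-\delta$. This yields $\frac{4\delta}{1-\delta}$.

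Finally, the "base" part is handled by the usual optimal-stopping argument. If $\OPT$ stops before $i$, its base value is at most $Q(\delta)$. If it continues to step $i$, its value from then on is at most the residual prophet $z_{\ge i}$. So the base contribution is at most $\max\{Q(\delta),z_{\ge i}\}$, since it is a convex combination of these two quantities. Here one has to be careful not to double count, because $z_{\ge i}$ already includes the free-reward and middle parts from step $i$ on. To avoid this, the first two charges are only made for stops at $j<i$, and stops at $j\ge i$ are charged wholly to $z_{\ge i}$.

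I expect the main obstacle to be the middle-range term. Converting the per-box excess into the prophet's excess with only a $1/(1-\delta)$ loss requires conditioning on the event that all other boxes lie below $Q(1-\delta)$. We then use that on this event the selected box's excess is at most the maximum's excess, which is where Claim~\ref{cl:Mid-Range Upper Bound} enters.
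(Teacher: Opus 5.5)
Your skeleton works and takes a different route from the paper. You split $\OPT$'s stopped value $v_\sigma$ by whether $\sigma<i$, write $v_\sigma\le \min\{v_\sigma,Q(\delta)\}+W_\sigma+v_\sigma\ind{v_\sigma\ge b}$ with $b=Q(1-\delta)$ and $W_j=(v_j-Q(\delta))^+\ind{v_j<b}$, and use $Q(\delta)\Pr[\sigma<i]+z_{\ge i}\Pr[\sigma\ge i]\le\max\{Q(\delta),z_{\ge i}\}$. The paper instead merges the first $i-1$ boxes into the single box $B_1=\max\{\max_{j<i}v_j,Q(\delta)\}$, which can only help $\OPT$. It then bounds $\E[(B_1-Q(\delta))^+]$ for that one variable via Claims~\ref{cl:General Mid-Range Integral} and~\ref{cl:General FR Additiveness}. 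However, two of your steps fail as written.

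\textbf{Free-reward step.} The inequality $\E[Z\ind{Z\ge b}\ind{\text{argmax}<i}]\le x_{\ge1}-x_{\ge i}$ is backwards. Indeed $x_{\ge1}-x_{\ge i}=\E[(Z\ind{Z\ge b}-Z_{\ge i}\ind{Z_{\ge i}\ge b})\ind{\text{argmax}<i}]$, which is strictly smaller whenever an early box and a late box both exceed $b$. Your endpoint $(1-\delta)\sum_{j<i}x_j\le x_{\ge1}-x_{\ge i}$ is still true, but you must derive it directly. The integrand $Z\ind{Z\ge b}-Z_{\ge i}\ind{Z_{\ge i}\ge b}$ is non-negative, and it is at least $v_j$ on the disjoint events $\{v_j\ge b,\ \max_{k>j}v_k<b\}$, $j<i$. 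Each such event contributes at least $(1-\delta)x_j$.

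\textbf{Middle-range step.} This is the real gap, and you rightly flag it as the crux. Saying the discrepancy ``is absorbed by $\frac1{1-\delta}$ because $\Pr[Z<b]\ge1-\delta$'' requires $\E[W_\sigma\ind{Z<b}]\ge(1-\delta)\E[W_\sigma]$. That is, the no-free-reward event must not be negatively correlated with the excess at the stopping time. Since $\{\sigma=j\}$ depends on the earlier boxes, this fails for general stopping rules. For example, a rule that stops on a middle-range value only after having skipped some value $\ge b$ has $\E[W_\sigma]>0$ but $\E[W_\sigma\ind{Z<b}]=0$. So ``$\OPT$ is a stopping rule'' is not enough, and conditioning on ``all other boxes below $b$'' does not commute with conditioning on $\sigma=j$.

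An extra idea is needed; either of the following suffices.
\begin{enumerate}[label=(\alph*)]
\item Use that $\OPT$ has deterministic thresholds $\tau_k$. Then conditioning on $\sigma=j$ only truncates each earlier $v_k$ from above, so $\Pr[Z<b\mid \sigma=j, v_j]\ge\prod_{k\ne j}\Pr[v_k<b]\ge1-\delta$ whenever $v_j<b$.
\item Bound $W_\sigma\le\max_j W_j$, and note that conditioning on $\{Z<b\}$ stochastically increases each of the independent $W_j$. Hence $\E[\max_jW_j]\le\E[(Z-Q(\delta))\ind{Q(\delta)\le Z<b}]/(1-\delta)\le\frac{4\delta}{1-\delta}$, which is a many-variable form of Claim~\ref{cl:General Mid-Range Integral}.
\end{enumerate}
With either fix your route goes through. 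The paper's merging avoids the issue entirely: after merging, the value taken before step $i$ is the maximum $B_1$ itself, so the single-variable Claim~\ref{cl:General Mid-Range Integral} applies directly.
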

Intuitively, this claim states that for $\delta$-threshold-hard cases, up to low order terms (of order of $\delta$), the performance of the optimum online algorithm can be bounded as the sum of two terms, the ``free-reward'' of the first $i-1$ boxes (captured by $x_{\geq 1}-x_{\geq i}$), and the maximum between a deterministic box with a value $Q(\delta)$ and the residual prophet (captured by $\max\{Q(\delta),z_{\ge i}\}$). This bound holds for every $i$.
\begin{proof}[Proof of Lemma~\ref{lem:Optimal Online Upper Bound}]
    Let $i\in[n]$, consider an auxiliary instance with the following two boxes: $$B_1 = \max\{\max_{1\le j<i}v_j,\ Q(\delta)\},\quad B_2=\max_{i\le j\le n} v_j,$$ meaning $B_1$ is the maximum of the first $i-1$ boxes and a value of $Q(\delta)$, and $B_2$ is the maximum of the remaining boxes, starting from $i$. The optimum online algorithm for this auxiliary instance receives in expectation at least the expectation of the optimum online algorithm of the original instance\footnote{The expected reward of the optimum online can only increase when merging boxes and replacing them by their maximum. The same holds for adding a deterministic value of $Q(\delta)$ to the maximum.}. The optimum online algorithm for the auxiliary instance (which we denote by $\overline{\OPT}$) is to accept the reward from $B_1$ if and only if it exceeds the threshold $\E[B_2]$. This guarantees
    \begin{equation}\label{eq:optimum value}
        \E\left[\overline{\OPT}\right] = \E\left[B_2\right]+\E\left[\left(B_1-\E\left[B_2\right]\right)^+\right].
    \end{equation}

    Let us first consider the following decomposition
    \begin{align}\label{eq:B1 decomposition}
        \E\left[(B_1-Q(\delta))^+\right] &\le \E\left[(B_1-Q(\delta))\cdot \ind{Q(\delta)\le B_1<Q(1-\delta)}\right]+\E\left[(B_1-Q(\delta))\cdot \ind{B_1\ge Q(1-\delta)}\right] \nonumber\\
        &\le \E\left[(B_1-Q(\delta))\cdot \ind{Q(\delta)\le B_1<Q(1-\delta)}\right]+\E\left[B_1\cdot \ind{B_1\ge Q(1-\delta)}\right].
    \end{align}
    We can bound the first term of Inequality~\eqref{eq:B1 decomposition} by 
    \begin{align*}
        \E\left[(B_1-Q(\delta))\cdot \ind{Q(\delta)\le B_1<Q(1-\delta)}\right] 
        &\le \frac{\E\left[(\max\{B_1,B_2\}-Q(\delta))\cdot\ind{Q(\delta)\le\max\{B_1,B_2\}< Q(1-\delta)}\right]}{\Pr[\max\{B_1,B_2\}<Q(1-\delta)]} \nonumber\\
        &= \frac{\E\left[(Z-Q(\delta))\cdot\ind{Q(\delta)\le Z< Q(1-\delta)}\right]}{\Pr[Z<Q(1-\delta)]}\nonumber\\
        &\le \frac{4\delta}{1-\delta},
    \end{align*}
    where the first inequality holds by Claim~\ref{cl:General Mid-Range Integral} (assigning $X=B_1$, $Y=B_2$, $a=Q(\delta)$, $b=Q(1-\delta)$), the first equality holds since $Z=\max\{B_1,B_2\}$ for values above $Q(\delta)$, and the second inequality holds by Claim~\ref{cl:Mid-Range Upper Bound}.

    For the second term of Inequality~\eqref{eq:B1 decomposition}, it holds that
    \begin{align*}
        \E\left[B_1\cdot \ind{B_1\ge Q(1-\delta)}\right]
        &\le \frac{\E\left[\max\{B_1,B_2\}\cdot \ind{\max\{B_1,B_2\}\ge Q(1-\delta)}\right]-\E\left[B_2\cdot \ind{B_2\ge Q(1-\delta)}\right]}{\Pr[\max\{B_1,B_2\}<Q(1-\delta)]} \nonumber\\
        &= \frac{\E\left[Z\cdot \ind{Z\ge Q(1-\delta)}\right]-\E\left[B_2\cdot \ind{B_2\ge Q(1-\delta)}\right]}{\Pr[Z<Q(1-\delta)]} \nonumber\\
        &= \frac{x_{\ge 1}-x_{\ge i}}{1-\delta},
    \end{align*}
    where the first inequality holds by Claim~\ref{cl:General FR Additiveness} (assigning $X=B_1$, $Y=B_2$, $T=Q(1-\delta)$), the first equality holds since $Z=\max\{B_1,B_2\}$ for values above $Q(\delta)$, and the second equality holds by the definition of $B_2$, $x_{\ge 1}$ and $x_{\ge i}$. 
    Using the bounds of these two terms of Inequality~\eqref{eq:B1 decomposition} implies that 
    \begin{equation}\label{eq:B1 bound}
        \E[(B_1-Q(\delta))^+]\le \frac{x_{\ge 1}-x_{\ge i}}{1-\delta} + \frac{4\delta}{1-\delta}.
    \end{equation}
    
    We now consider two cases (whether $\E[B_2]\le Q(\delta)$ or not).
    \paragraph{Case 1: $\E[B_2]\le Q(\delta)$.}
    In this case, $B_1$ is always chosen, therefore $\E[\overline{\OPT}]=\E[B_1]$. Since $B_1$ is at least $Q(\delta)$, it holds that 
    \begin{equation}\label{eq:B1 expectation}
        \E[B_1]=\E[(B_1-Q(\delta))^+]+Q(\delta).
    \end{equation}
    Overall we have that  $$\E[\OPT]\le \E[\overline{\OPT}] = \E[B_1]\le \frac{x_{\ge 1}-x_{\ge i}}{1-\delta}+Q(\delta)+\frac{4\delta}{1-\delta},$$
    where the first inequality holds since $\overline\OPT$ obtains at least as $\OPT$, the equality holds since we are in the case where $Q(\delta)\geq \E[B_2]$, and the second inequality holds by Inequalities~\eqref{eq:B1 bound} and \eqref{eq:B1 expectation}.

    \paragraph{Case 2: $\E[B_2] > Q(\delta)$.}
    In this case, we have $$\E[\OPT] \le \E\left[\overline{\OPT}\right]= E[B_2]+\E\left[(B_1-\E[B_2])^+\right]\le \E[B_2]+\E\left[(B_1-Q(\delta))^+\right]\le \frac{x_{\ge 1}-x_{\ge i}}{1-\delta}+z_{\ge i}+\frac{4\delta}{1-\delta},$$
    where the first inequality holds since $\overline\OPT$ obtains at least as $\OPT$; the first equality holds from Equation~\eqref{eq:optimum value}; the second inequality holds since if $\E[B_2] > Q(\delta)$ then $\E\left[(B_1-\E[B_2])^+\right]\le \E\left[(B_1-Q(\delta))^+\right]$; the last inequality holds by Inequality~\eqref{eq:B1 bound}.
\end{proof}

\subsection{Algorithms for \texorpdfstring{$\delta$}{δ}-Threshold-Hard Instances}\label{subsec:algorithms}

In this section, we provide two identity-truthful algorithms for instances with $\delta$-threshold-hard prophet distributions. 
Our first algorithm (Algorithm~\ref{alg:No Major Box Algorithm}) handles instances in which no box is too large, i.e., all boxes have an expectation of at most $\alpha$ for some parameter $\alpha$. 
Our second algorithm (Algorithm~\ref{alg:With Major Box Algorithm}) handles instances where there is such a box, which we term a ``major-box.''

\paragraph{No Major-Boxes.}
We first handle instances where there is no major-box. Formally, we assume that for every $i$, it holds that
$$ \E[v_i] \leq \alpha.$$
Our algorithm starts with a high-quantile threshold of $Q(1-\delta)$ to capture high values. Once the residual prophet is low enough (below $\frac{1+\alpha}{2}$), the algorithm chooses the better between guaranteeing the residual free-rewards (by keeping the high threshold) or switching to a different threshold obtaining half of the residual prophet.
In instances with no major-box, the residual prophet decreases slowly enough with each step such that there must be a step in which the better of the two guarantees yields a good approximation of the optimum online.

\begin{lemma}\label{lem:No Major Box Algorithm Bound}
    Given parameters $\delta\in(0,\frac14)$ and $\alpha\in(\frac34,1)$, if the prophet's distribution is \textit{$\delta$-threshold-hard} and $\max_{i\in[n]} \E[v_i] \le \alpha$, then Algorithm~\ref{alg:No Major Box Algorithm} achieves $$\frac{\E[\ALG]}{\E[\OPT]} \ge \frac{2}{\alpha+3}-16\delta.$$
    
    \begin{algorithm}
        \caption{No Major-Box Algorithm}\label{alg:No Major Box Algorithm}
        \begin{algorithmic}[1]
            \State Set the current threshold to $\tau \gets Q(1-\delta)$.
            \While{$z_{\ge i} > \frac{1+\alpha}{2}$} \label{code:remaining check}
                \State Observe the next box $i$.
                \If{$v_i \ge \tau$}
                    \State Accept reward $i$ and stop.
                \EndIf
            \EndWhile 
            \If{$\frac12 z_{\ge i} \ge x_{\ge i}$} \label{code:choose threshold}
                \State Set the current threshold to $\tau \gets \frac12 z_{\ge i}$. \label{code:take half}
            \Else
                \State Keep the current threshold $\tau \gets Q(1-\delta)$. \label{code:keep threshold}
            \EndIf
            \State Continue observing boxes, accepting the first reward $i$ where $v_i \ge \tau$.
        \end{algorithmic}
    \end{algorithm}
\end{lemma}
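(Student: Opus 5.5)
Let $k$ be the first index with $z_{\ge k}\le \frac{1+\alpha}{2}$; this is the step at which Algorithm~\ref{alg:No Major Box Algorithm} reevaluates. Since $z_{\ge 1}=1>\frac{1+\alpha}{2}$ and $z_{\ge i}$ drops by at most $\E[v_{k-1}]\le\alpha$ per step, the definition of $k$ together with $z_{\ge k-1}-z_{\ge k}\le \E[v_{k-1}]\le \alpha$ gives
$$\tfrac{1-\alpha}{2}\le z_{\ge k}\le \tfrac{1+\alpha}{2},$$
where the lower bound uses $z_{\ge k}\ge z_{\ge k-1}-\alpha>\frac{1+\alpha}{2}-\alpha$. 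The plan is to lower bound $\E[\ALG]$ in terms of $x_{\ge 1}$, $x_{\ge k}$ and $z_{\ge k}$, upper bound $\E[\OPT]$ via Lemma~\ref{lem:Optimal Online Upper Bound} at the same index $i=k$, and then minimize the resulting ratio over the feasible pairs $(z_{\ge k},x_{\ge k})$. This corresponds to the picture in Figure~\ref{fig:XZ-plane}.

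\textbf{Algorithm side.} During steps $1,\dots,k-1$ the threshold is $Q(1-\delta)$. The probability that the algorithm stops in this phase is at most $\Pr[Z\ge Q(1-\delta)]=\delta$. I would use the same surplus/FR-additivity reasoning as in Claim~\ref{cl:General FR Additiveness}: the expected reward collected from boxes $1,\ldots,k-1$ is at least $\E[\max_{j<k}v_j\ind{\cdot\ge Q(1-\delta)}]\,(1-\delta)$, and by subadditivity of the tail parts this is at least $(x_{\ge1}-x_{\ge k})(1-\delta)$. The algorithm reaches step $k$ with probability at least $1-\delta$. From there it takes the better of two options:
\begin{itemize}
\item keeping the threshold $Q(1-\delta)$, which collects at least $(1-\delta)x_{\ge k}$;
\item switching to the threshold $\frac12 z_{\ge k}$, which collects at least $\frac12 z_{\ge k}$ by the classical half-mean prophet inequality applied to the residual boxes.
\end{itemize}
Hence
$$\E[\ALG]\ge (1-\delta)^2\Bigl[(x_{\ge1}-x_{\ge k})+\max\{x_{\ge k},\tfrac12 z_{\ge k}\}\Bigr].$$

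\textbf{OPT side.} Lemma~\ref{lem:Optimal Online Upper Bound} gives
$$\E[\OPT]\le \frac{x_{\ge1}-x_{\ge k}}{1-\delta}+\max\{Q(\delta),z_{\ge k}\}+O(\delta).$$
By Lemma~\ref{lem:CDF Lemma} and Claims~\ref{cl:Q Upper Bound} and~\ref{cl:FR Upper Bound}, $Q(\delta)\le \frac12+2\delta^2$ and $x_{\ge1}\in[\frac12-\delta,\frac12+\delta]$. Setting $x=x_{\ge k}$ and $z=z_{\ge k}$, and using $x_{\ge1}\approx\frac12$, it suffices, up to $O(\delta)$ error, to show
$$\frac{\frac12-x+\max\{x,\frac z2\}}{\frac12-x+\max\{\frac12,z\}}\ge\frac{2}{\alpha+3},$$
subject to $0\le x\le \min\{z,\frac12\}$ and $\frac{1-\alpha}{2}\le z\le\frac{1+\alpha}{2}$.

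\textbf{Main obstacle.} The crux is this two-variable minimization, and in particular showing that the worst case sits at the corner $z=\frac{1+\alpha}{2}$ (the upper end of the window).
\begin{itemize}
\item If $z\le\frac12$, the denominator is $1-x$. The numerator is at least $\frac12$ when $x\ge z/2$, and equals $\frac12-x+\frac z2\ge\frac12$ otherwise. So the ratio is at least $\frac12/(1-x)$ in the first case and $\frac{1-x+z}{2(1-x)}$ in the second, and both exceed $\frac{2}{\alpha+3}$ since $\alpha<1$.
\item If $z>\frac12$, the ratio is decreasing in $z$ on the branch $x\ge z/2$ and increasing on the other branch. The minimum therefore occurs at $x=z/2$ with $z=\frac{1+\alpha}{2}$, giving
$$\frac{1/2}{\frac12-\frac z2+z}=\frac{1}{1+z}=\frac{2}{3+\alpha}.$$
\end{itemize}
The role of the assumption $\alpha>\frac34$ should be checked while doing this case analysis. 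The remaining work is routine: turn the $(1-\delta)^2$ factors and additive $O(\delta)$ terms into the stated $-16\delta$ loss, using that $\E[\OPT]\ge\frac12$.

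Two points in the algorithm-side bound need care and are likely to be the fiddliest part of the write-up. The first is the independence argument when combining phases. The second is justifying that the residual threshold $\frac12 z_{\ge k}$ earns half the residual prophet even conditioned on not having stopped earlier. This holds because box values are independent and the stopping event depends only on earlier boxes, so the bound only loses a factor $\Pr[\text{reach }k]\ge1-\delta$.
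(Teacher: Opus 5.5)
\paragraph{Verdict.} Your plan follows the paper's proof step for step, but the final two-variable minimization, which you rightly call the crux, is argued incorrectly. The conclusion $\frac{2}{3+\alpha}$ is right, but the case analysis as written does not establish it.

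\paragraph{What matches the paper.} You use the same switching index and the same lower bound on $\E[\ALG]$, the better of keeping $Q(1-\delta)$ and switching to $\frac12 z_{\ge k}$; your $(1-\delta)^2$ form is a harmless weakening. You apply Lemma~\ref{lem:Optimal Online Upper Bound} at the switching index and derive the window $z_{\ge k}\in[\frac{1-\alpha}{2},\frac{1+\alpha}{2}]$. You should add that $k$ exists because $z_{\ge n}=\E[v_n]\le\alpha<\frac{1+\alpha}{2}$.

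\paragraph{The case $z\le\frac12$.} Your claim that both expressions exceed $\frac{2}{\alpha+3}$ ``since $\alpha<1$'' does not hold. In fact $\alpha<1$ gives $\frac{2}{3+\alpha}>\frac12$, and $\frac{1}{2(1-x)}\ge\frac{2}{3+\alpha}$ holds iff $x\ge\frac{1-\alpha}{4}$. You only get that from $x\ge\frac z2\ge\frac{1-\alpha}{4}$.

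On the branch $x<\frac z2$ the ratio is $\frac{1+z-2x}{2(1-x)}$, not $\frac{1-x+z}{2(1-x)}$. It is nonincreasing in $x$ because $z\le1$, so it is at least $\frac{1}{2-z}$. Again $\frac{1}{2-z}\ge\frac{2}{3+\alpha}$ holds only because $z\ge\frac{1-\alpha}{2}$.

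Equality is attained at $(x,z)=(\frac{1-\alpha}{4},\frac{1-\alpha}{2})$. So the lower corner is exactly as tight as the upper one, and the worst case does not sit only at $z=\frac{1+\alpha}{2}$. This lower corner is where the no-major-box hypothesis actually matters, since it is what yields $z_{\ge k}\ge\frac{1-\alpha}{2}$.

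\paragraph{The case $z>\frac12$.} Your monotonicity claim is also wrong here. On the branch $x\le\frac z2$, write $N=\frac12-x+\frac z2$ and $D=\frac12-x+z$. Then $\partial_z(N/D)=\frac{2x-1}{4D^2}\le0$, so the ratio is decreasing, not increasing, in $z$. In any case, monotonicity in $z$ alone cannot place the minimizer on the diagonal $x=\frac z2$.

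\paragraph{The fix.} Use the argument of Claim~\ref{cl:Function Optimization}, starting with monotonicity in $x$. The ratio is nondecreasing in $x$ for $x\ge\frac z2$ and nonincreasing for $x\le\frac z2$; for example, $\partial_x(N/D)=-\frac{z}{2D^2}$ when $z\ge\frac12$. Hence the minimum lies on $x=\frac z2$. There the ratio equals $g(z)=\frac{1}{2-z}$ for $z\le\frac12$ and $g(z)=\frac{1}{1+z}$ for $z\ge\frac12$. Since $g$ increases up to $z=\frac12$ and decreases afterwards, its minimum over $[\frac{1-\alpha}{2},\frac{1+\alpha}{2}]$ is at an endpoint. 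Both endpoints give exactly $\frac{2}{3+\alpha}$, which is why the switching level $\frac{1+\alpha}{2}$ is symmetric to $\frac{1-\alpha}{2}$ about $\frac12$. With this replacement, the rest of your plan goes through as in the paper, including absorbing the $(1-\delta)$ factors and $O(\delta)$ terms into the $-16\delta$ loss.
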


\begin{proof}
    Let us denote the index of the switching step in Algorithm~\ref{alg:No Major Box Algorithm}, where the condition in line~\ref{code:remaining check} isn't satisfied, as $i$. 
    Such a step $i$ is guaranteed to exist since for the last box $n$, we have $z_{\ge n}=z_{n}\le\alpha<\frac{1+\alpha}{2}$.

    We first give a lower bound for the expectation of $\ALG$. Notice in the branching at Line~\ref{code:choose threshold}, if Line~\ref{code:keep threshold} is reached then the algorithm is a single-threshold algorithm guaranteeing
    \begin{align}\label{eq:single threshold guarantee}
        \E[\ALG]=\E[\STA{\tau}] &\ge \tau\Pr[Z\ge \tau]+\Pr[Z<\tau]\E[(Z-\tau)^+] \nonumber\\
        &=\tau\Pr[Z\ge \tau]+\Pr[Z<\tau]\left(\E[Z\cdot\ind{Z\ge\tau}]-\tau\Pr[Z\ge\tau]\right) \nonumber\\
        &=\tau\Pr[Z\ge \tau]^2+\Pr[Z<\tau]\E[Z\cdot\ind{Z\ge\tau}] \nonumber\\
        &\ge\Pr[Z<\tau]\E[Z\cdot\ind{Z\ge\tau}] \nonumber\\
        &=(1-\delta)x_{\ge 1},
    \end{align}
    where the first inequality is a known revenue + surplus decomposition, and the last equality holds because $\tau=Q(1-\delta)$.

    If Line~\ref{code:take half} is reached, then up until step $i$ (not included), by Claim~\ref{cl:General FR Additiveness} (assigning $X=\max_{1\leq j\leq i-1}v_j$, $Y=\max_{i\le j\le n}v_j$ and $T=Q(1-\delta)$), the algorithm achieves at least $(1-\delta)(x_{\ge 1}-x_{\ge i})$. The algorithm stops before step $i$ with probability of at most $\delta$. From step $i$, by taking half of the residual prophet's expectation as a threshold, the algorithm obtains at least $\frac12 z_{\geq i}$ \citep{kleinberg2012matroid}. Overall, in this branch, the algorithm guarantees $\E[\ALG]\ge(1-\delta)(x_{\ge 1}-x_{\ge i})+(1-\delta)\frac12z_{\ge i}.$
    Since the condition of Line~\ref{code:choose threshold} is choosing the maximum of these two guarantees, the algorithm overall obtains at least: $$\E[\ALG]\ge\max\{(1-\delta)(x_{\ge 1}-x_{\ge i})+(1-\delta)\frac12z_{\ge i},(1-\delta)x_{\ge 1}\}=(1-\delta)\left(x_{\ge 1} + \left[\frac{z_{\ge i}}{2}-x_{\ge i}\right]^+\right).$$
    By the upper bound on $\E[\OPT]$ in Lemma~\ref{lem:Optimal Online Upper Bound}, it holds that
    \begin{align}\label{eq:ratio-lem2}
        \frac{\E[\ALG]}{\E[\OPT]}
    &\ge \frac{x_{\ge 1} + \left[\frac{z_{\ge i}}{2}-x_{\ge i}\right]^+}{\frac{x_{\ge 1}-x_{\ge i}}{1-\delta}+\max\{Q(\delta), z_{\ge i}\}+\frac{4\delta}{1-\delta}}(1-\delta) \nonumber\\
    &\ge \frac{\frac12+\left[\frac{z_{\ge i}}{2}-x_{\ge i}\right]^+-\delta}{\frac12-x_{\ge i}+\max\{\frac12,z_{\ge i}\}+6\delta}(1-\delta)^2 \nonumber\\
    &\ge \frac{\frac12+\left[\frac{z_{\ge i}}{2}-x_{\ge i}\right]^+}{\frac12-x_{\ge i}+\max\{\frac12,z_{\ge i}\}}-16\delta,
    \end{align}
    where the second inequality holds by Lemma~\ref{lem:CDF Lemma}, Claims \ref{cl:FR Upper Bound} and \ref{cl:Q Upper Bound} and since $\delta<\frac14$, and the third inequality holds for $0\le x_{\ge i}\le z_{\ge i}\le1$.

    We next show that $z_{\ge i}\in[\frac{1-\alpha}{2},\frac{1+\alpha}{2}]$.
    Consider the sequence $z_{\ge 1},\ldots,z_{\ge n}$. This sequence is non-increasing with $z_{\ge 1}=1$, where each difference between $z_{\geq j-1}$ and $z_{\geq j}$ is bounded by $z_j$ which is at most $\alpha $ (since there are no major-boxes).
    By definition of $i$, it holds that $z_{\ge i-1}>\frac{1+\alpha}{2}\ge z_{\ge i}$, which implies that $z_{\ge i}\in[\frac{1-\alpha}{2},\frac{1+\alpha}{2}]$. 

    By Claim~\ref{cl:Function Optimization}, and since $z_{\ge i}\in[\frac{1-\alpha}{2},\frac{1+\alpha}{2}]$, we have that the minimum of the RHS of Equation~\eqref{eq:ratio-lem2} must be either at $(x_{\ge i},z_{\ge i})=( \frac{1-\alpha}{4},\frac{1-\alpha}{2})$ or at $(x_{\ge i},z_{\ge i})=(\frac{1+\alpha}{4},\frac{1+\alpha}{2})$.
       The value at these two points is $$\frac{2}{\alpha+3}-16\delta,$$ concluding the proof.
\end{proof}

\paragraph{There is a Major-Box.} We now handle instances where there is a major box.  
We note that our chosen parameters ($\alpha$ and $\delta$) ensure that there can only be one major-box (See  Claim~\ref{cl:one major box} in the appendix).
As in the ``no major-box case'', our algorithm first takes only high values above $Q(1-\delta)$ as long as the major-box hasn't passed (by step $i-1$), and the expectation of the remaining boxes (excluding the major-box) is high enough (above a parameter $\beta$). 
If the major-box arrives, the algorithm changes to the better of the two thresholds of $\frac12 z_{\geq i}$ and $Q(1-\delta)$ (as in Algorithm~\ref{alg:No Major Box Algorithm}). If the expectation of the remaining boxes (excluding the major-box) reaches a low enough value, then the algorithm switches to a less risky threshold of $\sqrt{\alpha \beta}$.
This threshold obtains more of the value of the major-box (including values below $Q(1-\delta)$) without increasing the probability of selecting one of the non major-boxes too much (which might lead to losing the potential high values of the major-box).

\begin{lemma}\label{lem:With Major Box Algorithm Bound}
    Given parameters $\delta\in(0,0.02)$, $\alpha\in(\frac34,1)$, and $\beta\in(0,\frac\alpha4)$, if the prophet's distribution is \textit{$\delta$-threshold-hard} and $\max_i z_i \ge \alpha$, then Algorithm~\ref{alg:With Major Box Algorithm} achieves $$\frac{\E[\ALG]}{\E[\OPT]}\ge\min\left\{\frac{1}{2-\beta}-16\delta, \alpha+2\beta-2\sqrt{\alpha\beta}\right\}.$$
    \begin{algorithm}
        \caption{Major-Box Algorithm}\label{alg:With Major Box Algorithm}
        \begin{algorithmic}[1]
            \State Set $m \gets \arg\max_j \E_{v\sim F_j}[v]$. \label{code:major-box notation}
            \State Set the current threshold to $\tau \gets Q(1-\delta)$.
            \While{$\pi^{-1}(i-1) \neq m$ \textbf{and} $\E[\max_{j\ge i\wedge j\neq \pi(m)} v_j] > \beta$} \label{code:re-evaluate condition}
                \State Observe the next box $i$.
                \If{$v_i \ge \tau$}
                    \State Accept reward $i$ and stop.
                \EndIf
            \EndWhile 
            
            \If {$\pi^{-1}(i-1)=m$} \label{code:major-box passed}
                \If{$\frac12 z_{\ge i}\ge x_{\ge i}$}
                    \State Set the current threshold to $\tau \gets \frac12 z_{\ge i}$.
                \Else
                    \State Keep the current threshold $\tau \gets Q(1-\delta)$.
                \EndIf
            \Else
                \State Set the current threshold to $\tau \gets \sqrt{\alpha\beta}$.
            \EndIf
            \State Continue observing boxes, accepting the first reward $i$ where $v_i \ge \tau$.
        \end{algorithmic}
    \end{algorithm}
\end{lemma}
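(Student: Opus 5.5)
The proof will split according to which condition of Line~\ref{code:re-evaluate condition} stops the while loop. Let $m$ be the major-box, so $z_m\ge\alpha$, and let $i$ be the step at which the loop exits. Two events are possible.
\begin{itemize}
\item \textbf{Event (a):} the major-box arrived at step $i-1$ while the residual non-major expectation was still above $\beta$.
\item \textbf{Event (b):} the residual non-major expectation $\E[\max_{j\ge i, j\ne m}v_j]$ dropped to at most $\beta$ before $m$ arrived.
\end{itemize}
Since the order $\pi$ is fixed, exactly one of these happens. I will bound $\E[\ALG]/\E[\OPT]$ separately in each case, and the minimum of the two bounds gives the lemma.

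\textbf{Case (a).} I will follow the proof of Lemma~\ref{lem:No Major Box Algorithm Bound} almost verbatim. Claim~\ref{cl:General FR Additiveness} and the single-threshold bound~\eqref{eq:single threshold guarantee} give
$$\E[\ALG]\ge(1-\delta)\left(x_{\ge1}+\left[\tfrac{z_{\ge i}}2-x_{\ge i}\right]^+\right).$$
Lemma~\ref{lem:Optimal Online Upper Bound} at this $i$ then yields the same right-hand side as~\eqref{eq:ratio-lem2}.

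The difference from Lemma~\ref{lem:No Major Box Algorithm Bound} lies in the range of $z_{\ge i}$. Since $m$ has already passed, $z_{\ge i}$ equals the residual non-major prophet, which is at least $\beta$. It is also at most $1-\alpha+O(\delta)$, because the whole prophet is $1$ and $z_m\ge\alpha$ forces the rest to carry little mass. A cleaner route to the upper bound is $z_{\ge i}\le\E[Z]-\E[v_m]+\E[\text{overlap}]$. In either case $z_{\ge i}\in[\beta,\tfrac12]$, using $\alpha>\tfrac34$.

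I will then apply Claim~\ref{cl:Function Optimization} on the domain $z\in[\beta,\tfrac12]$. The minimum should occur at $(x,z)=(\beta/2,\beta)$. At that point the ratio is
$$\frac{\tfrac12}{\tfrac12-\tfrac\beta2+\tfrac12}=\frac1{2-\beta},$$
which after the $16\delta$ loss gives the first term of the lemma.

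\textbf{Case (b).} I will not compare against Lemma~\ref{lem:Optimal Online Upper Bound}. Instead I will use the trivial bound $\E[\OPT]\le\E[Z]=1$ and lower bound $\ALG$ directly. Split the run into two parts.

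\begin{itemize}
\item \textbf{Phase 1 (before step $i$).} The algorithm used threshold $Q(1-\delta)$, so it captures at least $(1-\delta)(x_{\ge1}-x_{\ge i})\ge0$ of free reward. I will lower bound this contribution simply by $0$, or keep it.
\item \textbf{Phase 2 (from step $i$, threshold $t=\sqrt{\alpha\beta}$).} Let $W=\max_{j\ge i,j\ne m}v_j$, so $\E[W]\le\beta$. By Markov, $\Pr[W\ge t]\le\beta/t$.
\end{itemize}

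In Phase 2, with probability at least $1-\beta/t$ no non-major box before $m$ is accepted, so the algorithm reaches $m$. Once there, it obtains $\E[v_m\ind{v_m\ge t}]\ge \E[v_m]-t\ge\alpha-t$. This gives
$$\E[\ALG]\ge(1-\delta)\left(1-\frac\beta t\right)(\alpha-t)=(1-\delta)\left(\alpha-2\sqrt{\alpha\beta}+\beta\right).$$

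This falls short of the stated expression $\alpha+2\beta-2\sqrt{\alpha\beta}$ by exactly $\beta$. The missing $\beta$ must come from the surplus of the non-major boxes above $t$ collected after $m$, or from the revenue earned when a non-major box is accepted (that revenue is at least $t$ times its probability). I will therefore redo the bound as a revenue plus surplus accounting.

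Write $q=\Pr[W\ge t]$. I expect to show
$$\E[\ALG]\ge(1-q)\bigl(\E[v_m]-t\bigr)+q\,t+\ldots$$
and then minimize over $q$ and over the distribution of $W$ subject to $\E[W]\le\beta$. The extremal $W$ should be a two-point distribution, which recovers $\alpha+2\beta-2\sqrt{\alpha\beta}$.

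I also need $\E[\OPT]\le1$ to be compatible with this. Possibly the proof instead uses $\E[\OPT]\le z_m+\beta+(\text{earlier free reward})$ and divides accordingly. I expect this sharp accounting in Case (b) to be the main obstacle: choosing the correct upper bound on $\OPT$ and the worst-case non-major distribution so that the constant comes out exactly.

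Throughout both cases I will use Claim~\ref{cl:one major box} to ensure $m$ is unique. I will absorb the $\delta$-losses using $\delta<0.02$.
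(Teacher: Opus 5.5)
\paragraph{The gap: the upper bound on $z_{\ge i}$ in Case (a).} Your claim that $z_{\ge i}\le 1-\alpha+O(\delta)$, and hence $z_{\ge i}\in[\beta,\tfrac12]$, is false. Write $W$ for the maximum of the boxes after $m$. In the identity $z_{\ge i}=\E[W]=\E[Z]-\E[v_m]+\E[\min\{v_m,W\}]$, the overlap term is not $O(\delta)$. In a $\delta$-threshold-hard instance, about half of $\E[Z]$ sits at the near-deterministic level $Q(\delta)\approx\tfrac12$, and several boxes can carry that level without changing $\E[Z]$.

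Concretely, take two boxes $v_m$ and $W$. Each equals $\tfrac12$ except on rare independent events, on which both take a common huge value $H$. Tune the rare probabilities so that $\E[v_m]=\alpha$ and $\E[W]=\tfrac32-\alpha$, then smooth and renormalize. This gives the following:
\begin{itemize}
\item $Z$ is the canonical hard distribution, so $\T\le\tfrac12+o(1)$, and the instance is $\delta$-threshold-hard once the rare probabilities are $\ll\delta^2$.
\item $v_m$ is a major-box, since $\E[v_m]\ge\alpha$.
\item If $m$ arrives first, Case (a) occurs at $i=2$ with $z_{\ge 2}\approx\tfrac32-\alpha$, which exceeds $\tfrac12$ by a constant.
\end{itemize}
So Claim~\ref{cl:Function Optimization} must be applied on an interval whose right endpoint exceeds $\tfrac12$, and that endpoint has to be checked. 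Your proposal skips this step. Your final value $\frac1{2-\beta}-16\delta$ is correct, but your argument does not justify it.

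\paragraph{How to repair Case (a).} The paper bounds the two parts of $z_{\ge i}=y_{\ge i}+x_{\ge i}$ separately:
\begin{itemize}
\item $y_{\ge i}\le \frac{y_{\ge 1}}{1-\delta}\le\tfrac12+2\delta$, by Claim~\ref{cl:Semi Deterministic Monotonicity} and Lemma~\ref{lem:CDF Lemma}.
\item $x_{\ge i}\le x_{\ge i-1}-(1-\delta)x_m\le x_{\ge 1}+y_{\ge 1}-(1-\delta)z_m\le 1-(1-\delta)\alpha$, by Claim~\ref{cl:General FR Additiveness} together with $y_m\le y_{\ge 1}/(1-\delta)$.
\end{itemize}
Together these give $z_{\ge i}<U:=\tfrac32-\alpha+3\delta$. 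At $(x,z)=(\tfrac U2,U)$ the ratio is $\frac{1}{1+U}=\frac{1}{2.5-\alpha+3\delta}$. This is at least $\frac1{2-\beta}$ if and only if $\alpha-\beta\ge\tfrac12+3\delta$. That holds because $\alpha-\beta>\tfrac34\alpha>\tfrac9{16}>\tfrac12+3\delta$ whenever $\delta<0.02$. This comparison is exactly where the hypotheses $\delta<0.02$ and $\beta<\alpha/4$ enter Case (a).

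\paragraph{Case (b).} This is essentially the paper's argument, and it closes without the extremal-distribution step you anticipate. Let $t=\sqrt{\alpha\beta}$ and $q=\Pr[W\ge t]$. The bound $\E[\ALG]\ge qt+(1-q)(\alpha-t)$ is linear in $q$ with slope $2t-\alpha$, which is negative because $\beta<\alpha/4$. So the Markov bound $q\le\sqrt{\beta/\alpha}$ is the worst case, and it gives $\alpha+2\beta-2\sqrt{\alpha\beta}$. Combine this with $\E[\OPT]\le 1$ and you are done. The paper's bound implicitly conditions on reaching step $i$. Your extra factor $(1-\delta)$, coming from a possible earlier stop, is the more careful version and is harmless for the final theorem.
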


\begin{proof}
    As in Line~\ref{code:major-box notation} of Algorithm~\ref{alg:With Major Box Algorithm}, let us denote $m:= \arg\max_j \E_{v\sim F_j}[v]$ as the index of the distribution of the major-box $v_{\pi(m)}$. This means that the major-box arrives at step $\pi(m)$.
    The algorithm continues as long as $v_{\pi(m)}$ has not passed and $\E[\max_{j\ge i \wedge j \neq\pi(m)} v_j] > \beta$ (Line~\ref{code:re-evaluate condition}).
     One of these conditions must fail before all boxes have passed since if $\pi(m)<n$ the first condition fails at step $\pi(m)+1$, and if $\pi(m)=n$ then the second condition must fail by the last step (included), since $\E[\max_{j\geq n \wedge j \neq\pi(m)} v_j]= 0 < \beta$.
     We further note that these conditions are mutually exclusive. If box $\pi(m)$ has passed at step $i-1$ the remaining expectation stays the same as in the previous step (which was more than $\beta$), and if the expectation of the remaining boxes (excluding $\pi(m)$) decreases below $\beta$, then a box other than $\pi(m)$ has passed.
     We can bound the performance of the algorithm by considering two cases (depending on which of the conditions fails first):
     
    \paragraph{Case 1: $\pi^{-1}(i-1)=m$.}
    As stated, if the algorithm reaches step $i$ where $\pi^{-1}(i-1)=m$ then $z_{\ge i}=\E[\max_{j\ge i} v_j]=\E[\max_{j\ge i-1\wedge j \neq\pi(m)} v_j]>\beta$. In this case, Algorithm~\ref{alg:With Major Box Algorithm} behaves similarly to Algorithm~\ref{alg:No Major Box Algorithm}. Consequently, by the same derivations leading to Inequality~\eqref{eq:ratio-lem2} (from the proof of Lemma~\ref{lem:No Major Box Algorithm Bound}), we have 
    \begin{equation}\label{eq:ratio-lem4}
        \frac{\E[\ALG]}{\E[\OPT]}
        \ge \frac{\frac12+\left[\frac{z_{\ge i}}{2}-x_{\ge i}\right]^+}{\frac12-x_{\ge i}+\max\{\frac12,z_{\ge i}\}}-16\delta.
    \end{equation}
    By Claim~\ref{cl:Function Optimization}, the RHS of Inequality~\eqref{eq:ratio-lem4} is minimized at $x_{\geq i} = \frac12z_{\geq i}$ and lies on the extreme values of $z_{\ge i}$. 
    We next bound $z_{\geq i}$. 
    
    For the lower bound, in this case it holds that $z_{\ge i}>\beta$. Therefore $(x_{\ge i},z_{\ge i})=(\frac12\beta,\beta)$ is a possible minimum point, for which the value of the RHS of Equation~\ref{eq:ratio-lem4} is
    \begin{equation}\label{eq:beta minimum}
        \frac{1}{2-\beta}-16\delta.
    \end{equation}

    We next bound $z_{\ge i}$ from above to show that the other possible minima is always greater, for the given parameters. It holds that
    \begin{equation}\label{eq:z bound}
        z_{\ge i}=y_{\ge i}+x_{\ge i}\le \frac{y_{\ge 1}}{1-\delta}+x_{\ge i} \le \frac{\frac12+\delta}{1-\delta}+x_{\ge i}\le\frac12+x_{\ge i}+2\delta,
    \end{equation}
    where the first inequality is by Claim~\ref{cl:Semi Deterministic Monotonicity} (assigning $X=\max_{i\le j\le n}v_j$, $Y=\max_{1\le<j\le i-1}v_j$ and $T=Q(1-\delta)$), the second inequality is by Lemma~\ref{lem:CDF Lemma}, and the third inequality holds for $\delta<\frac14$.
    
    To bound $x_{\ge i}$ we have that
    \begin{multline}\label{eq:x bound}
        x_{\ge i}\le x_{\ge i-1}-(1-\delta)x_{i-1}=x_{\ge i-1}-(1-\delta)x_{\pi(m)}= x_{\ge i-1}-(1-\delta)(z_{\pi(m)}-y_{\pi(m)}) \\
        \le x_{\ge 1}-(1-\delta)(z_{\pi(m)}-\frac{y_{\ge 1}}{1-\delta}) = x_{\ge 1}+y_{\ge 1}-(1-\delta)z_{\pi(m)} \le 1-(1-\delta)\alpha,
    \end{multline}
    where the first inequality holds by Claim~\ref{cl:General FR Additiveness} (assigning $X=\max_{i\le<j\le n}v_j$, $Y=v_{i-1}$ and $T=Q(1-\delta)$); the first equality holds since in this case $\pi(m)=i-1$;  the second inequality holds since $x_{\ge 1}\ge x_{\ge j}$ for all $j\in[n]$, and since $\frac{y_{\ge 1}}{1-\delta}\ge y_{j}$ by Claim~\ref{cl:Semi Deterministic Monotonicity} (assigning $X=v_j$, $Y=\max_{k\neq j}v_k$ and $T=Q(1-\delta)$); the last inequality holds since $\pi(m)$ is a major-box.

    Thus, it holds from Inequalities~\eqref{eq:z bound} and \eqref{eq:x bound} that $z_{\ge i} < \frac32-\alpha+3\delta$ (since $2+\alpha<3$). 
    We can bound the RHS of Inequality~\eqref{eq:ratio-lem4} by its value at the point $x_{\geq i} = \frac12z_{\geq i}$\footnote{Although this point is infeasible, the value at this point can still be used as a lower bound on the ratio $\frac{\E[\ALG]}{\E[\OPT]}$.} and $z_{\ge i} = \frac32-\alpha+3\delta$ is $$\frac{1}{2.5-\alpha+3\delta}-16\delta.$$
    
    For $\delta<0.02$, $\alpha>\frac34$ and $\beta<\frac\alpha4$, it holds that $\frac{1}{2.5-\alpha+3\delta}>\frac{1}{2-\beta}$. Thus, the identity-truthfulness gap $\frac{\E[\ALG]}{\E[\OPT]}$ in this case is at least $\frac{1}{2-\beta}-16\delta$. This concludes the proof of this case.
    
    \paragraph{Case 2: $\E[\max_{j\ge i\wedge j \neq\pi(m)} v_j]\le \beta$.}
    If the algorithm reaches step $i$, where $\E[\max_{j\ge i\wedge j \neq\pi(m)} v_j] \le \beta$, before $v_{\pi(m)}$ has passed, the threshold is switched to $\sqrt{\alpha\beta}$. Let $p:=\Pr\left[\max_{j\ge i\wedge j \neq\pi(m)} v_j \ge \sqrt{\alpha\beta}\right]$ be the probability that any remaining box other than the major-box is higher than the threshold. By Markov's inequality, it holds that
    \begin{equation}\label{eq:p bound}
        p\le\frac{\E[\max_{j\ge i\wedge j \neq\pi(m)} v_j]}{\sqrt{\alpha\beta}}\le\frac{\beta}{\sqrt{\alpha\beta}}=\sqrt{\frac{\beta}{\alpha}}.
    \end{equation}
    We bound the algorithm's expectation using $p$
    \begin{equation*}
        \E[\ALG] \ge p\sqrt{\alpha\beta}+(1-p)\E\left[v_{\pi(m)}\cdot\ind{v_{\pi(m)}\ge\sqrt{\alpha\beta}}\right].
    \end{equation*}
    We can bound the major-box's contribution in the above inequality as
    \begin{align*}
        \E\left[v_{\pi(m)}\cdot\ind{v_{\pi(m)}\ge\sqrt{\alpha\beta}}\right]&=\E\left[v_{\pi(m)}\right]-\E\left[v_{\pi(m)}\cdot\ind{v_{\pi(m)}<\sqrt{\alpha\beta}}\right] \ge \alpha-\sqrt{\alpha\beta},
    \end{align*}
    to get the following final bound
    \begin{equation*}
        \E[\ALG] \ge p\sqrt{\alpha\beta}+(1-p)(\alpha-\sqrt{\alpha\beta})=\alpha-\sqrt{\alpha\beta}-p(\alpha-2\sqrt{\alpha\beta})\stackrel{\eqref{eq:p bound}}{\ge} \alpha-\sqrt{\alpha\beta}-\sqrt{\frac{\beta}{\alpha}}(\alpha-2\sqrt{\alpha\beta})=\alpha+2\beta-2\sqrt{\alpha\beta},
    \end{equation*}
    where the last inequality holds since $\alpha > 4\beta$ implies $\alpha-2\sqrt{\alpha\beta}>0$.
    Since the optimum online algorithm is trivially bounded by $\E[\OPT]\le 1$, we obtain
    $$
    \frac{\E[\ALG]}{\E[\OPT]}\ge \alpha+2\beta-2\sqrt{\alpha\beta}.
    $$

    Thus, the ratio of the expectations is the minimum of the ratios in the two cases, concluding the proof.
\end{proof}

Using the proposed algorithms, as well as the separation between $\delta$-threshold-hard distributions and non-$\delta$-threshold-hard distributions, we construct a simple decision tree to choose the most suitable algorithm. This allows us to fine-tune the parameters $\alpha$, $\beta$ and $\delta$ to minimize the gap between our upper bound on $\OPT$ and lower bound on $\ALG$.

\maintheorem*
\begin{proof}
    
    Consider Algorithm~\ref{alg:Identity-Truthful Algorithm} with parameters $(\alpha,\ \beta,\ \delta)= (0.854, 0.073, 0.001)$, we show that this algorithm guarantees $$\Gamma(\ALG)\ge 0.500001.$$
    \begin{algorithm}
        \caption{Identity-Truthful Algorithm}\label{alg:Identity-Truthful Algorithm}
        \begin{algorithmic}[1]
            \If {$\sup_p \T(p) \ge \frac12+\delta^2$}
                \State Use a single-threshold algorithm with $\tau=Q(p^*)$, where $p^*=\arg\sup_p \T(p)$. 
            \ElsIf {$\max_{i\in[n]}\E[v_i]\le\alpha$}
                \State Use Algorithm~\ref{alg:No Major Box Algorithm}.\Comment{No Major-Box Algorithm.}
            \Else
                \State Use Algorithm~\ref{alg:With Major Box Algorithm}.\Comment{Major-Box Algorithm.}
            \EndIf
        \end{algorithmic}
    \end{algorithm}
    
    By Claim~\ref{cl:Single Threshold Algorithm Bound}, by Definition~\ref{def:t}, and by Lemmas~\ref{lem:No Major Box Algorithm Bound} and \ref{lem:With Major Box Algorithm Bound}, we obtain that
    $$
    \Gamma(\ALG)=\min\Big(\underbrace{\frac12+\delta^2}_{\text{Claim~\ref{cl:Single Threshold Algorithm Bound}}},\ 
    \underbrace{\frac{2}{\alpha+3}-16\delta}_{\text{Lemma~\ref{lem:No Major Box Algorithm Bound}}},\ 
    \underbrace{\frac{1}{2-\beta}-16\delta,\ 
    \alpha+2\beta-2\sqrt{\alpha\beta}}_{\text{Lemma~\ref{lem:With Major Box Algorithm Bound}}}\Big).
    $$
    The values of $\alpha$, $\beta$ and $\delta$ were chosen through a numerical maximization process over this gap.
    These values satisfy the parameter ranges $\delta\in(0,0.02)$, $\alpha\in(\frac34+\frac52\delta,1)$, and $\beta\in(0,\frac\alpha4)$ (by Claim~\ref{cl:one major box}, $\alpha>\frac34+\frac52\delta$).
\end{proof}

\section{Identity-Truthfulness Gap Hardness}\label{sec:hardness}

In this section, we establish a hardness result of an upper bound on the identity-truthfulness gap for any randomized identity-truthful algorithm.
We show that there is an instance for which the maximal identity-truthfulness gap is strictly lower than the established hardness bound for the order-competitive ratio, 0.829 \citep{chen:2024-settingtargets}.

\thirdtheorem*

\begin{proof}
    Consider an instance consisting of $N+1$ boxes with the following rewards and probabilities:
    \begin{itemize}
        \item $N$ Bernoulli boxes, with a reward of $1$ with probability $\frac1N$, and $0$ otherwise.
        \item One large box, with a reward of $N$ with probability $\frac1N$, and $1$ otherwise.
    \end{itemize}
    For this instance, the online optimum always chooses the large box, since its arrival step is known, thus guaranteeing $2-\frac1N$. For an identity-truthful algorithm, it is always optimal to accept a reward of $N$ if observed and reject $0$. It further holds that the optimal identity-truthful algorithm accepts the first observed reward of $1$ after the large box has passed. The only non-trivial decision the algorithm faces is at what probability it should accept a reward of $1$, when observed before the large box has passed.

    We formulate a linear problem to find the best possible guarantee of the optimal randomized identity-truthful algorithm. Consider the event $A_i$ where the algorithm reaches step $i$ and the previous $i-1$ boxes were revealed to be Bernoulli boxes. Let
    $$\forall_{i\in[N]}: a_i=\Pr[A_i] \text{ and  } p_i=\Pr[\text{accept Box $i$} \mid A_i \wedge v_i =1] \mbox{ and } b_i = a_i\cdot p_i,$$
    meaning $b_i$ is the probability of accepting a value of $1$ conditioned on reaching step $i$, and that $v_i=1$.

    It holds that $0\le b_i\le a_i\le 1$. If the large box arrives after step $i$, then it holds that $a_{i+1}=a_i- \frac1N b_i$, since the algorithm reaches step $i+1$ if and only if (1) $A_i$ occurs (with probability $a_i$), (2) $v_i=1$ (which happens with probability $\frac1N$), and (3) the algorithm did not accept $v_i$ (which happens with probability $p_i$).

    The expected chosen reward of such an algorithm, under arrival order $\pi_k$ (parametrized by $k$) for which the large box is the $k$-th arriving box, is as follows
    \begin{gather*}
        \underbrace{1-a_k}_{\substack{\text{selecting $1$} \\ \text{before step } k}}+
        \underbrace{a_k\left(\frac1N\cdot N + \left(1-\frac1N\right)\cdot p_k\right)}_{\text{expected value from Box $k$}}+
        \underbrace{a_k \left(1-\frac1N\right)\left(1-p_k\right)}_{\substack{\text{probability of reaching $k+1$}} } \cdot\underbrace{\left(1-\left(1-\frac1N\right)^{N+1-k}\right)}_{\substack{\text{expected value from}  \\ \text{boxes after $k$}}} \\
            =1+a_k\left[1-\frac1N-\left(1-\frac1N\right)^{N+2-k}\right] + b_k\left(1-\frac1N\right)^{N+2-k}.
    \end{gather*}
    The following LP gives the maximal possible identity-truthfulness gap any randomized identity-truthful algorithm can achieve:
    \begin{align*}
        \text{maximize} \quad & \Gamma \\
        \text{subject to } \quad & 1+a_i\left[1-\frac1N-\left(1-\frac1N\right)^{N+2-i}\right] + b_i\left(1-\frac1N\right)^{N+2-i}\ge \Gamma\left(2-\frac1N\right), && \forall i \in \{1, \dots, N+1\} \\
        & a_{i+1}=a_i-\frac1N b_i, && \forall i \in \{1, \dots, N\} \\
        & 0\le b_i\le a_i\le1, && \forall i \in \{1, \dots, N+1\}
    \end{align*}

    For $N=114$, it holds that the solution for the LP is $\Gamma^*\approx0.8099 <0.81$ concluding the proof.
\end{proof}

\begin{remark}
    We note that as $N\to\infty$ the identity-truthfulness gap given by the LP approaches approximately $0.80826$.
\end{remark}

\section{Identity-Truthful Non-Monotonicity}\label{sec:pricing}
In this section, we show that under the constraint of identity-truthful mechanisms, there are instances for which the optimal algorithm does not adhere to a pricing mechanism.
In the single choice prophet inequality setting, a dynamic pricing mechanism is a sequence of thresholds $\tau_1,
\ldots,\tau_n$, where $\tau_i$ can depend on the information available to the online algorithm (like the identities of previously arriving boxes), such that at step $i$, $v_i$ is accepted if and only if $v_i\geq \tau_i$. 
Pricing mechanisms are of particular interest because they are common in real-world markets and enjoy strong strategyproofness guarantees. In particular, in a dynamic posted-price mechanism for selling an item, buyers have no incentive to misrepresent their values: each buyer’s optimal strategy is simply to accept the offered price if and only if their value exceeds it.

We present an instance for which any pricing mechanism has in expectation lower guarantees than the optimal online algorithm (with the same information). This holds both for worst-case ordering as well as random order (as in the prophet secretary setting).

\secondtheorem*

\begin{proof}
    Consider an instance consisting of two boxes with the following rewards and probabilities: 
    \[
    \begin{array}{l|cccc|c}
    \text{Reward} & 0 & 1 & 2 & 7 & \text{Expectation} \\ \hline
    \text{Box A} & 0.6 & 0.1 & 0.3 & - & 0.7 \\
    \text{Box B} & - & 0.7 & - & 0.3 & 2.8
    \end{array}
    \]
    It is always optimal to accept $7$ if observed and reject $0$. Moreover, it is always better to accept the value of the second box (if reached). The only non-trivial decision an online algorithm faces is whether to accept a value of $1$ or $2$ when observed on the first arriving box.
    The following table gives the expected value of each of the 4 possible (deterministic) strategies depending on whether to accept the values of $1$ and $2$ when observed in the first box for each of the two possible orders (``AB'' or ``BA'') alongside the worst case order, and the performance under a random order (where both  ``AB'' and ``BA'' are equally likely). 
    \[
    \begin{array}{c|c|c}
     & \text{Accept 1} & \text{Reject 1} \\ \hline
    \text{Accept 2} & 
    \begin{array}{l} 
    \text{``AB''} \colon\ 0.6 \cdot 2.8 + 0.7 = 2.38 \\ 
    \text{``BA''} \colon\ 2.8 \\ 
    \textbf{worst-case } \mathbf{2.38} \\ 
    \textbf{random } \mathbf{2.59} 
    \end{array} & 
    \begin{array}{l} 
    \text{``AB''} \colon\ 0.7 \cdot 2.8 + 0.3 \cdot 2 = 2.56 \\ 
    \text{``BA''} \colon\ 0.3 \cdot 7 + 0.7 \cdot 0.7 = 2.59 \\ 
    \textbf{worst-case } \mathbf{2.56} \\ 
    \textbf{random } \mathbf{2.575} 
    \end{array} \\ \hline
    \text{Reject 2} & 
    \begin{array}{l} 
    \text{``AB''} \colon\ 0.9 \cdot 2.8 + 0.1 \cdot 1 = 2.62 \\ 
    \text{``BA''} \colon\ 2.8 \\ 
    \textbf{worst-case } \mathbf{2.62} \\ 
    \textbf{random } \mathbf{2.71} 
    \end{array} & 
    \begin{array}{l} 
    \text{``AB''} \colon\ 2.8 \\ 
    \text{``BA''} \colon\ 0.3 \cdot 7 + 0.7 \cdot 0.7 = 2.59 \\ 
    \textbf{worst-case } \mathbf{2.59} \\ 
    \textbf{random } \mathbf{2.695} 
    \end{array}
    \end{array}
    \]
    The table above implies that the optimal deterministic identity-truthful algorithm is to accept a value of $1$ and reject a value of $2$ (both for worst-case and for random order).
    This algorithm is not a pricing mechanism (in contrast to the three other analyzed algorithms). We note that even for randomized identity-truthful algorithms, the optimal pricing mechanism obtains $2.66$ (accept 1 \& 2 w.p. $\frac13$, otherwise reject both), while the optimal identity-truthful algorithm obtains approximately $2.7$ (always reject 2; accept 1 w.p. $\frac7{13}$).
\end{proof}

We note that the optimum online algorithm, as well as the optimum order-unaware non-identity-truthful algorithm (i.e., that can use the identity of Box $i$ for the decision whether to accept Box $i$) 
is to always accept box B, thus guaranteeing an expected reward of $2.8$. Thus, for this instance, it holds that $$\text{non-identity-truthful}>\text{identity-truthful}>\text{identity-truthful pricing mechanism}.$$

We also note that for every instance, both the optimum order-aware online algorithm and the optimum order-unaware non-identity-truthful algorithm can always be implemented by a dynamic pricing mechanism. Theorem~\ref{thm:Non-Monotone} states that this is not the case for identity-truthful mechanisms.

Our paper raises two major open questions, namely, whether the optimal identity-truthfulness gap can be achieved through a pricing mechanism, and whether the identity-truthfulness gap is different than the order-competitive ratio. 

\section{Additional Implications of our Techniques}\label{sec:single-threshold}
In this section, we show how our lower bound $\T(p)$  (Definition~\ref{def:t}) can be utilized to show alternative proofs for obtaining a competitive-ratio of $\frac{1}{2}$ for several known single-threshold algorithms. Namely, we prove (Claims~
\ref{cl:median threshold} and \ref{cl:t2} respectively) that $\T(\frac12)\ge\frac12$ (median threshold \cite{ester:1984}) and $\T(Q^{-1}(\frac12)) \ge \frac12$ (half-mean threshold \cite{kleinberg2012matroid}). Claim~\ref{cl:Area Claim} serves as an alternative proof for the competitiveness of using a random prophet sample as the threshold \citep{rubinstein:2020}. We additionally prove (in Claim~\ref{cl:three-quarters threshold}) that $\T(p)\ge\frac12$ for a newly discovered threshold that satisfies $\int_p^1Q(q)dq=\frac34$ \citep{brustle:2026}.
Furthermore, we show (in Claim~\ref{cl:high reward interval}) that the interval of quantiles for which their corresponding thresholds obtain a competitive-ratio of at $\frac12$ is of length at least $\frac12$.
This provides additional insights regarding the interval of thresholds that are $\frac12$-competitive, which is an open question from \citep{waggoner2018prophet}.
A recent independent note by \citet{zhang2026thresholdrulesclassicalprophet} also gives a unified threshold/surplus perspective on several classical single-threshold prophet inequality rules, including the median threshold, the half-mean threshold, and the sample-max threshold.
Our approach is different: we phrase the analysis in terms of the quantile function of the prophet and the function $\T(p)$, which also yields structural information about the set of quantiles whose corresponding thresholds are $\frac12$-competitive.

\begin{claim}\label{cl:median threshold}
    It holds that $\T(\frac12)\ge\frac12$.
\end{claim}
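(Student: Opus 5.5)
The plan is to lower bound $\T(\frac12)=\frac12\int_{1/2}^1 \frac{Q(q)}{q^2}\,dq$ directly, using only that $Q$ is non-negative and non-decreasing and the normalization $\int_0^1 Q(q)\,dq=\E[Z]=1$. The idea is to split the quantile function on $[\frac12,1]$ into a constant part equal to the median $Q(\frac12)$ and an excess part $Q(q)-Q(\frac12)\ge 0$. These mirror the revenue and surplus terms from the proof of Claim~\ref{cl:Single Threshold Algorithm Bound}.

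First, for the constant part, compute exactly:
$$\int_{1/2}^1 \frac{Q(\tfrac12)}{q^2}\,dq = Q(\tfrac12)\left(2-1\right)=Q(\tfrac12).$$
Second, for the excess part, the integrand is non-negative by monotonicity of $Q$, and $\frac1{q^2}\ge 1$ on $[\frac12,1]$. Hence
$$\int_{1/2}^1 \frac{Q(q)-Q(\frac12)}{q^2}\,dq \ge \int_{1/2}^1 Q(q)\,dq-\tfrac12 Q(\tfrac12).$$
Adding the two parts gives
$$\T(\tfrac12)\ge \tfrac12\left(\tfrac12 Q(\tfrac12)+\int_{1/2}^1 Q(q)\,dq\right).$$

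Third, use monotonicity once more on the lower half: $\int_0^{1/2}Q(q)\,dq\le \frac12 Q(\frac12)$. Therefore the bracket is at least $\int_0^{1/2}Q+\int_{1/2}^1 Q=\int_0^1 Q(q)\,dq=1$, which yields $\T(\frac12)\ge\frac12$.

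The only delicate point is choosing the split at $Q(\frac12)$ before applying the crude bound $1/q^2\ge 1$. Applying $1/q^2\ge1$ to all of $Q$ loses too much: it only gives $\frac12\int_{1/2}^1 Q\ge\frac14$. Keeping the exact factor $\int_{1/2}^1 q^{-2}\,dq=1$ on the constant part is what exactly compensates for the lower-half mass $\int_0^{1/2}Q$. The point masses and the possibly infinite $Q(1)$ pose no issue, since all integrands involved are non-negative.
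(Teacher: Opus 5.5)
Your proof is correct and is essentially the paper's argument. The paper splits the weight as $\frac{1}{q^2}=\left(\frac{1}{q^2}-1\right)+1$ and bounds $Q(q)\ge Q(\frac12)$ on the first piece, whereas you split $Q(q)=Q(\frac12)+\left(Q(q)-Q(\frac12)\right)$ and use $\frac{1}{q^2}\ge 1$ on the excess; both steps are the same inequality $\int_{1/2}^1 \left(Q(q)-Q(\frac12)\right)\left(\frac{1}{q^2}-1\right)dq\ge 0$, both reach $\T(\frac12)\ge\frac12\left(\frac12 Q(\frac12)+\int_{1/2}^1 Q(q)\,dq\right)$, and both finish with $\int_0^{1/2}Q(q)\,dq\le\frac12 Q(\frac12)$.
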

\begin{proof}
    \begin{align*}
        \T(\frac{1}{2})=\frac{1}{2}\int_{\frac{1}{2}}^1 \frac{Q(q)}{q^2}dq 
        &=\frac{1}{2}\int_{\frac{1}{2}}^1 Q(q)\left[\frac{1}{q^2}-1\right]dq+\frac{1}{2}\int_{\frac{1}{2}}^1 Q(q)dq \\
        &\ge\frac{Q(\frac{1}{2})}{2}\int_{\frac{1}{2}}^1\left[\frac{1}{q^2}-1\right]dq+\frac{1}{2}\int_{\frac{1}{2}}^1 Q(q)dq \\
        &=\frac12\left(\frac{Q(\frac{1}{2})}{2}+\int_{\frac{1}{2}}^1 Q(q)dq\right) \\
        &\ge\frac{1}{2}\int_0^1 Q(q)dq=\frac12,
    \end{align*}
    where the third equality holds since $\int_{\frac{1}{2}}^1\left[\frac{1}{q^2}-1\right]dq=\frac12$, and the first and second inequalities hold since $Q$ is non-decreasing. For example, the second inequality holds since $\frac12Q(\frac12)\ge\int_0^{\frac12}Q(q)dq$.
\end{proof}

\begin{claim} \label{cl:t2}
    For $p^*$ such that $Q(p^*)=\frac12$ it holds that $\T(p^*)\ge\frac12$.
\end{claim}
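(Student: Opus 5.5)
The plan is to split $\T(p^*)$ into a ``revenue'' part and a ``surplus'' part, mirroring the decomposition in the proof of Claim~\ref{cl:Single Threshold Algorithm Bound}. We then lower bound the surplus part using only $q\le 1$ and the normalization $\int_0^1 Q(q)\,dq=\E[Z]=1$. Throughout, write $p=p^*$ and $\tau=Q(p^*)=\frac12$.

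First, the revenue part can be pulled out exactly. Since $p\int_p^1 \frac{\tau}{q^2}\,dq=\tau(1-p)$, we have
\begin{equation*}
\T(p)=\tau(1-p)+p\int_p^1\frac{Q(q)-\tau}{q^2}\,dq .
\end{equation*}
Second, we bound the surplus integral. For $q\ge p$, monotonicity of $Q$ gives $Q(q)-\tau\ge 0$, and $\frac{1}{q^2}\ge 1$ on $(0,1]$. Hence
\begin{equation*}
\int_p^1\frac{Q(q)-\tau}{q^2}\,dq\ \ge\ \int_p^1 (Q(q)-\tau)\,dq .
\end{equation*}
Third, extend this integral to all of $[0,1]$. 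On $[0,p]$ we have $Q(q)-\tau\le 0$, so adding that piece can only decrease the value:
\begin{equation*}
\int_p^1 (Q(q)-\tau)\,dq\ \ge\ \int_0^1 (Q(q)-\tau)\,dq=\E[Z]-\tau=\tfrac12 .
\end{equation*}
Combining the three steps gives
\begin{equation*}
\T(p^*)\ \ge\ \tfrac12(1-p)+p\cdot\tfrac12=\tfrac12 .
\end{equation*}

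None of these steps is really an obstacle; the only point requiring care is the choice of decomposition. Bounding $\frac{1}{q^2}\ge 1$ directly on $Q(q)/q^2$, rather than on $(Q(q)-\tau)/q^2$, loses too much. The key is to subtract $\tau$ first, so that the integrand is nonnegative and the resulting surplus $\E[(Z-\tau)^+]\ge \E[Z]-\tau$ exactly complements the revenue term $\tau(1-p)$. This works for every $p$, because the surplus bound $\frac12$ is multiplied by $p$ while the revenue is $\frac12(1-p)$.
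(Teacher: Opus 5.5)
Your proof is correct and follows essentially the same route as the paper's. Both split $\T(p^*)$ into the revenue term $(1-p^*)Q(p^*)$ plus $p^*\int_{p^*}^1 \frac{Q(q)-Q(p^*)}{q^2}\,dq$, use $\frac{1}{q^2}\ge 1$, and then use $\int_0^{p^*}Q\le p^*Q(p^*)$ with $\int_0^1 Q=1$; the paper phrases that last step as $\int_{p^*}^1 Q(q)\,dq\ge(2-p^*)Q(p^*)$, which is the same inequality as your extension-to-$[0,1]$ step.
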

\begin{proof}
    We first notice that
    \begin{equation}\label{eq:Q integral}
        2Q(p^*)=1=\int_0^1Q(q)dq\le p^*Q(p^*)+\int_{p^*}^1Q(q)dq\implies\int_{p^*}^1Q(q)dq\ge(2-p^*)Q(p^*).
    \end{equation}
    Therefore, we have that
    \begin{align*}
        \T(p^*)=p^*\int_{p^*}^1 \frac{Q(q)}{q^2}dq 
        &=p^*\int_{p^*}^1 \frac{Q(q)-Q(p^*)}{q^2}dq+p^*\int_{p^*}^1 \frac{Q(p^*)}{q^2}dq \\
        &\ge p^*\int_{p^*}^1\left(Q(q)-Q(p^*)\right)dq+(1-p^*)Q(p^*) \\
        &= p^*\int_{p^*}^1Q(q)dq-p^*(1-p^*)Q(p^*)+(1-p^*)Q(p^*) \\
        &\ge p^*(2-p^*)Q(p^*)+(1-p^*)^2Q(p^*) \\
        &= Q(p^*),
    \end{align*}
    where the first inequality holds since $\frac1{q^2}\ge1$ and $x\int_{x}^1 \frac{1}{t^2}dt=1-x$, the second inequality holds by Inequality~\eqref{eq:Q integral} and the last equality holds since $x(2-x)+(1-x)^2=1$. By the definition of $p^*$, it holds that $\T(p^*)\ge Q(p^*)=\frac12$, which concludes the proof.
\end{proof}

\citet{brustle:2026} recently showed a new $\frac12$-competitive threshold $t$ that satisfies $\E[Z\cdot\ind{Z\ge t}] = \frac34$. We next show how to derive this result using our function $\T$.
\begin{claim} \label{cl:three-quarters threshold}
   For $p^*$ such that $\int_{p^*}^1Q(q)dq=\frac34$ it holds that $\T(p^*)\ge\frac12$.
\end{claim}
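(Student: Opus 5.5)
The proof follows the same strategy as the proof of Claim~\ref{cl:t2}: subtract a constant lower bound on $Q$ over $[p^*,1]$, bound $\frac{1}{q^2}\ge 1$ on the nonnegative remainder, and finish with a one-variable AM--GM inequality. Throughout, $p^*$ is fixed with $\int_{p^*}^1 Q(q)dq=\frac34$. Since $\int_0^1 Q(q)dq=\E[Z]=1$, this gives $\int_0^{p^*} Q(q)dq=\frac14$.

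First, I extract a lower bound on $Q(p^*)$ from monotonicity of $Q$:
$$\tfrac14=\int_0^{p^*}Q(q)dq\le p^*Q(p^*),\qquad\text{so}\qquad Q(p^*)\ge c:=\frac{1}{4p^*}.$$
Hence $Q(q)\ge c$ for every $q\ge p^*$. (In particular $p^*>0$, and monotonicity on $[p^*,1]$ gives $(1-p^*)c\le \frac34$, i.e.\ $p^*\ge\frac14$. This is not needed below but confirms consistency.)

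Second, I split $\T(p^*)$ into a surplus part and a constant part:
\begin{align*}
\T(p^*)&=p^*\int_{p^*}^1\frac{Q(q)-c}{q^2}dq+c\,p^*\int_{p^*}^1\frac{dq}{q^2}\\
&\ge p^*\int_{p^*}^1\bigl(Q(q)-c\bigr)dq+c(1-p^*).
\end{align*}
The inequality uses that $Q(q)-c\ge 0$ on $[p^*,1]$ and $\frac1{q^2}\ge 1$, together with the identity $p^*\int_{p^*}^1 q^{-2}dq=1-p^*$. Substituting $\int_{p^*}^1Q=\frac34$ and $c=\frac{1}{4p^*}$ gives
$$\T(p^*)\ge p^*\Bigl(\tfrac34-c(1-p^*)\Bigr)+c(1-p^*)=\tfrac34p^*+c(1-p^*)^2=\tfrac34p^*+\frac{(1-p^*)^2}{4p^*}.$$

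Third, I simplify the right-hand side:
$$\tfrac34 p^*+\frac{(1-p^*)^2}{4p^*}=\frac{1}{4p^*}-\frac12+p^*.$$
By AM--GM, $p^*+\frac{1}{4p^*}\ge 2\sqrt{\tfrac14}=1$, with equality at $p^*=\frac12$. Therefore the right-hand side is at least $1-\frac12=\frac12$, which gives $\T(p^*)\ge\frac12$.

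The main obstacle is choosing the constant to subtract in the second step. It must make $Q-c$ nonnegative on $[p^*,1]$, so that $\frac{1}{q^2}\ge1$ may be applied to the remainder. It must also be tight enough that the final one-variable bound reaches exactly $\frac12$. The choice $c=\frac{1}{4p^*}$, coming from the lower-tail mass $\frac14$, does both. It corresponds to the extremal instance in which $Q$ is constant on $[p^*,1)$ and the remaining expectation sits as a free reward at the top quantile.
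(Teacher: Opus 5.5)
Your proof is correct and follows the paper's argument essentially step for step. The only cosmetic differences are these: the paper subtracts $Q(p^*)$ rather than $c=\frac{1}{4p^*}$ and invokes $Q(p^*)\ge\frac{1}{4p^*}$ only at the end, reaching the same bound $\frac34p^*+\frac{(1-p^*)^2}{4p^*}$; and your AM--GM step spells out the final one-variable inequality, which the paper simply asserts.
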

\begin{proof}
    We first notice that
    \begin{equation}\label{eq:Q integral 3/4}
        1=\int_0^1Q(q)dq\le p^*Q(p^*)+\int_{p^*}^1Q(q)dq=p^*Q(p^*)+\frac34\implies Q(p^*)\ge\frac{1}{4p^*}.
    \end{equation}
    Therefore, we have that
    \begin{align*}
        \T(p^*)=p^*\int_{p^*}^1 \frac{Q(q)}{q^2}dq 
        &=p^*\int_{p^*}^1 \frac{Q(q)-Q(p^*)}{q^2}dq+p^*\int_{p^*}^1 \frac{Q(p^*)}{q^2}dq \\
        &\ge p^*\int_{p^*}^1\left(Q(q)-Q(p^*)\right)dq+(1-p^*)Q(p^*) \\
        &= p^*\int_{p^*}^1Q(q)dq-p^*(1-p^*)Q(p^*)+(1-p^*)Q(p^*) \\
        &= \frac34p^*+(1-p^*)^2Q(p^*) \\
        &\ge \frac34p^*+\frac{(1-p^*)^2}{4p^*},
    \end{align*}
    where the first inequality holds since $\frac1{q^2}\ge1$ and $x\int_{x}^1 \frac{1}{t^2}dt=1-x$, the last equality holds by the definition of $p^*$ and the last inequality holds by Inequality~\eqref{eq:Q integral 3/4}. In the range $p^*\in[0,1]$, the expression $\frac34p^*+\frac{(1-p^*)^2}{4p^*} \ge \frac12$, concluding the proof.
\end{proof}

\begin{claim}\label{cl:high reward interval}
    The interval of quantile thresholds that are $\frac12$-competitive is at least of length $\frac12$.
\end{claim}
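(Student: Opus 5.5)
By Claim~\ref{cl:Concavity Claim}, $\T$ is concave, so its superlevel set $I:=\{p\in[0,1]:\T(p)\ge\frac12\}$ is an interval $[a,b]$. It is nonempty, since $\frac12\in I$ by Claim~\ref{cl:median threshold}. The plan is to show $b-a\ge\frac12$ by contradiction, using only the following structural facts about $\T$:
\begin{enumerate}[label=(\alph*)]
\item $\T$ is concave, and $\T(1)=0$.
\item $\int_0^1\T(p)\,dp=\frac12$ (Claim~\ref{cl:Area Claim}).
\item $Q\ge 0$. Via $Q(p)=\T(p)-p\T'(p)$ (Equation~\eqref{eq:Q as T and T'}), this is equivalent to $\T'(p)\le \T(p)/p$, i.e., $p\mapsto \T(p)/p$ is non-increasing.
\end{enumerate}
By Claim~\ref{cl:allQ}, every differentiable concave $\T$ with these properties comes from some quantile function. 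So the claim reduces to a purely geometric statement about such functions, and tightness is witnessed by $Q\equiv1$, where $\T(p)=1-p$ and $I=[0,\frac12]$.

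Assume $L:=b-a<\frac12$ and aim for $\int_0^1\T<\frac12$. Let $M=\T(p_0)$ be the maximum, attained at some $p_0\in[a,b]$. I will upper-bound $\T$ piecewise.
\begin{itemize}
\item On $[0,a]$: $\T<\frac12$.
\item On $[a,b]$: $\T\le M$. In addition, by (c), $\T(p)\le p\,\T(a)/a=p/(2a)$ when $a>0$, which caps $M\le p_0/(2a)$.
\item On $[b,1]$: by concavity, $\T$ lies below the line through $(p_0,M)$ and $(b,\frac12)$, truncated at $0$. By (c), also $\T(p)\le p/(2b)$ there.
\end{itemize}
The resulting envelope is a trapezoid-like profile. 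Computing its area as a function of $(a,b,p_0,M)$ and maximizing subject to $b-a=L$ should give a maximum strictly below $\frac12$ whenever $L<\frac12$. The expected extremal family is $\T(p)=\min\{M,\,M(1-p)/(1-c)\}$ with $M=\frac1{1+c}$. Its superlevel set is $[0,\frac{1+c^2}{2}]$, of length at least $\frac12$, with equality only at $c=0$.

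The main obstacle is getting the upper bound on the region to the right of $b$ tight enough. Naive bounds that use only concavity and $\T<\frac12$ outside $I$ give total area $\frac12$ plus a positive excess, so they yield no contradiction. The constraint $Q\ge0$ (monotonicity of $\T(p)/p$), together with $\T(1)=0$, is what prevents a tall, narrow peak. I would first treat the case $a=0$, where (c) gives no information on the left. There the tangent at $b$ has slope at least $-\frac{1}{2(1-b)}$, and I would combine this with the bound $M(1+c)/2$-type area computation to force $b\ge\frac12$. Then I would handle $a>0$ using $M\le p_0/(2a)$.

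If this optimization becomes messy, a fallback is the pointwise bound obtained as in the proof of Claim~\ref{cl:t2}:
\[
\T(p)\ \ge\ p\int_p^1Q-p(1-p)Q(p)+(1-p)Q(p)\ \ge\ p+(1-2p)Q(p),
\]
which gives $\T(p)\ge\frac12$ whenever $p\le\frac12$ and $Q(p)\ge\frac12$, or $p\ge\frac12$ and $Q(p)\le\frac12$. I would combine this with $\T(p)\ge(1-p)Q(p)$ and case analysis on the position of $p^*=Q^{-1}(\frac12)$ relative to $\frac12$ to exhibit explicitly two points of $I$ at distance at least $\frac12$.
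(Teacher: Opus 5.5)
There is a genuine gap: the envelope you specify does not force the area below $\frac12$ when $b-a<\frac12$, so your optimization step is false as stated. The problem is $[0,a]$, where you use only $\T<\frac12$; condition (c) gives only a lower bound $\T(p)\ge p/(2a)$ there. For instance, take $a=0.4$, $b=0.8$, $p_0=0.6$, $M=0.75$, which satisfies your cap $M\le p_0/(2a)$. Your envelope then has area $0.2$ on $[0,a]$, $\int_{0.4}^{0.8}\min\{0.75,\,p/0.8\}\,dp=0.275$ on $[a,b]$, and $\int_{0.8}^{1}\bigl(0.5-1.25(p-0.8)\bigr)\,dp=0.075$ on $[b,1]$. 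The total is $0.55>\frac12$, so no contradiction arises. What is missing is concavity to the \emph{left} of $a$.

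Your diagnosis of the obstacle is also off. Condition (c) does not rule out tall narrow peaks, and it is not needed at all. The two-point prophet $Z\in\{0,\frac1{1-p_0}\}$ with $\Pr[Z=0]=p_0$ gives the tent $\T(p)=\min\{p/p_0,(1-p)/(1-p_0)\}$ of height $1$. Its superlevel set $[\frac{p_0}2,\frac{1+p_0}2]$ has length exactly $\frac12$ for every $p_0\in(0,1)$. So such peaks are extremal rather than excluded, and your trapezoid family does not capture all equality cases.

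The paper's proof uses only concavity, $\int_0^1\T=\frac12$, and $\T(1)=0$. It bounds $\T$ by its tangent at $a$ on $[0,2a]$, which is legitimate since $a\le\frac12$ by the median claim, and by its tangent at $b$ on $[2a,1]$. Because $\int_0^{2a}(p-a)\,dp=0$ and $a(\T(a)-\frac12)=0$, the first piece contributes exactly $a$. This yields $0\le\frac12\T'(b)(1-2a)(1+2a-2b)$. If $\T'(b)=0$, then $\T\le\frac12$ with area $\frac12$ forces $\T\equiv\frac12$, contradicting $\T(1)=0$; so $\T'(b)<0$. With $a=\frac12$ handled separately, this gives $b-a\ge\frac12$. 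Your case $a=0$ is just the special case $\frac12\le\frac12+\T'(b)(\frac12-b)$.

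Your fallback cannot replace this argument. On the tent above, any two points of $I$ at distance $\frac12$ must be its endpoints. But at $p=\frac{p_0}2$ one has $Q(p)=0$, so $\T(p)\ge p+(1-2p)Q(p)$ and $\T(p)\ge(1-p)Q(p)$ give only $\frac{p_0}2$ and $0$. Pointwise lower bounds that are not tight on tents cannot certify an interval of length $\frac12$.
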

\begin{proof}
   Let $a= \min \{ x\geq 0 \mid  \T(x) \geq \frac12\}$, and let  $b= \max \{ x\leq 1 \mid  \T(x) \geq \frac12\}$. $a$ and $b$ are well defined since there must exist a quantile threshold (e.g., the median) for which $\T(x)\geq \frac12$ and since we consider atomless distributions (which is without loss of generality).
   Since $\T$ is concave (Claim~\ref{cl:Concavity Claim}), then $[a,b]$ is the interval of $\frac{1}{2}$-competitive quantile thresholds.  We need to show that  $b-a\ge\frac12$. It holds that \begin{equation}
       a\left(\T(a)-\frac12\right)= 0,\label{eq:ata}
   \end{equation}
   since if $\T(0)\ge\frac12$ then $a=0$ and otherwise ($\T(0)<\frac12$) then $a>0$ which implies that $\T(a)=\frac12$ (by continuity of $\T$). Thus, either $a=0$ or $\T(a)=\frac12$. Lastly, it holds that $\T(b)=\frac12$ by continuity of $\T$ and since $\T(1)=0$.

    Consider the area enclosed under the two tangent lines at $p=a$ and $p=b$. By concavity and by Claim~\ref{cl:Area Claim}, this area must satisfy
    \begin{equation*}
        \frac12=\int_0^1\T(p)dp\le\int_0^1\min\{\T(a)+(p-a)\T'(a), \T(b)+(p-b)\T'(b)\}dp.
    \end{equation*}
    By subtracting half from both sides, we have
    \begin{align}
        0&\le \int_0^1\min\{\T(a)-\frac12+(p-a)\T'(a), (p-b)\T'(b)\}dp \nonumber\\
        &\le \int_0^{2a}\left(\T(a)-\frac12+(p-a)\T'(a)\right)dp + \int_{2a}^1(p-b)\T'(b)dp \nonumber\\
        &= 2a\left(\T(a)-\frac12\right)+\T'(b)\left(\frac12 -b-2a^2+2ab\right)  \nonumber\\
        &=\frac12\T'(b)(1-2a)(1+2a-2b), \label{eq:ab}
    \end{align}
    where the last equality holds by Equation~\eqref{eq:ata}.
    We note that $a\le \frac12$ since the median is $\frac12$-competitive, and $\T'(b)\le 0$ since $\T(b+\epsilon) < \T(b) $ for every $\epsilon>0$. Therefore, Inequality~\eqref{eq:ab} implies that either $a=\frac12$, or $\T'(b)=0$, or $2a-2b+1\le 0$ (which implies $b-a\ge \frac12$).

   When $a=\frac12$, consider only the tangent line at $p=a$, of $\T(a)+(p-a)\T'(a)$. For this case, $\T(a)=\frac12$, and the area under the tangent $\int_0^1\left(\frac12+(p-\frac12)\T'(a)\right)dp=\frac12$. By Claim~\ref{cl:Area Claim}, the area under $\T$ equals $\frac12$, therefore $\T(p)$, which is bounded by the tangent line, must equal the tangent line. This is a straight line passing through $\T(\frac12)=\frac12$, therefore, the interval of quantiles above $\frac12$ is of length at least $\frac12$.

   When $\T'(b)=0$, consider only the tangent line at $p=b$, of $\T(b)+(p-b)\T'(b)$. For this case, the tangent line is a constant at $\frac12$. The area under the tangent is $\frac12$, then, similarly, it must hold that $\T(p)=\frac12$ for every $p$, in contradiction with $\T(1)=0$.

   Finally, if $2a-2b+1\le 0$  then  $b-a\ge\frac12$. Therefore, under all possible cases, the interval of quantile thresholds that are $\frac{1}2$-competitive is of length at least $\frac12$, concluding the proof.
\end{proof}

We next prove using our function $\T$ that the threshold that is equal to the surplus above it (analyzed by \citet{ester:1984}) obtains a competitive-ratio of $\frac12$. 
\begin{claim}\label{cl:balanced-surplus}
   For $p^*$ satisfying  $Q(p^*)=\sum_{i=1}^n \mathbb{E}[(X_i-Q(p^*))^+]$ it holds that $\T(p^*)\ge\frac12$.
\end{claim}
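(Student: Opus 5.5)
The plan is to reduce the claim to the proof pattern of Claim~\ref{cl:t2}, using the balanced-surplus condition only through the prophet's quantile function. Write $\tau:=Q(p^*)$ and $I:=\int_{p^*}^1 Q(q)\,dq=\E[Z\cdot\ind{Z\ge\tau}]$.

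\textbf{Step 1 (lower bound on $\T$).} Repeat the first chain of inequalities in the proof of Claim~\ref{cl:t2} verbatim. It uses only $\frac1{q^2}\ge 1$ and $p^*\int_{p^*}^1 q^{-2}dq=1-p^*$, and gives
$$\T(p^*)\ \ge\ p^* I+(1-p^*)^2\tau .$$

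\textbf{Step 2 (two linear constraints on $(I,\tau)$).}
\begin{enumerate}[label=(\alph*)]
\item From the defining property of $p^*$ and subadditivity of $(\cdot-\tau)^+$ under the maximum, we have $\sum_i\E[(X_i-\tau)^+]\ge\E[(Z-\tau)^+]$. Hence
$$\tau\ \ge\ \int_{p^*}^1 (Q(q)-\tau)\,dq=I-(1-p^*)\tau,\qquad\text{i.e.}\qquad I\le(2-p^*)\tau .$$
This is the only place where the individual boxes enter.
\item As in Inequality~\eqref{eq:Q integral}, monotonicity of $Q$ and the normalization $\E[Z]=1$ give $1=\int_0^1 Q\le p^*\tau+I$.
\end{enumerate}

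\textbf{Step 3 (minimize the bound).} The bound $p^*I+(1-p^*)^2\tau$ is nondecreasing in both $I$ and $\tau$, so we minimize it over the feasible region from Step 2.
\begin{itemize}
\item By (a), $\tau\ge I/(2-p^*)$. Substituting this lower bound for $\tau$ gives
$$\T(p^*)\ \ge\ I\cdot\frac{p^*(2-p^*)+(1-p^*)^2}{2-p^*}=\frac{I}{2-p^*},$$
using the identity $x(2-x)+(1-x)^2=1$ from Claim~\ref{cl:t2}.
\item Combining (b) with $\tau\le I/(2-p^*)$ from (a) gives $1\le p^*\frac{I}{2-p^*}+I=\frac{2I}{2-p^*}$, i.e.\ $I\ge\frac{2-p^*}{2}$.
\item Plugging in yields $\T(p^*)\ge\frac12$. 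Equality corresponds to the vertex $\tau=\frac12$, $I=\frac{2-p^*}{2}$, which is a useful sanity check.
\end{itemize}

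\textbf{Main obstacle.} The delicate step is Step 2(a). The condition on $p^*$ involves the individual distributions, whereas $\T$ depends only on the law of $Z$. The inequality $\sum_i(X_i-\tau)^+\ge(\max_i X_i-\tau)^+$ holds pointwise and goes in exactly the needed direction, so it should suffice. Beyond that, one only needs to confirm that a $p^*$ with the stated property exists: the function $t\mapsto t-\sum_i\E[(X_i-t)^+]$ is continuous and increasing from negative to positive, and distributions are atomless, so such a $p^*$ exists. The rest is the linear optimization in Step 3.
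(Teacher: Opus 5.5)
\paragraph{Gap in Step~3.} Your inequality (a) reads $I\le(2-p^*)\tau$, i.e.\ $\tau\ge I/(2-p^*)$. To get $I\ge\frac{2-p^*}{2}$ you use the reverse, $\tau\le I/(2-p^*)$, which nothing you derived supports. With the correct direction, (b) gives only $I\ge 1-p^*\tau$, and Step~1 then yields $\T(p^*)\ge p^*+(1-2p^*)\tau$. The coefficient of $\tau$ there is negative once $p^*>\frac12$.

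No rearrangement of the same facts can fix this. Take a nearly deterministic $Z\approx 1$ and $p=0.9$: then $\tau\approx 1$ and $I\approx 0.1$, both (a) and (b) hold, yet $\T(0.9)\approx 0.1$. The underlying problem is that relaxing the balanced-surplus equation to $\tau\ge\E[(Z-\tau)^+]$ bounds the threshold only from below. Indeed, (a) and (b) together give exactly $1\le p^*\tau+I\le 2\tau$, the half-mean comparison $\tau\ge\frac12$. Since $\T(p)\to 0$ as $p\to 1$, you also need an upper bound on the threshold.

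\paragraph{Missing ingredient.} You need an inequality in the opposite direction, which is the core of the paper's proof. Write $1-F_i(x)=\frac{(1-F_i(x))\prod_{j<i}F_j(x)}{\prod_{j<i}F_j(x)}$, bound the denominator below by $F_{\max}(x)$, and telescope. This gives $\sum_i\Pr[X_i>x]\le\frac{1-F_{\max}(x)}{F_{\max}(x)}$. Integrating over $x\ge\tau$ and integrating by parts gives
\[
\sum_i\E[(X_i-\tau)^+]\le\int_{p^*}^1\frac{Q(q)-\tau}{q^2}\,dq=\frac{\T(p^*)}{p^*}-\tau\left(\frac{1}{p^*}-1\right).
\]
Substituting the balanced-surplus equation on the left, this rearranges to $\tau\le\T(p^*)$, i.e.\ $\T'(p^*)\ge 0$.

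The paper phrases this as $p^*\le p_{\max}$, where $p_{\max}$ maximizes $\T$, and then uses concavity between $Q^{-1}(\frac12)$ and $p_{\max}$. You can also conclude directly: $\T(p^*)\ge Q(p^*)\ge\frac12$, where the last inequality is exactly what your (a) and (b) correctly give. So keep Step~2 for the lower bound on $\tau$, drop Step~3, and add this upper bound. Pointwise subadditivity of $(\cdot-\tau)^+$ only ever gives the lower-bound direction, so it cannot supply it.

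\paragraph{Aside on existence.} Existence of $p^*$ is not needed, since the claim is conditional. Your existence argument is also not quite right: the root of $t=\sum_i\E[(X_i-t)^+]$ can lie below the support of $Z$. For example, $t=\frac12$ for a single box concentrated near $1$, and then no quantile attains $t$.
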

\begin{proof}
    Denoting $F_{\max}=\prod_{i\in[n]} F_i$, we can give the following bound
    \begin{align}\label{eq:surplus prob}
    	\forall_{x\ge 0}:\sum_{i=1}^n \Pr[X_i>x]&=\sum_{i=1}^{n}(1-F_i(x)) \nonumber\\
    	&= \sum_{i=1}^{n}(1-F_i(x))\frac{\prod_{j<i}F_j(x)}{\prod_{j<i}F_j(x)} \nonumber\\
    	&\le \frac{1}{F_{\max}(x)}\sum_{i=1}^{n}(1-F_i(x))\left(\prod_{j<i}F_j(x)\right) \nonumber\\
    	&= \frac{1-F_{\max}(x)}{F_{\max}(x)}.
    \end{align}
    
    We can bound the sum surplus as follows
    \begin{align*}
    	\sum_{i=1}^n \mathbb{E}[(X_i-Q(p))^+]&=\int_{Q(p)}^{\infty}\sum_{i=1}^n \Pr[X_i>x]dx \nonumber\\
        &\stackrel{\eqref{eq:surplus prob}}{\le} \int_{Q(p)}^{\infty} \frac{1-F_{\max}(x)}{F_{\max}(x)}dx \nonumber\\
    	&= \int_p^1 \left(\frac{1}{q}-1\right)Q'(q)dq \nonumber\\
    	&= \int_p^1 \frac{Q(q)}{q^2}dq+\left[Q(q)(\frac{1}{q}-1)\right]_{p}^1 \nonumber\\
    	&= \int_p^1 \frac{Q(q)}{q^2}dq+Q(p)\left(1-\frac1p\right) \nonumber\\
    	&= \int_p^1 \frac{Q(q)}{q^2}dq+Q(p)\int_p^1\frac{-1}{q^2}dq \nonumber\\
    	&= \int_p^1 \frac{Q(q)-Q(p)}{q^2}dq.
    \end{align*}
    where the second equality is given by the substitution $x=Q(q)$ and the third equality is given by applying integration by parts using $u=\frac{1}{q}-1$ and $v=Q(q)$.
    
    Therefore, it holds that $Q(p^*)\le \int_{p^*}^1 \frac{Q(q)-Q(p^*)}{q^2}dq$, and we can denote a new quantile threshold $p_{\max}\ge p^*$ such that $Q(p_{\max})= \int_{p_{\max}}^1 \frac{Q(q)-Q(p_{\max})}{q^2}dq$, for which
    \begin{align*}
    	Q(p_{\max})&=\int_{p_{\max}}^1 \frac{Q(q)-Q(p_{\max})}{q^2}dq=\int_{p_{\max}}^1 \frac{Q(q)}{q^2}dq+Q(p_{\max})\left(1-\frac1{p_{\max}}\right) \\
        &\implies Q(p_{\max})=p_{\max}\int_{p_{\max}}^1 \frac{Q(q)}{q^2}dq=\T(p_{\max}).
    \end{align*}
    
    This inequality implies $\T'(p_{\max})=\frac1{p_{\max}}(\T(p_{\max})-Q(p_{\max}))=0,$ meaning that this threshold achieves the maximal value of $\T(p)$ (which is concave by Claim~\ref{cl:Concavity Claim}) and $\T(p_{\max})\ge\frac12$. It is known that the balanced-surplus threshold is higher than half of the prophet expectation \citep{waggoner2018prophet}, therefore $p^*\in[Q^{-1}(\frac12),\ p_{\max}]$, where $\T(Q^{-1}(\frac12))\ge\frac12$ (Claim~\ref{cl:t2}). By concavity $\T(p^*)\ge\frac12$, concluding the proof.
\end{proof}


Claim~\ref{cl:balanced-surplus} shows that the balanced-surplus threshold is $\frac12$-competitive but also that there is a higher balance-based threshold, that satisfies $Q(p_{\max})=\T(p_{\max})$, that is also $\frac12$-competitive and is optimal (obtaining the maximum guarantee for all instances with the CDF of the prophet). This threshold has been independently found by \citet{correa:2026} (which is concurrent to this paper). This (together with Claim~\ref{cl:allQ}) answers the question posed by \citet{waggoner2018prophet} on what is the interval of $\frac12$-competitive thresholds. Terming the threshold $Q(p_{\max})=\T(p_{\max})$ ``balanced-guarantee'', we have an ordering
$$\text{half-mean}\stackrel{}{\leq}\text{balanced-surplus}\leq\text{balanced-guarantee},$$
where the (worst-case) expected reward of each threshold increases as the threshold increases in this range, and decreases beyond this range.
The median threshold and the threshold in Claim~\ref{cl:three-quarters threshold} do not admit a consistent ordering.
Lastly, the random prophet-sample has the lowest worst-case guarantee (i.e., when the instance is composed of increasing weighted Bernoulli boxes), which equals $\frac12$ regardless of the specific prophet distribution.

 \bibliographystyle{abbrvnat}
 \bibliography{bib}

\appendix

\section{Adapting Order-Unaware Algorithms}\label{apx:order-unaware}
In this section, we consider the  order-unaware algorithms suggested by \cite{ezra:2023-whoisnext} and \cite{chen:2024-settingtargets}.
We adapt these algorithms such that they do not utilize the identity of the arriving box, thus converting them into identity-truthful algorithms.
Denote by $\tau_i$ the threshold at step $i$ used by an order-unaware algorithm. If we construct a different algorithm where at step $i$ the threshold $\tau_{i-1}$ is used, we create an identity-truthful algorithm since the identity of the arriving box at step $i$ is not known when the threshold $\tau_{i-1}$ was calculated.
Thus, we have the following two identity-truthful algorithms, denoted by the two sequences of thresholds $\tau^1_i$ and $\tau^2_i$ (by \cite{ezra:2023-whoisnext} and \cite{chen:2024-settingtargets} resp.):
$$
\tau^1_i = \max\left\{\alpha_i, \beta_i\right\} \quad \text{where}\quad
\alpha_i=\frac{1}{\phi}\E[\max_{j\ge i} v_j] \quad
\beta_i=x \text{ satisfying } \E\left[(\max_{j\ge i} v_j-\phi\cdot x)^+\right]=x,
$$
$$
\tau^2_i = \min\left\{x\ge 0 \mid \E[\max\{v_{i-1},x\}]\ge \tau^2_{i-1} \right\} \quad  \text{where}\quad
\tau^2_0=\tau^2_1=\frac{1}{\phi}\E[\max_{i\in[n]} v_i].
$$

Consider the following instance \textendash\ the first box with reward $1$ with probability $\frac{1}{2}$ and $0$ otherwise, and the second box with reward $\frac{1}{\epsilon}$ with probability $\epsilon$ and $1$ otherwise.
It can be checked that the thresholds used for this instance by the two algorithms are $\tau^1_1=\tau^1_2=\tau^2_1=\tau^2_2=\frac2\phi>1$.
Consequently, a value of $1$ is never chosen. Thus both algorithms only take $\frac1\epsilon$ with probability $\epsilon$, with an expectation of $1$ while the optimal algorithm takes the second box with an expectation of $2-\epsilon$.
Therefore, both algorithms approach a ratio of $\frac12$.

\section{Missing  Claims}\label{apx:technical-claims}

\begin{claim}\label{cl:Semi Deterministic Monotonicity}
    Let $X$ and $Y$ be independent random variables. Let $Z=\max\{X,Y\}$ and $T\ge0$. Then,
    $$\E[Z\cdot \ind{Z < T}] \ge \E[X\cdot\ind{X<T}]\Pr[Z<T].$$
\end{claim}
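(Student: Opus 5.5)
The plan is to condition on $Y$ and use independence of $X$ and $Y$. The event $\{Z<T\}$ is exactly $\{X<T\}\cap\{Y<T\}$. On this event $Z=\max\{X,Y\}\ge X$, so the pointwise bound
$$Z\cdot\ind{Z<T}\ \ge\ X\cdot\ind{X<T}\cdot\ind{Y<T}$$
holds. The left side is $Z$ on $\{X<T,Y<T\}$ and $0$ elsewhere. The right side is $X$ on that same event and $0$ elsewhere. Comparing the two on the event uses only $Z\ge X$; nonnegativity of the variables is not needed here.

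Next, take expectations and use independence of $X$ and $Y$ to factor the right-hand side:
$$\E[Z\cdot\ind{Z<T}]\ \ge\ \E[X\cdot\ind{X<T}]\cdot\Pr[Y<T].$$

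It remains to replace $\Pr[Y<T]$ by $\Pr[Z<T]$. Note that $\Pr[Z<T]=\Pr[X<T]\Pr[Y<T]\le\Pr[Y<T]$, so the factor can only shrink. The direction of the inequality is preserved provided $\E[X\cdot\ind{X<T}]\ge 0$. This holds in the paper's setting, since all rewards, and hence any $X$ formed as a max of rewards, are nonnegative. I will state this nonnegativity assumption explicitly, because it is the only genuine requirement. If $X$ could be negative, the final weakening could fail, although the intermediate bound with $\Pr[Y<T]$ would still hold.

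There is no real obstacle. The only care needed is to handle strict versus weak inequalities at $T$ consistently, and the no-atoms assumption made earlier in the paper makes this immaterial anyway.
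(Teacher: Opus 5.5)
Your proof is correct and matches the paper's argument step for step: the paper also bounds $Z\cdot\ind{Z<T}\ge X\cdot\ind{Z<T}$, factors by independence to get $\E[X\cdot\ind{X<T}]\Pr[Y<T]$, and then uses $\Pr[Y<T]\ge\Pr[Z<T]$. Your remark that this last step needs $\E[X\cdot\ind{X<T}]\ge 0$ is well taken. The paper uses this fact implicitly, relying on all rewards being nonnegative, and without it the stated inequality can indeed fail.
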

\begin{proof}
    $\E[Z\cdot \ind{Z< T}]\ge \E[X\cdot \ind{Z< T}]=\E[X\cdot\ind{X<T}]\Pr[Y<T]\ge\E[X\cdot\ind{X<T}]\Pr[Z<T]$.
\end{proof}

\begin{claim}\label{cl:General FR Additiveness}
    Let $X$ and $Y$ be non-negative independent random variables. Let $Z=\max\{X,Y\}$ and $T\ge0$. Then,
    $$\E[Z\cdot \ind{Z \ge T}]-\E[Y\cdot \ind{Y \ge T}]\le \E[X \cdot \ind{X\ge T}]\le\frac{\E[Z\cdot \ind{Z \ge T}]-\E[Y\cdot \ind{Y \ge T}]}{\Pr[Z<T]}.$$
\end{claim}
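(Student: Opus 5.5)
We must prove, for independent non-negative $X,Y$ with $Z=\max\{X,Y\}$ and $T\ge 0$,
\[
\E[Z\ind{Z\ge T}]-\E[Y\ind{Y\ge T}]\le \E[X\ind{X\ge T}]\le\frac{\E[Z\ind{Z\ge T}]-\E[Y\ind{Y\ge T}]}{\Pr[Z<T]}.
\]
The plan is to decompose the event $\{Z\ge T\}$ according to which variable attains the maximum, and then compare pointwise.

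\textbf{Left inequality.} This is pointwise subadditivity. On every realization, $Z\ind{Z\ge T}\le X\ind{X\ge T}+Y\ind{Y\ge T}$. Indeed, if $Z<T$ the left side is $0$. Otherwise $Z$ equals one of $X$ or $Y$, that variable is itself $\ge T$, and it appears on the right with coefficient one. The other term on the right is non-negative because the variables are non-negative. Taking expectations and rearranging gives the left inequality; independence is not needed here.

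\textbf{Right inequality.} Here independence is used. Write
\[
\E[Z\ind{Z\ge T}]-\E[Y\ind{Y\ge T}]=\E\big[Z\ind{Z\ge T}-Y\ind{Y\ge T}\big]
\]
and bound the integrand from below by $X\ind{X\ge T}\ind{Y<T}$. Check this case by case:
\begin{itemize}
\item If $Y<T$, the integrand equals $Z\ind{Z\ge T}$, which is at least $X\ind{X\ge T}$ since $Z\ge X$ and $\{X\ge T\}\subseteq\{Z\ge T\}$.
\item If $Y\ge T$, then $Z\ge T$ and the integrand is $Z-Y\ge 0$, which matches the lower bound $0$ in this case.
\end{itemize}
Taking expectations and using the independence of $X$ and $Y$,
\[
\E[Z\ind{Z\ge T}]-\E[Y\ind{Y\ge T}]\ge \E[X\ind{X\ge T}]\Pr[Y<T]\ge \E[X\ind{X\ge T}]\Pr[Z<T].
\]
The last step holds because $\{Z<T\}\subseteq\{Y<T\}$. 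Dividing by $\Pr[Z<T]$, assumed positive (otherwise the bound is vacuous or interpreted as $+\infty$), gives the right inequality.

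The only step needing care is choosing the lower bound $X\ind{X\ge T}\ind{Y<T}$. The naive pointwise bound fails when $Y\ge T$, and this indicator fix is exactly what produces the $\Pr[Z<T]$ factor. This mirrors the proof of Claim~\ref{cl:Semi Deterministic Monotonicity}.
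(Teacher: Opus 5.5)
Your proof is correct and follows essentially the same route as the paper. Both arguments use pointwise subadditivity $Z\ind{Z\ge T}\le X\ind{X\ge T}+Y\ind{Y\ge T}$ for the left inequality, and for the right inequality both split on $Y<T$ versus $Y\ge T$, then apply independence and $\Pr[Y<T]\ge\Pr[Z<T]$.
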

\begin{proof}
    For the lower bound, assume without loss of generality that $X\ge Y$, therefore $Z=X$ implying that $Z\cdot \ind{Z\ge T}= X\cdot \ind{X\ge T}$. Since $Y$ is non-negative, it holds that $Z\cdot \ind{Z\ge T}\le X\cdot \ind{X\ge T}+Y\cdot \ind{Y\ge T}$, concluding this bound.

    For the upper bound, we have 
    \begin{align*}
        \E[Z\cdot\ind{Z\ge T}]&=\E[Z\cdot\ind{X\ge T>Y}]+\E[Z\cdot\ind{Y\ge T}] \\
        &\ge \E[X\cdot\ind{X\ge T}]\Pr[Y<T]+\E[Y\cdot\ind{Y\ge T}] \\
        &\ge \E[X\cdot\ind{X\ge T}]\Pr[Z<T]+\E[Y\cdot\ind{Y\ge T}],
    \end{align*}
    which is given by splitting the event $Z\ge T$ by $Y<T$ and $Y\ge T$. For the event $Y<T$, since $Z\ge T$, it must be that $X\ge T$. The desired result is achieved after rearranging.
\end{proof}

\begin{claim}\label{cl:General Mid-Range Integral}
    Let $X$ and $Y$ be independent random variables. Let $Z=\max\{X,Y\}$ and $a<b$. Then,
    $$\E[(X-a)\cdot \ind{a\le X<b}]\le \frac{\E[(Z-a)\cdot \ind{a\le Z<b}]}{\Pr[Z<b]}.$$
\end{claim}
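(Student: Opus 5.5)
The plan is to reduce the inequality to a pointwise-in-$Y$ comparison, mirroring the proofs of Claims~\ref{cl:Semi Deterministic Monotonicity} and \ref{cl:General FR Additiveness}. Equivalently, we want
$$\E[(X-a)\ind{a\le X<b}]\Pr[Z<b]\le \E[(Z-a)\ind{a\le Z<b}].$$
Since $X$ and $Y$ are independent, $\Pr[Z<b]=\Pr[X<b]\Pr[Y<b]\le \Pr[Y<b]$. It therefore suffices to show the stronger statement
$$\E[(X-a)\ind{a\le X<b}]\Pr[Y<b]\le \E[(Z-a)\ind{a\le Z<b}].$$
(The extra factor $\Pr[X<b]\le 1$ is simply discarded.)

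\textbf{Step 1.} Lower bound the right-hand side by restricting to the event $\{Y<b\}$. On $\{Y<b\}\cap\{a\le X<b\}$ we have $Z=\max\{X,Y\}\in[a,b)$, since $X\ge a$ and both $X$ and $Y$ are below $b$. Moreover $Z-a\ge X-a\ge 0$ there. The integrand $(Z-a)\ind{a\le Z<b}$ is non-negative everywhere, so dropping every other event only decreases the expectation:
$$\E[(Z-a)\ind{a\le Z<b}]\ge \E[(Z-a)\ind{a\le X<b}\ind{Y<b}]\ge \E[(X-a)\ind{a\le X<b}\ind{Y<b}].$$

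\textbf{Step 2.} By independence of $X$ and $Y$, the last expectation factors as $\E[(X-a)\ind{a\le X<b}]\Pr[Y<b]$. Combining this with $\Pr[Y<b]\ge\Pr[Z<b]$ yields the claim.

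\textbf{Main obstacle.} The only point needing care is that $Z$ falls in $[a,b)$ on the restricted event, which requires both $X<b$ and $Y<b$; this is exactly why we condition on $Y<b$. We also need nonnegativity of the integrand off that event, which holds because of the indicator $a\le Z$. No nonnegativity of $X$ or $Y$ themselves is needed. The degenerate case $\Pr[Z<b]=0$ should be interpreted as the right-hand side being $+\infty$ (or the claim being vacuous), and I would note this explicitly.
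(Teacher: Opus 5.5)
Your proof is correct, but it follows a different route from the paper's. The paper works with distribution functions. It writes $\E[(X-a)\ind{a\le X<b}]=\int_a^b(\Pr[X<b]-\Pr[X<x])\,dx$, uses independence in the form $\Pr[X<x]=\Pr[Z<x]/\Pr[Y<x]$, bounds $1/\Pr[Y<x]\ge 1/\Pr[Y<b]$ pointwise in $x$, and integrates back to $\E[(Z-a)\ind{a\le Z<b}]$. You instead use a pathwise domination: on $\{a\le X<b\}\cap\{Y<b\}$ one has $a\le X\le Z<b$, so $(Z-a)\ind{a\le Z<b}\ge (X-a)\ind{a\le X<b}\ind{Y<b}$ for every outcome, and independence factors the expectation of the right-hand side. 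Both arguments pass through the same intermediate inequality $\E[(X-a)\ind{a\le X<b}]\Pr[Y<b]\le \E[(Z-a)\ind{a\le Z<b}]$, and both finish with $\Pr[Y<b]\ge\Pr[Z<b]$. Your version is shorter and has exactly the structure of the proof of Claim~\ref{cl:Semi Deterministic Monotonicity}. It makes clear that only $Z\ge X$, nonnegativity of the integrand, and independence are used. It also avoids the paper's division by $\Pr[Y<x]$, which is formally $0/0$ whenever $\Pr[Y<x]=0$ for some $x\in[a,b]$; this is harmless, since the needed pointwise inequality still holds there. The paper's CDF formulation, in turn, matches the quantile-based calculus used throughout (e.g., in Claim~\ref{cl:Single Threshold Algorithm Bound}). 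It also displays the comparison level by level, $\Pr[x\le X<b]\Pr[Y<b]\le\Pr[x\le Z<b]$ for each $x\in[a,b]$; your argument recovers this too if you replace $(\cdot-a)$ with an indicator.
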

\begin{proof}
    \begin{align*}
        \E[(X-a)\cdot \ind{a\le X<b}]&=\int_{a}^{b} \left(\Pr[X<b]-\Pr[X<x]\right)dx \\
        &=\int_{a}^{b} \left(\frac{\Pr[Z<b]}{\Pr[Y<b]}-\frac{\Pr[Z<x]}{\Pr[Y<x]}\right)dx \\
        &\le\frac{1}{\Pr[Y<b]}\int_a^b \left(\Pr[Z<b]-\Pr[Z<x]\right)dx \\
        &\le\frac{1}{\Pr[Z<b]}\int_a^b \left(\Pr[Z<b]-\Pr[Z<x]\right)dx \\
        &= \frac{1}{\Pr[Z<b]}\E[(Z-a)\cdot \ind{a\le Z<b}],
    \end{align*}
    where the first inequality holds since $\Pr[Y<x]\le\Pr[Y<b]$ and the second inequality holds since $\Pr[Y<b]\ge \Pr[Z<b]$.
\end{proof}

\begin{claim} \label{cl:Function Optimization}
    Given $0\le L<U\le 1$, the function $$f(x,z)=\frac{\frac12+\left[\frac{z}{2}-x\right]^+}{\frac12-x+\max\{z,\frac12\}}$$ in the domain $x\in[0,z]$ and $z\in[L,U]$, has a global minimum at $(x,z)=(\frac12 L,L)$ or $(x,z)=(\frac12 U,U)$.
\end{claim}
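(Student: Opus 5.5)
The plan is to minimize first over $x$ for each fixed $z$, and then over $z$. Throughout, write $M(z)=\max\{z,\frac12\}$. First I check that the denominator $\frac12-x+M(z)$ is always positive on the domain: since $x\le z\le M(z)$, it is at least $\frac12$. So $f$ is well defined and continuous.

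\textbf{Step 1 (fix $z$, minimize over $x$).} I split $x\in[0,z]$ at the kink $x=\frac z2$.
\begin{itemize}
\item On $x\in[0,\frac z2]$ the positive part is active, so
\[
f(x,z)=\frac{a-x}{b-x},\qquad a=\tfrac12+\tfrac z2,\quad b=\tfrac12+M(z).
\]
Its derivative in $x$ is $\frac{a-b}{(b-x)^2}$. This is $\le 0$ because $a\le b$, which holds since $\frac z2\le M(z)$. So $f$ is non-increasing on this piece.
\item On $x\in[\frac z2,z]$ the numerator is the constant $\frac12$ and the (positive) denominator is decreasing in $x$. So $f$ is non-decreasing on this piece.
\end{itemize}
Hence, for every fixed $z$, the minimum over $x$ is attained at $x=\frac z2$, with value
\[
g(z):=f\!\left(\tfrac z2,z\right)=\frac{\frac12}{\frac12-\frac z2+M(z)} .
\]

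\textbf{Step 2 (minimize $g$ over $z\in[L,U]$).} I compute $g$ explicitly on each side of $\frac12$.
\begin{itemize}
\item For $z\le\frac12$: $g(z)=\frac{1}{2-z}$, which is increasing in $z$.
\item For $z\ge\frac12$: $g(z)=\frac{1}{1+z}$, which is decreasing in $z$.
\end{itemize}
Both formulas give $\frac23$ at $z=\frac12$, so $g$ is continuous. It is also unimodal (quasi-concave) on $[0,1]$, with its maximum at $z=\frac12$. A quasi-concave function on an interval attains its minimum at an endpoint, so $\min_{z\in[L,U]}g(z)=\min\{g(L),g(U)\}$.

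\textbf{Conclusion.} Combining the two steps, the global minimum of $f$ over the domain is attained at $(\frac12L,L)$ or at $(\frac12U,U)$, as claimed.

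The only delicate point is the sign bookkeeping in Step 1, which rests on $a\le b$ and on positivity of the denominator. Both reduce to the inequality $\frac z2\le z\le M(z)$, so I do not expect a real obstacle.
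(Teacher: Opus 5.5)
Your proof is correct and follows the paper's route: the paper also splits at the kink $x=\frac z2$ and at $z=\frac12$, shows $f$ is non-increasing in $x$ below $x=\frac z2$ and non-decreasing above it, and then handles $z$. The one difference is in the $z$-step: you restrict to the diagonal and compute $g(z)=\frac{1}{2-z}$ or $\frac{1}{1+z}$ explicitly, whereas the paper checks the sign of $\partial f/\partial z$ at fixed $x$ in all four regions. Your version is slightly cleaner, since it avoids the paper's implicit step of varying $z$ off the line $x=\frac z2$.
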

\begin{proof}
    There are 4 regions:
    \begin{enumerate}
        \item $\{(x,z)~|~x\ge\frac12 z \wedge z\le\frac12\}$ \textendash\ $f(x,z)=f_1(x,z)=\frac{1}{2-2x}$.
        \item $\{(x,z)~|~x\le\frac12 z \wedge z\le\frac12\}$ \textendash\ $f(x,z)=f_2(x,z)=\frac{1+z-2x}{2-2x}$.
        \item $\{(x,z)~|~x\ge\frac12 z \wedge z\ge\frac12\}$ \textendash\ $f(x,z)=f_3(x,z)=\frac{1}{1+2z-2x}$.
        \item $\{(x,z)~|~x\le\frac12 z \wedge z\ge\frac12\}$ \textendash\ $f(x,z)=f_4(x,z)=\frac{1+z-2x}{1+2z-2x}$.
    \end{enumerate}
    
    We first show the derivatives with respect to $x$:
    \begin{enumerate}
        \item $\frac{\partial f_1}{\partial x} = \frac{2}{(2-2x)^2} \ge 0$.
        \item $\frac{\partial f_2}{\partial x} = \frac{2z-2}{(2-2x)^2} \le 0$.
        \item $\frac{\partial f_3}{\partial x} = \frac{2}{(1+2z-2x)^2}\ge 0$.
        \item $\frac{\partial f_4}{\partial x} = \frac{-2z}{(1+2z-2x)^2}\le 0$.
    \end{enumerate}
    We have that if $x\le\frac12 z$ then $\frac{\partial f}{\partial x} \le 0$, and if $x\ge\frac12 z$ then $\frac{\partial f}{\partial x} \ge 0$.
    This implies that the minimum is at the boundary between the regions, where $x=\frac12 z$.

    We next show the derivatives with respect to $z$:
    \begin{enumerate}
        \item $\frac{\partial f_1}{\partial z} = 0$.
        \item $\frac{\partial f_2}{\partial z} = \frac{1}{2-2x}\ge 0$.
        \item $\frac{\partial f_3}{\partial z} = \frac{-2}{(1+2z-2x)^2}\le 0$.
        \item $\frac{\partial f_4}{\partial z} = \frac{2x-1}{(1+2z-2x)^2}\le 0$ (since in this region $x\le\frac12z\le\frac12$).
    \end{enumerate}
    We have that if $z\le\frac12$ then $\frac{\partial f}{\partial z} \ge 0$, and if $z\ge\frac12$ then $\frac{\partial f}{\partial z} \le 0$.
    This implies that the minimum is achieved at boundary $z=L$ or $z=U$.
    Therefore, the global minimum is  achieved at $(x,z)=(\frac12 L,L)$ or $(x,z)=(\frac12 U,U)$.
\end{proof}

\begin{claim}\label{cl:one major box}
    Given parameters $\delta\in(0,0.25)$ and $\alpha$ such that $\alpha>\frac34+\frac52\delta$, if the prophet's distribution is $\delta$-threshold-hard, then at most one box $i$ satisfies $\E[v_i]\ge \alpha$.
\end{claim}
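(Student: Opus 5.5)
Suppose, toward a contradiction, that two distinct boxes $j\neq k$ satisfy $\E[v_j]\ge\alpha$ and $\E[v_k]\ge\alpha$. The plan is to bound the prophet's expectation $\E[Z]=1$ from below by a quantity exceeding $1$. We split each box into its free-reward part and its low part, $z_j=x_j+y_j$ and $z_k=x_k+y_k$, and control the two parts separately.

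\textbf{Free-reward parts.} Apply Claim~\ref{cl:General FR Additiveness} with $X=v_j$, $Y=v_k$ and $T=Q(1-\delta)$, noting that $\max\{v_j,v_k\}\le Z$. This gives
\[
x_j+x_k\le\frac{x_{\ge 1}}{1-\delta}+\ldots,
\]
and more usefully, applying the claim to the whole instance gives $x_j+x_k\le \frac{x_{\ge1}}{1-\delta}$. By Claim~\ref{cl:FR Upper Bound}, $x_{\ge1}\le\frac12+\delta$, so $x_j+x_k\le\frac{\frac12+\delta}{1-\delta}$.

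\textbf{Low parts.} By Claim~\ref{cl:Semi Deterministic Monotonicity}, each low part satisfies $y_j,y_k\le \frac{y_{\ge1}}{1-\delta}$. Lemma~\ref{lem:CDF Lemma} gives $y_{\ge1}=1-x_{\ge1}\le\frac12+\delta$. Combining the two parts,
\[
2\alpha\le z_j+z_k\le \frac{\frac12+\delta}{1-\delta}+\frac{2(\frac12+\delta)}{1-\delta}\approx \frac32+O(\delta).
\]
This yields $\alpha\le\frac34+O(\delta)$. The remaining step is to check that, for $\delta<\frac14$, the constant in the $O(\delta)$ term is at most $\frac52$, which contradicts $\alpha>\frac34+\frac52\delta$.

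\textbf{Main obstacle.} The delicate part is the constant. Expanding $\frac{3(\frac12+\delta)}{2(1-\delta)}=\frac34+\frac{9}{4}\delta+O(\delta^2)$ shows the first-order coefficient is $\frac94$. If this simple bound turns out too loose for some $\delta$ in the range, I would tighten it in one of two ways. The first is to bound the low parts jointly rather than separately: the event $\{\max(v_j,v_k)<T\}$ gives $\E[\max\{v_j,v_k\}\ind{\cdot<T}]\le y_{\ge1}/(1-\delta)$. The second is to use Lemma~\ref{lem:CDF Lemma}(i) together with Claim~\ref{cl:Mid-Range Upper Bound}, which show that each low part is essentially $Q(\delta)\approx\frac12$ plus $O(\delta)$. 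The key fact driving the contradiction is that two boxes together cannot each carry a near-$\frac12$ deterministic component plus a share of the single $\frac12$ free reward, while both having mean above $\frac34$.
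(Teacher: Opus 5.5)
There is a gap: the final step of your argument fails on part of the range. Your decomposition uses the same ingredients as the paper's proof: Claim~\ref{cl:General FR Additiveness} for the free-reward parts, Claim~\ref{cl:Semi Deterministic Monotonicity} together with Lemma~\ref{lem:CDF Lemma}(ii) for the low parts, and Claim~\ref{cl:FR Upper Bound}. Each of your inequalities is valid.

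What you actually prove is $\alpha\le\frac{3(1+2\delta)}{4(1-\delta)}=\frac34+\frac{9\delta}{4(1-\delta)}$. The inequality $\frac{9\delta}{4(1-\delta)}\le\frac52\delta$ holds if and only if $\delta\le\frac1{10}$. Comparing only the first-order coefficients ($\frac94<\frac52$) is therefore not enough for $\delta$ up to $\frac14$. For example, at $\delta=0.2$ your bound gives only $\alpha\le 1.3125$, which does not contradict $\alpha>1.25$.

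The missing observation is the first line of the paper's proof: every box satisfies $\E[v_i]\le\E[Z]=1$. So when $\frac34+\frac52\delta\ge1$, i.e., $\delta\ge\frac1{10}$, no box can satisfy $\E[v_i]\ge\alpha$ and the claim is vacuous. With this case split your argument is complete, since the two regimes meet exactly at $\delta=\frac1{10}$. The paper's chain also relies on $\alpha\le1$, when it simplifies $(2-\delta)\alpha\ge 2\alpha-\delta$.

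You could instead avoid the case split by not bounding $x_{\ge1}$ and $y_{\ge1}$ separately by $\frac12+\delta$. Since $y_{\ge1}=1-x_{\ge1}$, your own estimates give $z_j+z_k\le\frac{x_{\ge1}+2y_{\ge1}}{1-\delta}=\frac{2-x_{\ge1}}{1-\delta}\le\frac{\frac32+\delta}{1-\delta}$. Hence $\alpha\le\frac{3+2\delta}{4(1-\delta)}$, which is strictly below $\frac34+\frac52\delta$ for all $\delta\in(0,\frac12)$. This version needs only Lemma~\ref{lem:CDF Lemma}(ii), not Claim~\ref{cl:FR Upper Bound}.

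Neither of the fallbacks you suggest would close the gap:
\begin{itemize}
\item The joint quantity $\E[\max\{v_j,v_k\}\cdot\ind{\max\{v_j,v_k\}<T}]$ does not upper-bound $y_j+y_k$. Two boxes nearly deterministic at $\frac12$ have $y_j+y_k\approx 1$, while this quantity is only about $\frac12$.
\item The direct route through Claims~\ref{cl:Mid-Range Upper Bound} and~\ref{cl:Q Upper Bound} gives $y_j\le \frac12+2\delta^2+\frac{4\delta}{1-\delta}$. That is a worse first-order constant than your $\frac{\frac12+\delta}{1-\delta}$.
\end{itemize}
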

\begin{proof}
    If $\alpha>1$, then no box can have an expected value of at least $\alpha$, since the prophet's expectation (which is an upper bound on the expectation of every box) is normalized to $1$.  Otherwise, assume towards contradiction that there are two boxes $i\neq j$ such that $\E[v_i], \E[v_j]\ge\alpha$.
    By Claim~\ref{cl:Semi Deterministic Monotonicity} (Assigning $X=v_i$, $Y=\max_{k\neq i}v_k$ and $T=Q(1-\delta)$), it holds that
    \begin{align}\label{eq:major Det bound}
        y_i\le\frac{y_{\ge 1}}{1-\delta}\le\frac{\frac12+\delta}{1-\delta}.
    \end{align}
    Similarly, it holds that $y_j \le \frac{\frac12+\delta}{1-\delta}$.
    Therefore,
    \begin{align*}
        x_{\ge 1}&\ge\E\left[\max\{v_i,v_j\}\cdot\ind{\max\{v_i,v_j\}\ge Q(1-\delta)}\right] \\
        &\ge \Pr\left[\max\{v_i,v_j\}< Q(1-\delta)\right]x_i+x_j \\
        &\ge (1-\delta)x_i+x_j \\
        &= (1-\delta)(z_i-y_i)+z_j-y_j \\
        &\ge (2-\delta)\alpha-(\frac12+\delta)-\frac{\frac12+\delta}{1-\delta} \\
        &\ge 2\alpha-1-4\delta,
    \end{align*}
    where the first inequality holds since we take the maximum of a subset of boxes, the second inequality holds by Claim~\ref{cl:General FR Additiveness} (Assigning $X=v_i$, $Y=v_j$ and $T=Q(1-\delta)$), the fourth inequality holds since $i$ and $j$ are major-boxes and by Inequality~\eqref{eq:major Det bound}, and the last inequality holds since $\alpha\le1$ and $\delta<\frac14$.
    Using Claim~\ref{cl:FR Upper Bound} we have
    $$\frac12+\delta\ge x_{\ge 1}\ge 2\alpha-1-4\delta\implies\alpha\le\frac34+\frac52\delta,$$
    contradicting the condition that $\alpha>\frac34+\frac52\delta$.
\end{proof}

\end{document}